\newtheorem{Theorem}{Theorem}
\newtheorem{Corollary}{Corollary}
\newtheorem{Proposition}{Proposition}
\newtheorem{Remark}{Remark}
\begin{document}

\title{On Voronoi diagrams and dual Delaunay complexes on the information-geometric Cauchy manifolds}

\author{Frank Nielsen\footnote{E-mail:{\tt \url{Frank.Nielsen@acm.org}} / Web: {\tt \url{https://franknielsen.github.io/} }}\\ Sony Computer Science Laboratories, Inc\\ Japan}

\date{}

\maketitle

\begin{abstract}
We study the Voronoi diagrams of a finite set of Cauchy distributions and their dual complexes from the viewpoint of information geometry by considering the Fisher-Rao distance, the Kullback-Leibler divergence, the chi square divergence, and a flat divergence 
derived from Tsallis' quadratic entropy related to the conformal flattening of the Fisher-Rao curved geometry. 
We prove that the Voronoi diagrams of the Fisher-Rao distance, the chi square divergence, and the Kullback-Leibler divergences all coincide with a hyperbolic Voronoi diagram on the corresponding Cauchy location-scale parameters, and that the dual Cauchy hyperbolic Delaunay complexes are Fisher orthogonal to the Cauchy hyperbolic Voronoi diagrams. 
The dual Voronoi diagrams with respect to the dual forward/reverse flat divergences amount to dual Bregman Voronoi diagrams, and their dual complexes are regular triangulations. The primal Bregman-Tsallis Voronoi diagram corresponds to the hyperbolic Voronoi diagram and the dual Bregman-Tsallis Voronoi diagram coincides with the ordinary Euclidean Voronoi diagram.
Besides, we prove that the square root of the Kullback-Leibler divergence between Cauchy distributions yields a metric distance which is Hilbertian for the Cauchy scale families.
\end{abstract}

\def\PD{\mathrm{PD}}
\def\calM{\mathcal{M}}
\def\calH{\mathcal{H}}
\def\Jac{\mathrm{Jac}}
\def\Eucl{\mathrm{Eucl}}
\def\bbD{\mathbb{D}}
\def\mattwotwo#1#2#3#4{\left[\begin{array}{ll}#1 & #2 \cr #3 & #4\end{array}\right]}
\def\arccosh{\mathrm{arccosh}}
\def\leftsup#1{{}^{#1}}
\def\supleft#1{{}^{#1}}
\def\Vor{\mathrm{Vor}}
\def\calC{\mathcal{C}}
\def\calN{\mathcal{N}}
\def\calS{\mathcal{S}}
\def\calY{\mathcal{Y}}
\def\calP{\mathcal{P}}
\def\ds{\mathrm{d}s}
\def\JB{\mathrm{JB}}
\def\Bhat{\mathrm{Bhat}}
\def\flat{\mathrm{flat}}
\def\Bi{\mathrm{Bi}}
\def\bbH{\mathbb{H}}
\def\FR{\mathrm{FR}}
\def\hcross{{h^\times}}
\def\eqdef{{:=}}
\def\hcrossent#1#2{{\hcross}\left(#1:#2\right)}
\def\arctanh{\mathrm{arctanh}}
\def\der{\mathrm{d}}
\def\dmu{\mathrm{d}\mu}
\def\dx{\mathrm{d}x}
\def\dy{\mathrm{d}y}
\def\dt{\mathrm{d}t}
\def\KL{\mathrm{KL}}
\def\kl{\mathrm{kl}}
\def\bbX{\mathbb{X}}
\def\bbQ{{\mathbb{Q}}}
\def\bbR{{\mathbb{R}}}
\def\calF{\mathcal{F}}
\def\calL{\mathcal{L}}
\def\calX{\mathcal{X}}
\def\st{\ :\ }


\section{Introduction}

Let $\calP=\{P_1,\ldots,P_n\}$ be a finite  set of points in a space $\bbX$ equipped with a measure of {\em dissimilarity}
$D(\cdot,\cdot):\bbX\times\bbX  \rightarrow\bbR_+$. The {\em Voronoi diagram}~\cite{VorOkabe-2009} of $\calP$ partitions $\bbX$ into elementary {\em Voronoi cells} $\Vor(P_1),\ldots, \Vor(P_n)$ (also called Dirichlet cells~\cite{aurenhammer1991voronoi}) such that
\begin{equation}
\Vor_D(P_i)\eqdef \left\{ X\in\bbX,\quad D(P_i,X)\leq D(P_j,X),\quad \forall j\in\{1,\ldots,n\}\right\}
\end{equation} 
denotes the {\em proximity cell} of point {\em generator} $P_i$ (also called {\em Voronoi site}), i.e., the locii of points $X\in\bbX$ closer with respect to $D$ to $P_i$ than to any other generator $P_j$.

When the dissimilarity $D$ is chosen as the Euclidean distance $\rho_E$, we recover the ordinary Voronoi diagram~\cite{VorOkabe-2009}.
The Euclidean distance $\rho_E(P,Q)$ between  two points $P$ and $Q$ is defined as
\begin{equation}
\rho_E(P,Q)=\|p-q\|_2,
\end{equation}
where $p$ and $q$ denote the Cartesian coordinates of point $P$ and $Q$, respectively, and $\|\cdot\|_2$ the $\ell_2$-norm.
Figure~\ref{fig:VorDeEucl} (left) displays the Voronoi cells of an ordinary Voronoi diagram for a given set of generators.


The Voronoi diagram and its dual {\em Delaunay complex}~\cite{cheng2012delaunay} are fundamental data structures of computational geometry~\cite{BY-1998}. These core geometric data-structures find many applications in robotics, 3D reconstruction, geographic information systems (GISs), etc. See the textbook~\cite{VorOkabe-2009} for some of their applications. 
The {\em Delaunay simplicial complex} is obtained by drawing a straight edge between two generators iff their Voronoi cells share an edge (Figure~\ref{fig:VorDeEucl}, right).
In Euclidean geometry, the {\em Delaunay simplicial complex} triangulates the convex hull of the generators, 
and is therefore called the {\em Delaunay triangulation}.
Figure~\ref{fig:VorDeEucl} depicts the dual Delaunay triangulations corresponding to ordinary Voronoi diagrams.
In general, when considering arbitrary dissimilarity $D$, the Delaunay simplicial complex  may {\em not} triangulate the convex hull of the generators (see~\cite{bogdanov2013hyperbolic} and \S\ref{sec:Vor}).

\begin{figure}
\centering
\fbox{\includegraphics[bb=0 0 512 512,width=0.3\columnwidth]{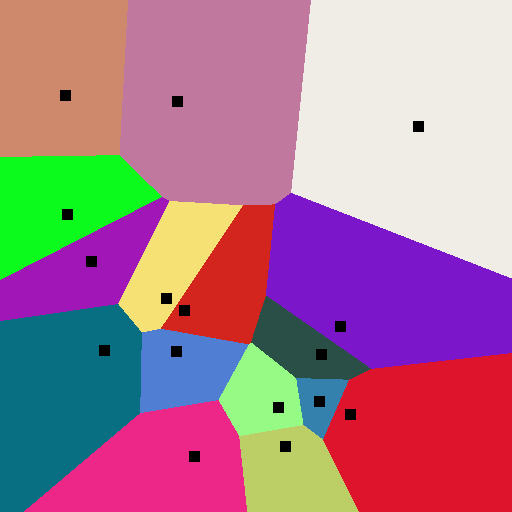}}
\fbox{\includegraphics[width=0.3\columnwidth]{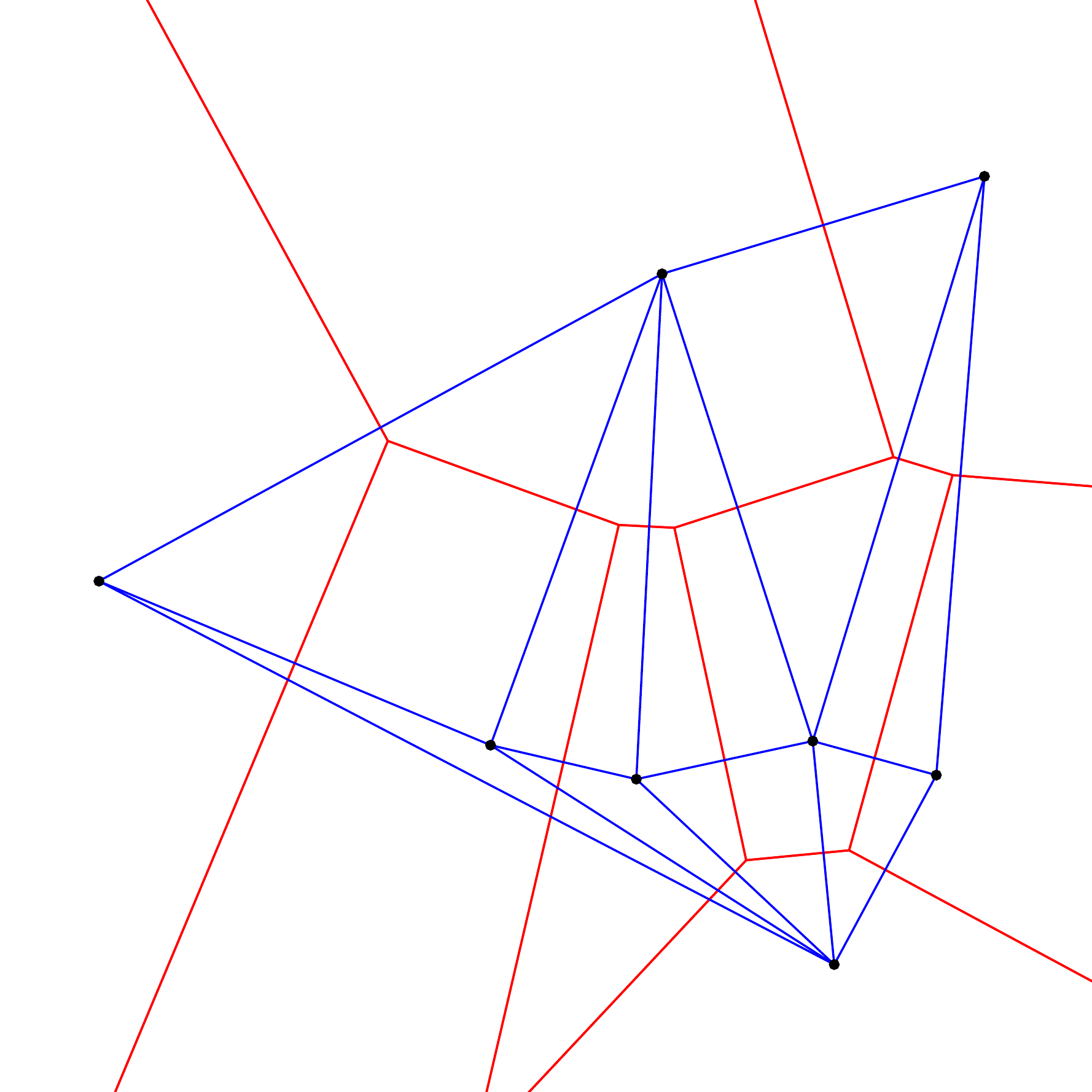}}
\fbox{\includegraphics[width=0.3\columnwidth]{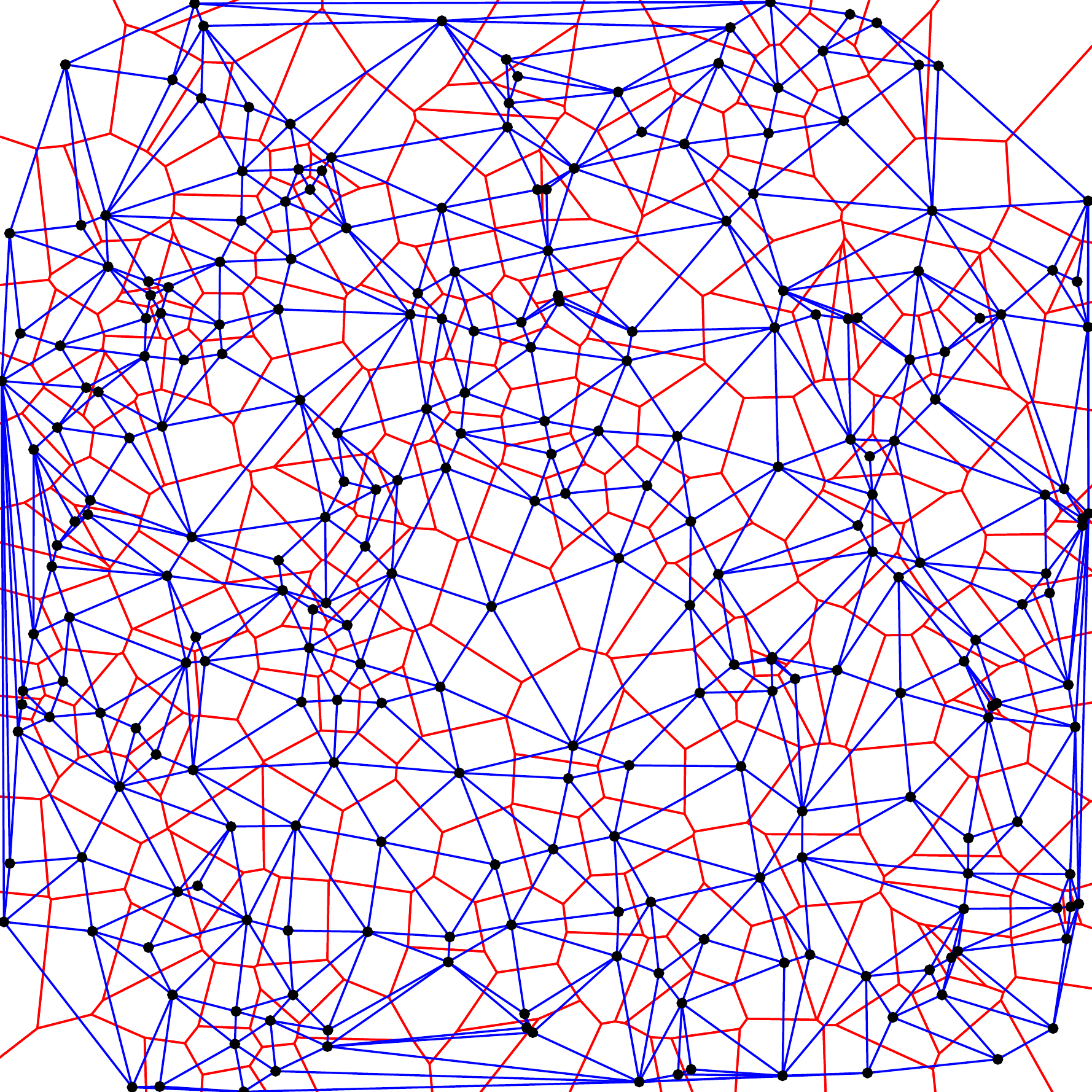}}

\caption{Euclidean Voronoi diagram of a set of generators (black square) in the plane with colored Voronoi cells (left).
Euclidean Voronoi diagrams (red) and their dual Delaunay triangulations (blue) for $n=8$ points (middle) and $n=256$ points (right).} 
\label{fig:VorDeEucl}
\end{figure}

When the dissimilarity is {\em oriented} or {\em asymmetric}, i.e., $D(P,Q)\not=D(Q,P)$, one can define the {\em reverse} or {\em dual} dissimilarity $D^*(P,Q)\eqdef D(Q,P)$.
This duality is termed {\em reference duality} in~\cite{zhang2004divergence}, and is an involution:
\begin{equation}
(D^*)^*(P,Q)=D(P,Q).
\end{equation} 
The dissimilarity $D(P:Q)$ is called the {\em forward} dissimilarity.

In the remainder, we shall use the `:' notational convention~\cite{IG-2016} between the arguments of the dissimilarity to emphasize that a dissimilarity $D$ is asymmetric: $D(P:Q)\not= D(Q:P)$.
For an oriented dissimilarity $D(\cdot:\cdot)$, we can define two types of {\em dual Voronoi cells}  as follows:
\begin{eqnarray}
\Vor_D(P_i) &\eqdef& \left\{ X\in\bbX,\quad D(P_i:X)\leq D(P_j:X),\quad \forall j\in\{1,\ldots,n\}\right\},
\end{eqnarray}
and
\begin{eqnarray}
\Vor_D^*(P_i) &\eqdef& \left\{ X\in\bbX\quad D(X:P_i)\leq D(X:P_j),\quad \forall j\in\{1,\ldots,n\}\right\},\\
         &=&  \left\{ X\in\bbX\quad D^*(P_i:X)\leq D^*(P_j:X),\quad \forall j\in\{1,\ldots,n\}\right\},\\
				 &=&  \Vor_D^*(P_i)=\Vor_{D^*}(P_i).
\end{eqnarray}

That is, the dual Voronoi cell $\Vor_D^*(P_i) $ with respect to a dissimilarity $D$ is the primal Voronoi cell $\Vor_{D^*}(P_i)$ for the dual (reverse) dissimilarity $D^*$.

In general, we can build a Voronoi diagram as a {\em minimization diagram}~\cite{boissonnat2006curved} by defining the $n$ functions
 $f_i(X)\eqdef D(P_i:X)$. Then $X\in \Vor_D(P_i)$ iff $f_i(X)\leq f_j(X)$ for all   $ j\in\{1,\ldots,n\}$.
Thus by building the {\em lower envelope}~\cite{boissonnat2006curved}  of the $n$ functions $f_1(X), \ldots, f_n(X)$, we can retrieve   the Voronoi diagram.

An important class of {\em smooth} asymmetric dissimilarities are the  Bregman divergences~\cite{Bregman-1967}.
A {\em Bregman divergence} $B_F$ is defined for a smooth and strictly convex functional generator $F(\theta)$ by
\begin{equation}
B_F(\theta_1:\theta_2) \eqdef F(\theta_1)-F(\theta_2)-(\theta_1-\theta_2)^\top\nabla F(\theta_2),
\end{equation}
where $\nabla F$ denotes the gradient of $F$.
In information geometry~\cite{IG-CalinUdriste-2014,IG-2016,EIG-2018}, Bregman divergences are the canonical divergences of dually flat spaces~\cite{IG-2016}. 
Dually flat spaces generalize the (self-dual) Euclidean geometry obtained for the generator $F_{\mathrm{Eucl}}(\theta)=\frac{1}{2}\theta^\top\theta$.
In information sciences, dually flat spaces can be obtained, for example,  
as the induced information geometry of the Kullback-Leibler divergence~\cite{CT-2012} of an exponential family manifold~\cite{eguchi1992geometry,IG-2016} or a mixture manifold~\cite{MCIG-2018}.
The dual Bregman Voronoi diagrams and their dual regular complexes have been studied in~\cite{BVD-2010}.

In this paper, we study the Voronoi diagrams induced by the Fisher-Rao distance~\cite{Rao-1945,atkinson1981rao,pinele2020fisher}, the Kullback-Leibler (KL) divergence~\cite{CT-2012} and the chi square distance~\cite{NN-fdiv-2013} for the family $\calC$ of Cauchy distributions. Cauchy distributions also called Lorentzian distributions in the literature~\cite{naudts2009,matsuzoe2014hessian}.

The paper is organized with our main contributions as follows:

In Section~\ref{sec:IGCauchy}, we concisely review the information geometry of the Cauchy family: 
We first describe the hyperbolic Fisher-Rao geometry in \S\ref{sec:FR} and make a connection between the Fisher-Rao distance and the chi square divergence, then we point out the remarkable fact that any $\alpha$-geometry coincides with the Fisher-Rao geometry (\S\ref{sec:alphageo}), and we finally present  dually flat geometric structures on the Cauchy manifold related to Tsallis' quadratic entropy~\cite{Tsallis-1988,tsallis2009introduction}
 which amount to a conformal flattening of the Fisher-Rao geometry (\S\ref{sec:dfs}).
Section~\ref{sec:KLmetrization}  proves that the square root of the KL divergence between any two Cauchy distributions yields a metric distance (Theorem~\ref{thm:sqrtKL}), and that this metric distance can be isometrically embedded in a Hilbert space for the case of   Cauchy scale families (Theorem~\ref{thm:sqrtKLscale}). 
Section~\ref{sec:Vor} shows that the Cauchy Voronoi diagrams induced either by the Fisher-Rao distance, the chi-square divergence, or the Kullback-Leibler divergence (and its square root metrization) all coincide with a hyperbolic Voronoi diagram~\cite{HVD-2010} calculated on the Cauchy 2D location-scale parameters.
This result yields a practical and efficient construction algorithm of hyperbolic Cauchy Voronoi diagrams~\cite{HVD-2010,HVD-2014} (Theorem~\ref{thm:CVD}) and their dual hyperbolic Cauchy Delaunay complexes (explained in details in Appendix~\ref{sec:hpd}). 
We prove that the hyperbolic Cauchy Voronoi diagrams are Fisher orthogonal to the dual Cauchy Delaunay complexes (Theorem~\ref{thm:CVDortho}).
In \S\ref{sec:dualVor}, we show that the primal Voronoi diagram with respect to the flat divergence coincides with the hyperbolic Voronoi diagram, and that the Voronoi diagram with respect to the reverse flat divergence matches the ordinary Euclidean Voronoi diagram.
Finally, we conclude this work in \S\ref{sec:Concl}.

\section{Information geometry of the Cauchy family}\label{sec:IGCauchy}

We start by reporting the Fisher-Rao geometry of the Cauchy manifold (\S\ref{sec:FR}), 
then show that all $\alpha$-geometries coincide with the Fisher-Rao geometry (\S\ref{sec:alphageo}).
Then we recall that we can associate an information-geometric structure to any  parametric divergence (\S\ref{sec:divgeo}),
and finally dually flatten this Fisher-Rao curved geometry using Tsallis's quadratic entropy~\cite{Tsallis-1988,tsallis2009introduction} (\S\ref{sec:dfs}) and a conformal Fisher metric.

\subsection{Fisher-Rao geometry of the Cauchy manifold\label{sec:FR}} 

{\em Information geometry}~\cite{IG-CalinUdriste-2014,IG-2016,EIG-2018} investigates the geometry of families of probability measures.
The 2D family $\calC$ of Cauchy distributions 
\begin{equation}
\calC \eqdef\left\{ 
p_\lambda(x)\eqdef \frac{s}{\pi (s^2+(x-l)^2)},\quad \lambda\eqdef(l,s)\in\bbH\eqdef\bbR\times\bbR_+
 \right\},
\end{equation}
is a {\em location-scale family}~\cite{murray1993differential} (and also a univariate {\em elliptical distribution family}~\cite{Mitchell-1988})  where $l\in\bbR$ and $s>0$ denote the {\em location} parameter and the {\em scale} parameter, respectively:
\begin{equation}
p_{l,s}(x)\eqdef\frac{1}{s}p\left(\frac{x-l}{s}\right),
\end{equation}
where 
\begin{equation}
p(x)\eqdef\frac{1}{\pi (1+x^2)}=:p_{0,1}(x)
\end{equation}
is the {\em Cauchy standard distribution}.

Let $l_\lambda(x)\eqdef \log p_\lambda(x)$ denote the {\em log density}.
The parameter space $\bbH\eqdef\bbR\times\bbR_+$ of the Cauchy family is called the {\em upper plane}.
The {\em Fisher-Rao geometry}~\cite{hotelling1930spaces,Rao-1945,pinele2020fisher} of $\calC$ consists in modeling $\calC$ as a Riemannian manifold $(\calC,g_\FR)$ by choosing the {\em Fisher Information metric}~\cite{IG-2016} (FIm)  
\begin{equation}
g_\FR(\lambda) = [g_{ij}^\FR(\lambda)],\quad  g_{ij}^\FR(\lambda) \eqdef E_{p_\lambda}\left[\partial_i l_\lambda(x)\partial_j l_\lambda(x) \right],
\end{equation}
 as the Riemannian metric tensor, where  
$\partial_m:=\frac{\partial}{\partial\lambda_m}$ for $m\in\{1,2\}$  (i.e., $\partial_1=\frac{\partial}{\partial l}$ and $\partial_2=\frac{\partial}{\partial s}$).  The matrix $[g_{ij}^\FR]$ is called the {\em Fisher Information Matrix} (FIM), and is the expression of the FIm tensor in a local coordinate system $\{e_1, e_2\}$: $g_{ij}^\FR(\lambda)=g(e_i,e_j)$ with $i,j\in\{1,2\}$.

The {\em Fisher-Rao distance} 
$\rho_{\FR}[p_{\lambda_1},p_{\lambda_2}]=\rho_{\FR}[p_{l_1,s_1},p_{l_2,s_2}]$
is then defined as the {\em Riemannian geodesic length distance} on the Cauchy manifold $(\calC,g_\FR)$:
\begin{equation}
\rho_\FR\left(p_{\lambda_1}\left(x\right), p_{\lambda_2}\left(x\right)\right)=
\min _{\substack{\lambda(s)\\ \mbox{such that}\\ \lambda(0)=\lambda_{1},
\lambda(1)=\lambda_{2}   }} 
\int_{0}^{1} \sqrt{ 
\left(\frac{\mathrm{d}\lambda(t)}{\mathrm{d} t}\right)^{T} g_\FR(\lambda(s)) \frac{\mathrm{d} \lambda(t)}{\mathrm{d} t}
} \mathrm{d}t. 
\end{equation}

The Fisher information metric tensor for the Cauchy family~\cite{Mitchell-1988} is
\begin{equation}\label{eq:mtCauchy}
g_\FR(\lambda)= g_\FR(l,s)=\frac{1}{2s^2}\left[
\begin{array}{cc}
1 & 0\\
0 & 1\\
\end{array}
\right],
\end{equation}
where $\lambda=(l,s)\in\bbH$.

A {\em generic formula} for the Fisher-Rao distance between two univariate elliptical distributions is reported in~\cite{Mitchell-1988}.
This formula when instantiated for the Cauchy distributions yields the following closed-form formula for the Fisher-Rao distance:
\begin{equation}
\rho_{\FR}[p_{l_1,s_1},p_{l_2,s_2}]=\frac{1}{\sqrt{2}} \left|\log\frac{\tan\left(\frac{\psi_1}{2}\right)}{\tan\left(\frac{\psi_2}{2}\right)}\right|,
\end{equation}
where 
\begin{eqnarray}
\psi_i&=&\mathrm{arcsin}\left(\frac{s_i}{A}\right),\quad i\in\{1,2\},\\
A^2 &=& s_1^2+\frac{\left( (l_2-l_1)^2-(s_1^2-s_2^2)\right)^2}{4(l_2-l_1)^2}.
\end{eqnarray}

However, by noticing that the metric tensor for the Cauchy family (Eq.~\ref{eq:mtCauchy}) is equal to the {\em scaled} metric tensor $g_P$ of the Poincar\'e (P) hyperbolic upper plane~\cite{anderson2006hyperbolic}:
\begin{equation}
g_P(x,y)=\frac{1}{y^2}\left[
\begin{array}{cc}
1 & 0\\
0 & 1\\
\end{array}
\right],
\end{equation}
we get  a relationship between the square infinitesimal lengths (line elements) 
$\ds_\FR^2=\frac{\mathrm{d}l^2+\mathrm{d}s^2}{2s^2}$ and 
$\ds_P^2=\frac{\mathrm{d}x^2+\mathrm{d}y^2}{y^2}$ as follows:
\begin{equation}
\ds_\FR=\frac{1}{\sqrt{2}}\ds_P.
\end{equation}

It follows that the Fisher-Rao distance between two Cauchy distributions is simply obtained by {\em rescaling} the 2D hyperbolic distance expressed in the 
Poincar\'e upper plane~\cite{anderson2006hyperbolic}:
\begin{equation}
\rho_{\FR}[p_{l_1,s_1},p_{l_2,s_2}] =  \frac{1}{\sqrt{2}} \rho_P(l_1,s_1;l_2,s_2)
\end{equation}
where
\begin{equation}
\rho_P(l_1,s_1;l_2,s_2) \eqdef    \mathrm{arccosh}\left(1+\delta(l_1,s_1,l_2,s_2)\right),
\end{equation}
with
\begin{equation}
\mathrm{arccosh}(x)\eqdef\log \left(x+\sqrt{x^{2}-1}\right),\quad x>1,
\end{equation}
and
\begin{equation}\label{eq:delta}
\delta(l_1,s_1;l_2,s_2) \eqdef \frac{(l_2-l_1)^2+(s_2-s_1)^2}{2s_1s_2}.
\end{equation}

This latter term $\delta$ shall naturally appear in \S\ref{sec:dfs} when studying the dually flat space obtained by conformal flattening the Fisher-Rao geometry.
The expression $\delta(l_1,s_1,l_2,s_2)$ of Eq.~\ref{eq:delta} can be interpreted as a {\em conformal divergence} 
for the squared Euclidean distance~\cite{nielsen2015total,conformaldiv-2015}.

We may also write the delta term using the 2D Cartesian coordinates $\lambda=(\lambda^{(1)},\lambda^{(2)})$ as:
\begin{equation}
\delta(\lambda_1,\lambda_2) \eqdef 
\frac{(\lambda_2^{(1)}-\lambda_1^{(1)})^2+(\lambda_2^{(2)}-\lambda_1^{(1)})^2}{2\lambda_1^{(2)}\lambda_2^{(2)}} = \frac{\|\lambda_1-\lambda_2\|_2^2}{2\lambda_1^{(2)}\lambda_2^{(2)}},
\end{equation}
where $\lambda\in\bbH$.

In particular, when $l_1=l_2$, we get the simplified Fisher-Rao distance for  Cauchy scale families:
\begin{equation}
\rho_{\FR}[p_{l,s_1},p_{l,s_2}]=\frac{1}{\sqrt{2}} \left|\log\left(\frac{s_1}{s_2}\right)\right|.
\end{equation}

\begin{Proposition}\label{prop:FRCauchy}
The Fisher-Rao distance between two Cauchy distributions is
\begin{equation*}
\rho_{\FR}[p_{l_1,s_1},p_{l_2,s_2}]=\left\{
\begin{array}{ll}
\frac{1}{\sqrt{2}} \left|\log\frac{s_1}{s_2}\right| & \mbox{when $l_1=l_2$},\\
\frac{1}{\sqrt{2}} \mathrm{arccosh}\left(1+\frac{(l_2-l_1)^2+(s_2-s_1)^2}{2s_1s_2}\right) & \mbox{when $l_1\not =l_2$.}
\end{array}
\right.
\end{equation*}
\end{Proposition}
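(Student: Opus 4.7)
The statement reduces to the already-derived identification of the Cauchy Fisher-Rao metric with a rescaled Poincar\'e upper half-plane metric, combined with the classical distance formula on $\bbH$. My plan is to make this reduction explicit and then verify the collinear case.

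\textbf{Step 1: reduce to the Poincar\'e distance.} From Eq.~\eqref{eq:mtCauchy} one reads off $g_\FR = \tfrac{1}{2}\, g_P$ on $\bbH$. Scaling a Riemannian metric by a positive constant $c$ rescales every curve length, and hence every geodesic distance, by $\sqrt{c}$. So for any two points $\lambda_1,\lambda_2\in\bbH$,
\begin{equation*}
\rho_\FR(\lambda_1,\lambda_2)=\frac{1}{\sqrt 2}\,\rho_P(\lambda_1,\lambda_2).
\end{equation*}
This is exactly the relation $\ds_\FR=\tfrac{1}{\sqrt 2}\ds_P$ promoted from line elements to geodesic lengths; minimizers of one functional are minimizers of the other, so nothing further need be checked.

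\textbf{Step 2: invoke the closed-form Poincar\'e distance.} The well-known formula on the upper half-plane (which one can either cite or re-derive by mapping to the Poincar\'e disk via a M\"obius transformation and integrating the radial line element) reads
\begin{equation*}
\rho_P(l_1,s_1;l_2,s_2)=\arccosh\!\left(1+\frac{(l_2-l_1)^2+(s_2-s_1)^2}{2s_1s_2}\right),
\end{equation*}
i.e.\ $\arccosh(1+\delta(\lambda_1,\lambda_2))$ with $\delta$ as in Eq.~\eqref{eq:delta}. Combining with Step~1 yields the second case of the proposition.

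\textbf{Step 3: simplify the collinear case $l_1=l_2$.} Here $\delta = (s_2-s_1)^2/(2s_1s_2)$, so
\begin{equation*}
1+\delta=\frac{2s_1s_2+(s_2-s_1)^2}{2s_1s_2}=\frac{s_1^2+s_2^2}{2s_1s_2}=\frac{1}{2}\!\left(\frac{s_1}{s_2}+\frac{s_2}{s_1}\right).
\end{equation*}
Using the elementary identity $\arccosh\!\bigl(\tfrac{1}{2}(u+u^{-1})\bigr)=|\log u|$ for $u>0$ with $u=s_1/s_2$, I obtain $\rho_P=|\log(s_1/s_2)|$, and after the $1/\sqrt 2$ rescaling the first case follows.

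There is no real obstacle: the only non-trivial input is the hyperbolic distance formula of Step~2, which is standard. The two cases of the proposition could actually be collapsed into one (the first case is the limit of the second as $l_1\to l_2$ followed by the $\arccosh$--$\log$ simplification); I would nevertheless present both forms since the simplified collinear expression is useful later for Cauchy scale subfamilies.
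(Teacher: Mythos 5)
Your proposal is correct and follows essentially the same route as the paper: identify $g_\FR=\tfrac12 g_P$, rescale the Poincar\'e upper half-plane distance $\arccosh(1+\delta)$ by $1/\sqrt{2}$, and simplify the $l_1=l_2$ case via $\arccosh\bigl(\tfrac12(u+u^{-1})\bigr)=|\log u|$. Your Step 3 makes explicit an elementary simplification that the paper states without detail, but the argument is the same.
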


The Fisher-Rao manifold of Cauchy distributions has {\em constant negative scalar curvature} $\kappa=-2$, see~\cite{Mitchell-1988} for detailed calculations.

\begin{Remark}
It is well-known that the Fisher-Rao geometry of location-scale families amount to a hyperbolic geometry~\cite{murray1993differential}.
For $d$-variate scale-isotropic Cauchy distributions 
$p_{\lambda}(x)$ with $\lambda=(l,s)\in\bbR^d\times\bbR$, the Fisher information metric is
$g_\FR(\lambda)=\frac{1}{2s^2}I$, where $I$ denotes the $(d+1)\times (d+1)$ identity matrix.
It follows that
\begin{equation}
\rho_{\FR}[p_{l_1,s_1},p_{l_2,s_2}] =  \frac{1}{\sqrt{2}} \mathrm{arccosh}\left(1+\Delta(l_1,s_1,l_2,s_2)\right),
\end{equation}
where
\begin{equation}
\Delta(l_1,s_1,l_2,s_2) \eqdef \frac{\|l_2-l_1\|^2_2+(s_2-s_1)^2}{2s_1s_2},
\end{equation}
where $\|\cdot\|_2$ is the $d$-dimensional Euclidean $\ell_2$-norm: $\|x\|=\sqrt{x^\top x}$. 
That is, $\rho_{\FR}[p_{l_1,s_1},p_{l_2,s_2}]$ is the scaled $d$-dimensional real hyperbolic distance~\cite{anderson2006hyperbolic} expressed in the Poincar\'e upper space model. 
\end{Remark}

Let us mention that recently the Riemannian geometry of location-scale models has also been studied from the complementary viewpoint of 
warped metrics~\cite{chen2017differential,Said2019}.

\subsection{The dualistic $\alpha$-geometry of the statistical Cauchy manifold}\label{sec:alphageo}

A {\em statistical manifold}~\cite{Lauritzen-1987} is a triplet $(M,g,T)$ where $g$ is a Riemannian metric tensor and $T$ is a cubic totally symmetric tensor (i.e., $T_{\sigma(i)\sigma(j)\sigma(k)}=T_{ijk}$ for any permutation $\sigma$). 
For a {\em parametric} family of probability densities $M=\{p_\lambda(x)\}$, the cubic tensor is called the {\em skewness tensor}~\cite{IG-2016}, and defined by:
\begin{equation}
T_{ijk}(\theta)\eqdef E_{p_\lambda}\left[\partial_i l_\lambda(x) \partial_j l\lambda(x) \partial_k l\lambda(x)\right].
\end{equation}

A statistical manifold structure $(M,g,T)$ allows one to construct Amari's dualistic {\em $\alpha$-geometry}~\cite{IG-2016} for  any $\alpha\in\bbR$:
Namely 
a quadruplet $(M,g_\FR,\nabla^{-\alpha},\nabla^\alpha)$ where $\nabla^{-\alpha}$ and $\nabla^{\alpha}$ are {\em dual torsion-free affine connections} coupled to the Fisher metric $g_\FR$ (i.e., $\nabla^{-\alpha}=(\nabla^{\alpha})^*$). We refer the reader to the textbook~\cite{IG-2016} and the overview~\cite{EIG-2018} for further details.

The Fisher-Rao geometry $(M,g_\FR)$ corresponds to the $0$-geometry, i.e., the self-dual geometry where $\nabla^0\eqdef\supleft{g}\nabla$ is the  {\em Levi-Civita metric connection}~\cite{IG-2016} induced by the metric tensor  (with $(\supleft{g}\nabla)^*=\supleft{g}\nabla$).
That is, we have
\begin{equation}
(\calC,g_\FR)=(\calC,g_\FR,\nabla^0,\nabla^0).
\end{equation}

In information geometry, the {\em invariance principle} states that the geometry should be invariant under the transformation of a random variable $X$ to $Y$ provided that $Y=t(X)$ is a sufficient statistics~\cite{IG-2016} of $X$. 
The $\alpha$-geometry $(M,g_\FR,\nabla^{-\alpha},\nabla^\alpha)$ and its special case of Fisher-Rao geometry are invariant geometry~\cite{IG-2016,EIG-2018} for any $\alpha\in\bbR$.

A remarkable fact is that all the $\alpha$-geometries of the Cauchy family coincide with the Fisher-Rao geometry since the cubic skewness tensor $T$ vanishes everywhere~\cite{Mitchell-1988}, i.e., $T_{ijk}=0$.
The non-zero  coefficients of the Christoffel symbols of the $\alpha$-connections (including the Levi-Civita metric connection derived from the Fisher metric tensor) are:
\begin{eqnarray}
\supleft{\alpha}\Gamma_{12}^1 &=& \supleft{\alpha}\Gamma_{21}^1 = \supleft{\alpha}\Gamma_{22}^2=-\frac{1}{s},\\
\supleft{\alpha}\Gamma_{11}^2 &=& \frac{1}{s}.
\end{eqnarray}

Thus all $\alpha$-geometries coincide and have constant negative scalar curvature $\kappa=-2$.
In other words, we cannot choose a value for $\alpha$ to make the Cauchy manifold dually flat~\cite{IG-2016}.
To contrast with this result, Mitchell~\cite{Mitchell-1988} reported values of $\alpha$ for which the $\alpha$-geometry is dually flat for some parametric location-scale families of distributions:
For example, it is well known that the manifold $\calN$ of univariate Gaussian distributions is $\pm 1$-flat~\cite{IG-2016}.
The manifold $\calS_k$ of $t$-Student's distributions with $k$ degrees of freedom is proven dually flat when $\alpha=\pm\frac{k+5}{k-1}$~\cite{Mitchell-1988}.
Dually flat manifolds are Hessian manifolds~\cite{shima2007geometry} with dual geodesics being straight lines in one of the two dual global affine coordinate systems. On a global Hessian manifold,  the canonical divergences are Bregman divergences. 
Thus these dually flat Bregman manifolds are computationally friendly~\cite{BVD-2010} as many techniques of computational geometry~\cite{BY-1998} can be naturally extended to these Hessian spaces (e.g., the smallest enclosing balls~\cite{nielsen2008smallest}). 

\subsection{Dualistic structures induced by a divergence}\label{sec:divgeo}
A {\em divergence} or {\em contrast function}~\cite{eguchi1992geometry} is a smooth parametric dissimilarity.
Let $\calM$ denote the manifold of its parameter space.
Eguchi~\cite{eguchi1992geometry} showed how to associate to any divergence $D$ a canonical 
information-geometric structure $(\calM,\supleft{D}g,\supleft{D}\nabla,\supleft{D}\nabla^*)$. 
Moreover, the construction allows to prove that $\supleft{D}\nabla^*=\supleft{D^*}\nabla$.
That is the dual connection $\supleft{D}\nabla^*$ for the divergence $D$ corresponds to the primal connection for the reverse divergence $D^*$
 (see~\cite{IG-2016,EIG-2018} for details).

Conversely, Matsumoto~\cite{matumoto-1993} proved that given an information-geometric structure $(\calM,g,\nabla,\nabla^*)$, one can build a divergence $D$ such that $(\calM,g,T)=(\calM,\supleft{D}g,\supleft{D}T)$ from which we can derive the structure 
$(\calM,\supleft{D}g,\supleft{D}\nabla,\supleft{D}\nabla^*)$.
Thus when calculating the Voronoi diagram $\Vor_D$ for an arbitrary divergence $D$, 
we may use the induced information-geometric structure $(\calM,\supleft{D}g,\supleft{D}\nabla,\supleft{D}\nabla^*)$ to investigate some of  the properties of the Voronoi diagram:
For example, is the bisector $\mathrm{Bi}_D$ $\supleft{D}\nabla$-autoparallel?, or is the bisector $\mathrm{Bi}_D$ of two generators orthogonal with respect to the metric $\supleft{D}g$ to their $\supleft{D}\nabla$-geodesic? 
Section~\ref{sec:Vor} will study these questions in particular cases.

\subsection{Dually flat geometry of the Cauchy manifold by conformal flattening}\label{sec:dfs}

The Cauchy distributions are usually handled in information geometry  using the wider scope 
of {\em $q$-Gaussians}~\cite{Naudts-2011,matsuzoe2014hessian,IG-2016} (deformed exponential families~\cite{VG-2013}).
The $q$-Gaussians also include the Student's $t$-distributions. 
Cauchy distributions are $q$-Gaussians for $q=2$.
These $q$-Gaussians are also called $q$-normal distributions~\cite{qnormal-2012}, and they can be obtained as maximum entropy distributions with respect to {\em Tsallis'  entropy} $T_q(\cdot)$~\cite{Tsallis-1988,tsallis2009introduction}  (see Theorem 4.12 of~\cite{IG-2016}):
\begin{equation}
T_q(p)\eqdef  \frac{1}{q-1}\left( 1-\int_{-\infty}^{\infty} p^q(x)\dx \right), \quad q\not=1.
\end{equation}

When $q=2$, we have the following {\em Tsallis' quadratic entropy}:
\begin{equation}
T_2(p)\eqdef 1-\int_{-\infty}^{\infty} p^2(x)\dx.
\end{equation}
We have $\lim_{q\rightarrow 1} T_q(p)=S(p)\eqdef -\int p(x)\log p(x)\dx$, Shannon entropy.

Thus $q$-Gaussians are $q$-exponential families~\cite{naudts2009}, generalizing the MaxEnt exponential families derived from Shannon entropy~\cite{amari2012geometry}.
The integral $E(p)\eqdef\int_{-\infty}^{\infty} p^2(x)\dx$ corresponds to {\em Onicescu's informational energy}~\cite{Onicescu-1966,Onicescu-2020}. Tsallis' entropy is considered in non-extensive statistical physics~\cite{tsallis2009introduction}.

A dually flat structure construction for $q$-Gaussians is reported in~\cite{IG-2016} (Sec. 4.3, p. 84--89).
We instantiate this construction for the Cauchy distributions ($2$-Gaussians):

Let  
\begin{equation}
\exp_{\calC}(u)\eqdef\frac{1}{1-u},\quad u\not=1,
\end{equation}  denote the {\em deformed $q$-exponential} and  
\begin{equation}
\log_{\calC}(u)\eqdef 1-\frac{1}{u},\quad  u\not=0, 
\end{equation}
its compositional inverse, the {\em deformed $q$-logarithm}.

The probability density of a $2$-Gaussian can be factorized as
\begin{equation}
p_\theta(x)=\exp_\calC(\theta^\top x-F(\theta)),
\end{equation}
where $\theta$ denotes the 2D natural parameters.
We have:
\begin{eqnarray}
\log_\calC(p_\theta(x)) &=& 1-\frac{1}{s}\pi(s^2+(x-l)^2) =1-\pi\left(s+ \frac{(x-l)^2}{s}\right) ,\\
 &=:& \theta^\top t(x)-F(\theta),\\
&=& \underbrace{\left(2\pi\frac{l}{s}\right)x+\left(-\frac{\pi}{s}\right)x^2}_{\theta^\top t(x)}-\underbrace{\left(\pi s+\pi\frac{l^2}{s}-1\right)}_{F(\theta)}.
\end{eqnarray}

Therefore the natural parameter is $\theta(l,s)=(\theta_1,\theta_2)=\left(2\pi\frac{l}{s},-\frac{\pi}{s}\right)\in\Theta=\bbR\times\bbR_{-}$ (for $t(x)=(x,x^2)$) and the {\em deformed log-normalizer} is
\begin{eqnarray}
F(\theta(\lambda)) &=&\pi s+\pi\frac{l^2}{s}-1  =: F_\lambda(\lambda),\\
F(\theta) &=& -\frac{\pi^2}{\theta_2}-\frac{\theta_1^2}{4\theta_2}-1.
\end{eqnarray} 

In general, we obtain a strictly convex and $C^3$-function $F_q(\theta)$, called the {\em $q$-free energy} for a $q$-Gaussian family.
Here, we let $F(\theta)\eqdef F_2(q)$ for the Cauchy family: $F(\theta)$ is the {\em Cauchy free energy}.

We  convert back the natural parameter $\theta\in\Theta$ to the ordinary parameter $\lambda\in \bbH$ as follows:
\begin{equation}
\lambda(\theta)=(l,s)=\left(-\frac{\theta_1}{2\theta_2}, -\frac{\pi}{\theta_2}\right).
\end{equation}

The gradient of the deformed log-normalizer is: 
\begin{equation}
\nabla F(\theta)=\left[
\begin{array}{l}
-\frac{\theta_1}{2\theta_2}\\
\frac{\pi^2}{\theta_2^2}+\frac{\theta_1^2}{4\theta_2^2}
\end{array}
\right].
\end{equation}

The gradient $\nabla F(\theta)$ defines the dual global affine coordinate system $\eta\eqdef\nabla F(\theta)$ where 
$\eta\in H=\bbR\times\bbR_{+}$ is the dual parameter space.

It follows the following divergence $D_{\flat}[p_{\lambda_1}:p_{\lambda_2}]$~\cite{IG-2016} between Cauchy densities which is by construction equivalent to a Bregman divergence $B_F(\theta_1:\theta_2)$ (canonical divergence in dually flat space) between their corresponding natural parameters:

\begin{eqnarray}
D_{\flat}[p_{\lambda_1}:p_{\lambda_2}] &\eqdef& 
\frac{1}{\int p_{\lambda_2}^2(x) \dx} \left(\int \frac{p_{\lambda_2}^2(x)}{p_{\lambda_1}(x)} \dx -1\right),\\
&=&  {2\pi s_2} \left(\frac{s_1^2+s_2^2+(l_1-l_2)^2}{2s_1s_2} -1\right),\\
&=&   {2\pi s_2} \frac{(s_1-s_2)^2+(l_1-l_2)^2}{2s_1s_2},\\
&=&  {2\pi s_2}\delta(l_1,s_1,l_2,s_2),\\
&=& B_F(\theta_1:\theta_2),
\end{eqnarray}
where $\theta_1:=\theta(\lambda_1)$ and $\theta_2:=\theta(\lambda_2)$.
We term $B_F(\theta_1:\theta_2)$ the {\em Bregman-Tsallis (quadratic) divergence} ($B_{F_q}$ for general $q$-Gaussians).

We used a computer algebra system (CAS, see Appendix~\ref{sec:maxima}) to calculate  the closed-form formulas of the following definite integrals:
\begin{eqnarray}
\int p_{\lambda_2}^2(x) \dx &=& \frac{1}{2\pi s_2},\\
\int \frac{p_{\lambda_2}^2(x)}{p_{\lambda_1}} \dx &=& \frac{s_1^2+s_2^2+(l_1-l_2)^2}{2s_1s_2}.
\end{eqnarray}

Here, observe that the equivalent Bregman divergence  is {\em not} on swapped parameter order as it is the case for ordinary exponential families:
 $D_\KL[p_{\theta_1}:p_{\theta_2}]=B_F({\theta_2:\theta_1})$ where $F$ denotes the cumulant function of the exponential family, see~\cite{IG-2016,EIG-2018}.

We term the divergence $D_{\flat}$ the {\em flat divergence} because its induced affine connection~\cite{eguchi1992geometry}
 $\leftsup{D_{\flat}}\nabla$ has {\em zero curvature} (i.e., the 4D Riemann-Christofel curvature tensor induced by the connection vanishes, see \cite{IG-2016} p. 134).
 
Since $D_{\flat}[p_{\lambda_1}:p_{\lambda_2}]= {2\pi s_2}\delta(l_1,s_1,l_2,s_2)=\frac{\pi}{s_1} \left({(s_1-s_2)^2+(l_1-l_2)^2}\right)$, the flat divergence is interpreted as a conformal  squared Euclidean distance~\cite{conformaldiv-2015}, with conformal factor $\frac{\pi}{s_1}$.
In general, the Fisher-Rao geometry of $q$-Gaussians has scalar curvature~\cite{qnormal-2012} $\kappa=-\frac{q}{3-q}$.
Thus we recover the scalar curvature $\kappa=-2$ for the Fisher-Rao Cauchy manifold since $q=2$.

\begin{Theorem}\label{thm:BDqgaussian}
The flat divergence $D_{\flat}[p_{\lambda_1}:p_{\lambda_2}]$ between two Cauchy distributions is equivalent to a Bregman divergence 
$B_F(\theta_1:\theta_2)$ on the corresponding natural parameters, and yields the following closed-form formula using the ordinary location-scale
 parameterization:
\begin{equation}
D_{\flat}[p_{\lambda_1}:p_{\lambda_2}] = {2\pi s_2}\delta(l_1,s_1,l_2,s_2)
=  \frac{\pi}{s_1} \left( (s_1-s_2)^2+(l_1-l_2)^2\right)
=\frac{\pi}{s_1}\|\lambda_1-\lambda_2\|^2_2.
\end{equation}
\end{Theorem}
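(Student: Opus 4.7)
The plan is to establish the closed-form formula by direct integration of the two univariate integrals that define $D_{\flat}$, and then to read off the Bregman structure from the $q$-exponential family factorization already set up in the section.

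First I would verify the two definite integrals quoted from the CAS, namely $\int p_{\lambda_2}^2(x)\,\dx = \frac{1}{2\pi s_2}$ and $\int \frac{p_{\lambda_2}^2(x)}{p_{\lambda_1}(x)}\,\dx = \frac{s_1^2+s_2^2+(l_1-l_2)^2}{2 s_1 s_2}$. The first follows from the standard evaluation $\int_{-\infty}^{\infty} \frac{\der u}{(s^2+u^2)^2} = \frac{\pi}{2s^3}$ after the translation $u=x-l_2$, and gives Onicescu's informational energy $E(p_{\lambda_2})$. For the second, after cancelling one Cauchy factor the integrand becomes $\frac{s_2^2}{\pi s_1}\cdot\frac{s_1^2+(x-l_1)^2}{(s_2^2+(x-l_2)^2)^2}$, a rational function which I would evaluate by contour integration: closing in the upper half-plane picks up the double pole at $x = l_2 + i s_2$, and computing the derivative residue $\mathrm{Res} = -i\,\frac{s_1^2+s_2^2+(l_1-l_2)^2}{4 s_2^3}$ yields the stated value. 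A partial-fraction decomposition in $u=x-l_2$ is an equally good route.

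Next I would substitute both closed forms into the definition of $D_{\flat}$. Pulling out the factor $2\pi s_2$ and simplifying the bracket via $\frac{s_1^2+s_2^2+(l_1-l_2)^2}{2s_1s_2} - 1 = \frac{(s_1-s_2)^2+(l_1-l_2)^2}{2 s_1 s_2}$ produces the symmetric ratio $\delta(l_1,s_1,l_2,s_2)$ defined in Eq.~\ref{eq:delta}, and therefore gives $D_{\flat}[p_{\lambda_1}:p_{\lambda_2}] = 2\pi s_2 \, \delta(l_1,s_1,l_2,s_2) = \frac{\pi}{s_1}\|\lambda_1-\lambda_2\|_2^2$, which is the claimed closed form.

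Finally, for the Bregman equivalence I would appeal to the $q$-exponential family factorization $p_\theta(x) = \exp_{\calC}(\theta^\top t(x) - F(\theta))$ already exhibited for $q=2$, with natural parameter $\theta=(2\pi l/s,-\pi/s)$ and $q$-free energy $F(\theta)= -\pi^2/\theta_2 - \theta_1^2/(4\theta_2) - 1$. The canonical divergence associated with this dually flat $q$-exponential structure is by construction a Bregman divergence $B_F(\theta_1:\theta_2)$ built from the ratio of $\chi^2$-type integrals appearing in $D_{\flat}$ (see~\cite{IG-2016}, Sec.~4.3), which identifies $D_{\flat}[p_{\lambda_1}:p_{\lambda_2}] = B_F(\theta_1:\theta_2)$. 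As a direct cross-check one can evaluate $F(\theta_1) - F(\theta_2) - (\theta_1-\theta_2)^\top \nabla F(\theta_2)$ using the natural-to-ordinary map $\lambda(\theta) = (-\theta_1/(2\theta_2),-\pi/\theta_2)$ and verify after simplification that the result equals $\frac{\pi}{s_1}\|\lambda_1-\lambda_2\|_2^2$, confirming the equivalence without invoking the general $q$-exponential machinery.

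The main obstacle is the second integral: the residue (or partial fraction) calculation has to be carried out carefully, tracking real and imaginary contributions at the double pole, so that the result comes out symmetric in $s_1,s_2$ and $l_1,l_2$ in exactly the form $s_1^2+s_2^2+(l_1-l_2)^2$. Once that integral is in hand, the rest is algebraic tidying and invocation of the known dually flat structure of $q$-Gaussians.
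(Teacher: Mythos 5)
Your proposal is correct and follows essentially the same route as the paper: substitute the two definite integrals into the definition of $D_{\flat}$, simplify to $2\pi s_2\,\delta$, and obtain the Bregman identification from the general $q$-exponential ($q=2$) dually flat construction of~\cite{IG-2016}. The only difference is that you evaluate the integrals by residues/partial fractions where the paper delegates them to a computer algebra system, and you add an explicit verification of $F(\theta_1)-F(\theta_2)-(\theta_1-\theta_2)^\top\nabla F(\theta_2)=\frac{\pi}{s_1}\|\lambda_1-\lambda_2\|_2^2$, both of which check out and make the argument more self-contained.
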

 
The conversion of $\eta$-coordinates  to $\theta$-coordinates are calculated as follows:
\begin{equation}
\theta(\eta)=\left[
\begin{array}{l}
\frac{2\pi\eta_1}{\sqrt{\eta_2-\eta_1^2}}\\
\frac{-\pi-}{\sqrt{\eta_2-\eta_1^2}}
\end{array}
\right] :=\nabla F^*(\eta),
\end{equation}
where 
\begin{equation}
F^*(\eta)\eqdef\theta(\eta)^\top \eta-F(\theta(\eta)),
\end{equation}
is the {\em Legendre-Fenchel convex conjugate}~\cite{IG-2016}:
\begin{equation}
F^*(\eta) = 1-2\pi\sqrt{\eta_2-\eta_1^2}.
\end{equation}

Since 
\begin{equation}
\eta(\lambda)=\eta(\theta(\lambda))=(\lambda_1,\lambda_1^2+\lambda_2^2)=(l,l^2+s^2),
\end{equation}
we have
\begin{equation}
F_\lambda^*(\lambda)\eqdef F^*(\eta(\lambda))=1-2\pi\sqrt{l^2+s^2-l^2}=1-2\pi s
\end{equation}
that is {\em independent} of the location parameter $l$.
Moreover, we have~\cite{IG-2016}
\begin{equation}
F_\lambda^*(\lambda) \eqdef 1-\frac{1}{\int p^2(x)dx}=1-\frac{1}{\frac{1}{2\pi s}}=1-2\pi s.
\end{equation}
We can convert the dual parameter $\eta$ to the ordinary parameter $\lambda\in\bbH$ as follows:
\begin{equation}
\lambda(\eta)=(l,s)=(\eta_1,\sqrt{\eta_2-\eta_1^2}).
\end{equation}

It follows that we have the following equivalent expressions for the flat divergence:
\begin{equation}
D_{\flat}[p_{\lambda_1}:p_{\lambda_2}]=B_F(\theta_1:\theta_2)=B_{F^*}(\eta_2:\eta_1)=A_F(\theta_1:\eta_2)=A_{F^*}(\eta_2:\theta_1),
\end{equation}
where 
\begin{equation}
A_F(\theta_1:\eta_2)\eqdef F(\theta_1)+F^*(\eta_2)-\theta_1^\top\eta_2,
\end{equation}
 is the {\em Legendre-Fenchel divergence} measuring the {\em inequality gap} of the Fenchel-Young inequality:
\begin{equation}
F(\theta_1)+F^*(\eta_2)\geq \theta_1^\top\eta_2.
\end{equation}
That is, $A_F(\theta_1:\eta_2)=\mathrm{rhs}(\theta_1:\eta_2)-\mathrm{lhs}(\theta_1:\eta_2)\geq 0$, 
where $\mathrm{rhs}(\theta_1:\eta_2)\eqdef F(\theta_1)+F^*(\eta_2) $ and $\mathrm{lhs}(\theta_1:\eta_2)=\theta_1^\top\eta_2$.

The Hessian metrics of the dual convex potential functions $F(\theta)$ and $F^*(\eta)$ are:
\begin{eqnarray}
\nabla^2 F(\theta) &=& \mattwotwo{-\frac{1}{2\theta_2}}{\frac{\theta_1}{2\theta_2^2}}{\frac{\theta_1}{2\theta_2^2}}{-\frac{\theta_1^2}{2\theta_2^2}-\frac{2\pi^2}{\theta_2^2}}  =: g_F(\theta) ,\\
\nabla^2 F^*(\eta) &=& \mattwotwo{\frac{2}{\sqrt{\eta_2-\eta_1^2}} + \frac{2\eta_1^2}{(\eta_2-\eta_1^2)^{\frac{3}{2}}}}{-\frac{\eta_1}{(\eta_2-\eta_1^2)^{\frac{3}{2}}}}{-\frac{\eta_1}{(\eta_2-\eta_1^2)^{\frac{3}{2}}}}{\frac{1}{2}{(\eta_2-\eta_1^2)^{\frac{3}{2}}}} =: g_F^*(\eta).\\
\end{eqnarray}

We check the Crouzeix identity~\cite{crouzeix1977relationship,EIG-2018}:
\begin{equation}
\nabla^2 F(\theta)\nabla^2 F^*(\eta(\theta))=\nabla^2 F(\theta(\eta))\nabla^2 F^*(\eta)=I,
\end{equation}
where $I$ denotes the $2\times 2$ identity matrix. 

The Hessian metric $\nabla^2 F(\theta)$ is also called the {\em $q$-Fisher metric}~\cite{qnormal-2012} (for $q=2$).
Let $g_{\FR}^\lambda(\lambda)$ and $g_{\FR}^\theta(\theta)$ denote the Fisher information metric expressed using the $\lambda$-coordinates and the $\theta$-coordinates, respectively.
Then, we have
\begin{equation}
g_\FR^\theta(\theta) = \Jac^\top_{\lambda}(\theta)\times g_\FR^\lambda(\lambda(\theta))\times \Jac_{\lambda}(\theta),
\end{equation}
where $\Jac_{\lambda}(\theta)$ denotes the Jacobian matrix:
\begin{equation}
\Jac_{\lambda}(\theta)\eqdef \left[ \frac{\partial\lambda_i}{\partial\theta_j} \right].
\end{equation}
Similarly, we can express the Hessian metric $g_F\eqdef \nabla^2 F(\theta)$ using the $\lambda$-coordinate system:
\begin{equation}
g_F^\lambda(\lambda) = \Jac^\top_{\theta}(\lambda)\times g_F^\theta(\theta(\lambda))\times \Jac_{\theta}(\lambda).
\end{equation}

We calculate explicitly the following Jacobian matrices:
\begin{equation}
\Jac_{\theta}(\lambda)=\pi \mattwotwo{\frac{2}{\lambda_2}}{-2\frac{\lambda_1}{\lambda_2^2}}{0}{\frac{1}{\lambda_2^2}}.
\end{equation}
and
\begin{equation}
\Jac_{\lambda}(\theta)=  \mattwotwo{-\frac{1}{2\theta_2}}{\frac{\theta_1}{2\theta_2^2}}{0}{\frac{\pi}{\theta_2^2}}.
\end{equation}

We check that we have
\begin{eqnarray}
g_F^\theta(\theta) &=& -\frac{2\theta_2}{\pi^2} g_\FR^\theta(\theta),\\
g_F^\lambda(\lambda) &=&\frac{2}{\pi\sigma} g_\FR^\lambda(\lambda).
\end{eqnarray}

That is, the Riemannian metric tensors $g_\FR^\lambda(\lambda)$ and $g_F^\lambda(\lambda)$ (or $g_F^\theta(\theta)$ and $g_\FR^\theta(\theta)$) are conformally equivalent.
This is, there exists a smooth function $u(\lambda)=\log \frac{2}{\pi\sigma}$ such that $g_F^\lambda(\lambda) =e^{u(\lambda)}  g_\FR^\lambda(\lambda)$.

This dually flat space construction of the Cauchy manifold
\begin{equation*}
\left(\calC,g(\theta)=\nabla^2 F(\theta),\supleft{D_\flat}\nabla,\supleft{D_\flat}\nabla^*=\supleft{D_\flat^*}\nabla\right)
\end{equation*}
 can be interpreted as a conformal flattening of the curved $\alpha$-geometry~\cite{qnormal-2012,IG-2016,ohara2019conformal}.
The relationships between the curvature tensors of dual $\pm\alpha$-connections are  studied in~\cite{zhang2007note}.

Notice that this dually flat geometry can be recovered from the divergence-based structure of \S\ref{sec:divgeo} by considering the Bregman-Tsallis divergence.
Figure~\ref{fig:CauchyManifold} illustrates the relationships between the  invariant $\alpha$-geometry and the dually flat geometry of the Cauchy manifold.
The $q$-Gaussians can further be generalized by {\em $\chi$-family} with corresponding deformed logarithm and exponential functions~\cite{IG-2016,amari2012geometry}. The {\em $\chi$-family} unifies both the dually flat exponential family with the dually flat mixture family~\cite{amari2012geometry}. 

A statistical dissimilarity $D[p_{\lambda_1}:p_{\lambda_2}]$ between two parametric distributions
 $p_{\lambda_1}$ and $p_{\lambda_2}$ amounts to an equivalent dissimilarity $D(\theta_1:\theta_2)$ between their parameters:
 $D(\theta_1:\theta_2) \eqdef D[p_{\lambda_1}:p_{\lambda_2}]$.
When the parametric dissimilarity is smooth, one can construct the divergence-based $\alpha$-geometry~\cite{amari2010information,EIG-2018}.
Thus the dually flat space structure of the Cauchy manifold can also be obtained from the {\em divergence-based $\pm\alpha$-geometry} obtained from the flat divergence $D_\flat$ (see Figure~\ref{fig:CauchyManifold}). 
It can be shown that the  dually flat space $q$-geometry is the unique geometry in the intersection of the conformal Fisher-Rao geometry with the deformed $\chi$-geometry (Theorem~13 of ~\cite{amari2012geometry}) when the manifold is the positive orthant $\bbR^{d+1}$.
Note that a dually flat space in information geometry is usually not Riemannian flat (with respect to the Levi-Civita connection, e.g., the Gaussian manifold). In particular, Matsuzoe proved in~\cite{Matsuzoe-2014} that the Riemannian manifold $(\calC,\nabla^2 F(\theta))$ induced by the $q$-Fisher metric is of constant curvature $-1$ when $q=2$.

There are many alternative possible ways to build a dually flat space from a $q$-Gaussian family once a convex Bregman generator $F(\theta)$ has been built from the density $p_q(\theta)$ of a $q$-Gaussian. The method presented above is a natural generalization of the  dually flat space construction  for exponential families.
To give another approach, let us mention that Matsuzoe~\cite{Matsuzoe-2014} also introduced another Hessian metric $g^M(\theta)=[g_{ij}^M(\theta)]$ defined by:
\begin{equation}
g_{ij}^M(\theta)\eqdef \int \partial_i p_\theta(x)\partial_j\log_q p_\theta(x) \dx.
\end{equation} 
This metric is conformal to both the Fisher metric and the $q$-Fisher metric, and is obtained by generalizing  equivalent representations of the Fisher information matrix (see $\alpha$-representations in~\cite{IG-2016}).

\begin{figure}
\centering
 
\includegraphics[width=\columnwidth]{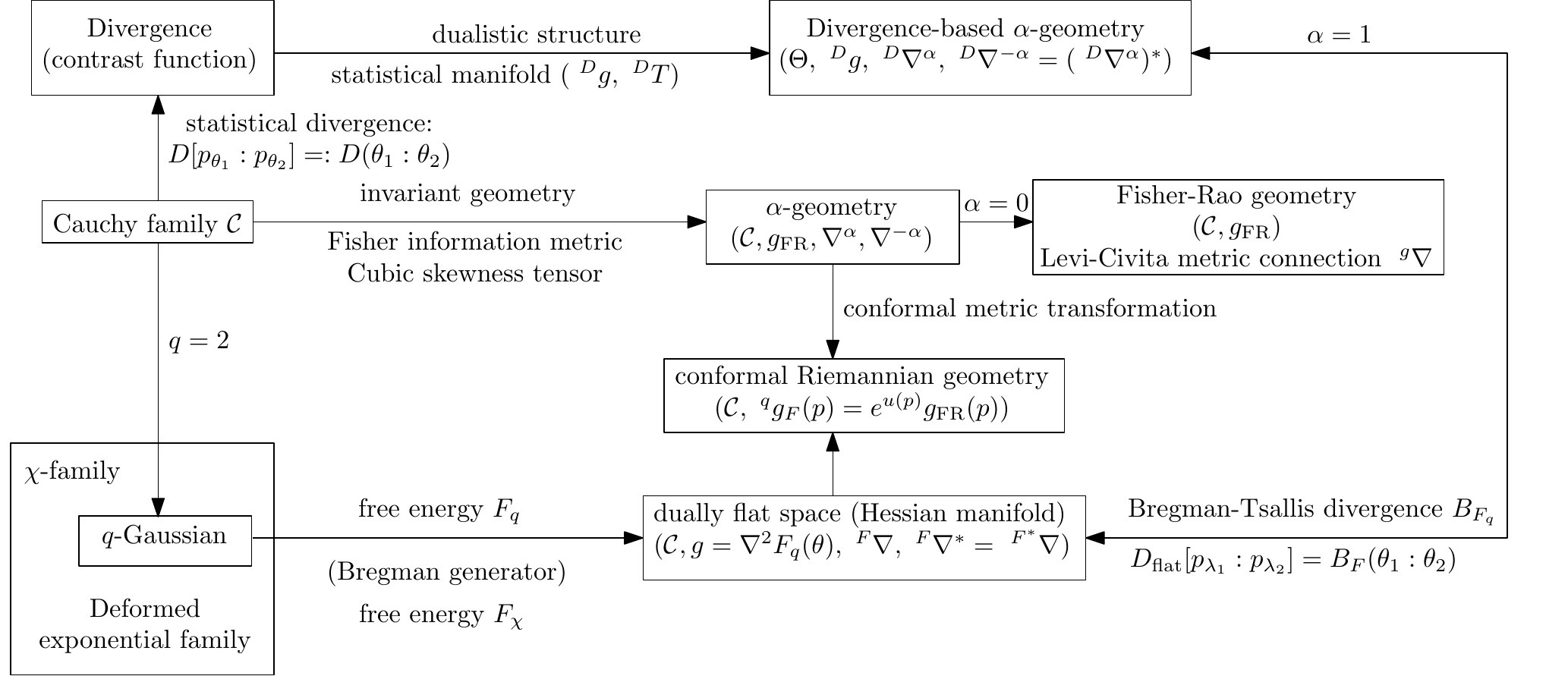}

\caption{Information-geometric structures on the Cauchy manifold and their relationships.} \label{fig:CauchyManifold}
\end{figure}

\section{Invariant divergences: $f$-divergences and $\alpha$-divergences }\label{sec:alphadiv}

\subsection{Invariant divergences in information geometry}

The {\em $f$-divergences}~\cite{csiszar1967information,NN-fdiv-2013} between two densities $p(x)$ and $q(x)$ is defined for a positive convex function $f$, strictly convex at $1$, with $f(1)=0$ as: 
\begin{equation}\label{eq:fdiv}
I_f[p:q]\eqdef\int_{\calX} p(x) f\left(\frac{q(x)}{p(x)}\right)\dx,
\end{equation}
The KL divergence is a $f$-divergence obtained for the generator $f(u)=-\log(u)$.

An {\em invariant divergence} is a divergence $D$ is a divergence which satisfies the {\em information monotonicity}~\cite{IG-2016}:
$D[p_X:p_Y]\geq D[p_{t(X)}:p_{t(Y)}]$ with equality iff $t(X)$ is a sufficient statistic.
The invariant divergences are the $f$-divergences for the simplex sample space~\cite{IG-2016}.
Moreover, the {\em standard $f$-divergences} (calibrated with $f(1)=0$ and $f'(1)=f''(1)=1$) induce the Fisher information metric (FIm) for its metric tensor $\supleft{I_f}g$ when the sample space is the probability simplex: $\supleft{I_f}g=g_\FR$, see~\cite{IG-2016}. 

\subsection{$\alpha$-Divergences between location-scale densities}
Let $I_\alpha[p:q]$ denote the {\em $\alpha$-divergence}~\cite{IG-2016} between $p$ and $q$:
\begin{equation}
I_\alpha[p:q] \eqdef \frac{1}{\alpha(1-\alpha)}(1-C_\alpha[p:q]),\quad \alpha\not\in\{0,1\}
\end{equation}
where $C_\alpha[p:q]$ is {\em Chernoff $\alpha$-coefficient}~\cite{Chernoff-1952,ChernoffNielsen-2013}:
\begin{eqnarray}
C_\alpha[p:q] &\eqdef& \int p^{\alpha}(x)q^{1-\alpha}(x)\dx,\\
 &=& \int q(x)\left(\frac{p(x)}{q(x)}\right)^{\alpha},\\
 &=& C_{1-\alpha}[q:p].
\end{eqnarray}
We have $I_\alpha[p:q]=I_{-\alpha}[q:p]={I_\alpha}^*[p:q]$.

The $\alpha$-divergences include the {\em chi square divergence} ($\alpha=2$), the {\em squared Hellinger divergence} ($\alpha=0$, symmetric) and in the limit cases the Kullback-Leibler (KL) divergence ($\alpha\rightarrow 1$) and the reverse KL divergence ($\alpha\rightarrow 0$).
The $\alpha$-divergences are $f$-divergences for the generator:
\begin{equation}
f_\alpha(u)=
\left\{\begin{array}{ll}
\frac{u^{1-\alpha}-u}{\alpha(\alpha-1)}, & \text { if } \alpha \neq 0, \alpha \neq 1 \\
u \log(u), & \text { if } \alpha=0 \quad\text{(reverse Kullback-Leibler divergence),} \\
-\log(u), & \text { if } \alpha=1 \quad\text{(Kullback-Leibler divergence).}
\end{array}\right.
\end{equation}

For location scale families, let
\begin{equation}
C_\alpha(l_1,s_1;l_2,s_2)\eqdef C_\alpha\left[p_{l_1,s_1}:p_{l_2,s_2}\right].
\end{equation}

Using change of variables in the integrals, one can show the following identities:
\begin{eqnarray}
C_\alpha(l_1,s_1;l_2,s_2) &=& C_\alpha\left(0,1;\frac{l_2-l_1}{s_1},\frac{s_2}{s_1}\right),\\
&=& C_\alpha\left( \frac{l_1-l_2}{s_2},\frac{s_1}{s_2};0,1\right),\\
&=& C_{1-\alpha}\left(0,1 ; \frac{l_1-l_2}{s_2},\frac{s_1}{s_2}\right),\\
&=& C_{1-\alpha}(l_2,s_2;l_1,s_1). 
\end{eqnarray}

For the location-scale families which include the normal family $\calN$, the Cauchy family $\calC$ and the $t$-Student families $\calS_k$ with fixed degree of freedom $k$,
  the $\alpha$-divergences are {\em not} symmetric in general (e.g., $\alpha$-divergences between two normal distributions).
However, we have shown that the chi square divergences and the KL divergence are symmetric when densities belong
to the Cauchy family.
Thus it is of interest to prove whether the $\alpha$-divergences between Cauchy densities are symmetric or not, and report their closed-form formula for all $\alpha\in\bbR$. 

Using symbolic integration described in Appendix~\ref{sec:maxima}, we found that
{\small 
\begin{equation}
C_3(p_{\lambda_1};p_{\lambda_2}) = \frac{3s_2^4+(2s_1^2+6l_2^2-12l_1l_2+6l_1^2)s_2^2+3s_1^4+(6l_2^2-12l_1l_2+6l_1^2)s_1^2+3l_2^4-12
l_1l_2^3+18l_1^2l_2^2-12l_1^3l_2+3l_1^4)}{8s_1^2s_2^2},
\end{equation}
}
and checked that this Chernoff similarity coefficient is symmetric:

\begin{equation}
C_3(p_{\lambda_1};p_{\lambda_2}) = C_3(p_{\lambda_2};p_{\lambda_1}).
\end{equation}

Therefore the $3$-divergence $I_3$ between two Cauchy distributions is symmetric.
In particular, when $l_1=l_2=l$, we find that
\begin{eqnarray}
C_3(p_{l,s_1};p_{l,s_2}) &=& \frac{3(s_1^4+s_2^4)+2s_1^2 s_2^2}{8s_1^2s_2^2},\\
 &=& 1+ \frac{3}{4} \frac{(s_1^2-s_2^2)^2}{2s_1^2s_2^2},\\
&=& 1+ \frac{3}{4} \delta(l^2,s_1^2,l_2^2,s_2^2).
\end{eqnarray}
 
In the Appendix, we proved by symbolic calculations that the $\alpha$-divergences are symmetric for $\alpha\in\{0,1,2,3,4\}$.

\begin{Remark}
The Cauchy family can also be interpreted as a family of univariate elliptical distributions~\cite{Mitchell-1988}.
A {\em univariate elliptical distribution} has canonical parametric density:
\begin{equation}
q_{\mu,\sigma}(x) \eqdef \frac{1}{\sigma} h\left( \left( \frac{x-\mu}{\sigma}\right)^2\right),
\end{equation}
for some function $h(u)$.
For example, the Gaussian distributions are elliptical distributions obtained for $h(u)=\frac{1}{\sqrt{2\pi}}\exp\left(-\frac{1}{2}u\right)$.
Location-scale densities $p_{l,s}$ with standard density $p_{0,1}$ can be interpreted as univariate elliptical distributions $q_{\mu,\sigma}$
with $h(u)=p_{0,1}(u^2)$ and $(\mu,\sigma)=(l,s)$: $p_{l,s}=q_{\mu,\sigma}$.
It follows that the Cauchy densities are elliptical distributions for $h(u)=\frac{1}{\pi(1+u)}$.
By doing a change of variable in the KL divergence integral, we find again the following identity:
\begin{equation}
D_\KL\left[q_{\mu_1,\sigma_1} : q_{\mu_2,\sigma_2} \right] =  D_\KL\left[q_{0,1} : q_{\frac{\mu_2-\mu_1}{\sigma_2},\frac{\sigma_1}{\sigma_2}} \right]. 
\end{equation}
\end{Remark}

\subsection{Metrization of the Kullback-Leibler divergence}\label{sec:KLmetrization}

The {\em Kullback-Leibler divergence}~\cite{CT-2012} $D_\KL[p:q]$ between two continuous probability densities $p$ and $q$ defined over the real line support is an oriented dissimilarity measure defined by:
\begin{equation}
D_\KL[p:q] \eqdef  \int_{-\infty}^\infty p(x)\log\left(\frac{p(x)}{q(x)}\right)\dx. \label{eq:KLDiv}
\end{equation}

The closed-form formula for the KL divergence between two Cauchy distributions requires to perform a (non-trivial) integration task.
The following closed-form expression  has been reported in~\cite{KLCauchy-2019} using advanced symbolic integration:
\begin{equation}\label{eq:KLCauchycft}
D_\KL[p_{l_1,s_1}:p_{l_2,s_2}] =  \log\left(1+\frac{(s_1-s_2)^2+(l_1-l_2)^2}{4s_1s_2}\right).  
\end{equation}

Although the KL divergence is usually asymmetric, it is a remarkable fact that it is symmetric between any two Cauchy densities. 
However, the KL divergence of Eq.~\ref{eq:KLDiv} and Eq.~\ref{eq:KLCauchycft} does {\em not} satisfy the triangle inequality,
and therefore although symmetric, it is not a metric distance.

The KL divergence between two Cauchy distributions is related to the Pearson $D_{\chi^2_P}[p:q]$ and Neyman $D_{\chi^2_N}[p:q] $ chi square divergences~\cite{NN-fdiv-2013}:
\begin{eqnarray}
D_{\chi^2_P}[p:q] &\eqdef& \int \frac{(q(x)-p(x))^2}{p(x)} \dx,\\
D_{\chi^2_N}[p:q] &\eqdef& \int \frac{(q(x)-p(x))^2}{q(x)} \dx =  D_{\chi^2_P}^*[p:q] = D_{\chi^2_P}[q:p].
\end{eqnarray}
Indeed, the formula for the Pearson and Neyman chi square divergences between two Cauchy distributions coincide, and (surprisingly) amount to the $\delta$ distance:
\begin{eqnarray}
D_{\chi^2_P}[p_{l_1,s_1}:p_{l_2,s_2}] &=& D_{\chi^2_N}[p_{l_1,s_1}:p_{l_2,s_2}],\\
&=& \frac{(s_1-s_2)^2+(l_2-l_1)^2}{2s_1s_2},\\
&=:& \delta(l_1,s_1;l_2,s_2).
\end{eqnarray}
Since the Pearson and Neyman chi square divergences are symmetric, let us write $D_{\chi^2}[p:q]=D_{\chi^2_P}[p:q]$ in the remainder.
We can rewrite the Fisher-Rao distance between two Cauchy distributions using the $D_{\chi^2}$ divergence as follows:
\begin{equation}
\rho_\FR[p_{l_1,s_1},p_{l_2,s_2}]=\frac{1}{\sqrt{2}} \arccosh\left(1+D_{\chi^2}[p_{l_1,s_1}:p_{l_2,s_2}]\right).
\end{equation}

Figure~\ref{fig:convertChiFR} plots the strictly increasing chi-to-Fisher-Rao conversion function:
 \begin{equation}
t_{\chi\rightarrow\FR}(u)\eqdef\frac{1}{\sqrt{2}} \arccosh\left(1+u\right).
\end{equation}

\begin{figure}
\centering
\includegraphics[width=0.65\columnwidth]{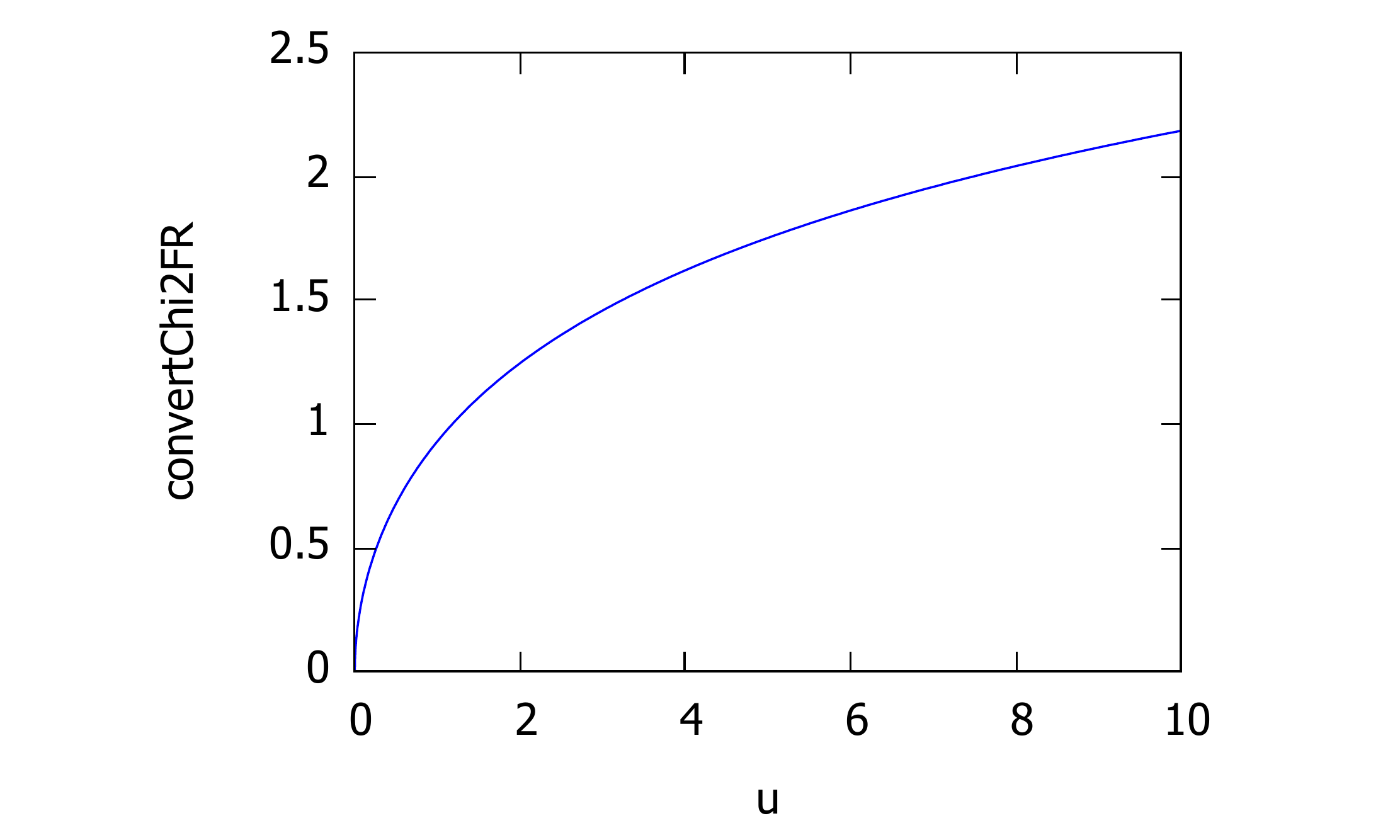}
\caption{Plot of the chi-to-Fisher-Rao conversion function: A strictly increasing function.} \label{fig:convertChiFR}
\end{figure}

Since the Cauchy family is a location-scale family, we have the following general invariance property of $f$-divergences:

\begin{Theorem}
The $f$-divergence~\cite{csiszar1967information} between two location-scale densities $p_{l_1,s_1}$ and $p_{l_2,s_2}$ can be reduced to the calculation of the $f$-divergence between one standard density with another location-scale density:
\begin{equation}
I_f[p_{l_1,s_1}:p_{l_2,s_2}] = I_f\left[p:p_{\frac{l_2-l_1}{s_1},\frac{s_2}{s_1}}\right] 
= I_f\left[p_{\frac{l_1-l_2}{s_2},\frac{s_1}{s_2}}:p\right].
\end{equation}
\end{Theorem}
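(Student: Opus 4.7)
The strategy is a direct change-of-variables in the integral defining the $f$-divergence, exploiting the defining relation $p_{l,s}(x) = \tfrac{1}{s} p\!\left(\tfrac{x-l}{s}\right)$ of the location-scale family.

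\medskip

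Starting from
\[
I_f[p_{l_1,s_1}:p_{l_2,s_2}] = \int_{-\infty}^{\infty} p_{l_1,s_1}(x)\, f\!\left(\frac{p_{l_2,s_2}(x)}{p_{l_1,s_1}(x)}\right) \dx,
\]
I would substitute $y = (x-l_1)/s_1$, so that $\dx = s_1\, \dy$ and the reference density normalizes as $p_{l_1,s_1}(x)\,\dx = p(y)\, \dy$. The essential step is to identify the transformed argument: a routine algebraic rearrangement gives
\[
\frac{x - l_2}{s_2} = \frac{y - (l_2-l_1)/s_1}{s_2/s_1},
\]
so that $p_{l_2,s_2}(x) = \tfrac{1}{s_1}\, p_{l',s'}(y)$ with $l' = (l_2-l_1)/s_1$ and $s' = s_2/s_1$. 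Consequently the ratio inside $f$ transforms covariantly:
\[
\frac{p_{l_2,s_2}(x)}{p_{l_1,s_1}(x)} = \frac{p_{l',s'}(y)}{p(y)}.
\]
Substituting everything back, the Jacobian factors $s_1$ and $1/s_1$ cancel, leaving
\[
I_f[p_{l_1,s_1}:p_{l_2,s_2}] = \int p(y)\, f\!\left(\frac{p_{l',s'}(y)}{p(y)}\right) \dy = I_f\!\left[p:p_{\frac{l_2-l_1}{s_1},\frac{s_2}{s_1}}\right],
\]
which is the first identity.

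\medskip

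For the second identity, I would symmetrically apply the substitution $y = (x-l_2)/s_2$, which normalizes the \emph{second} argument to $p$ and turns the first argument into $p_{(l_1-l_2)/s_2,\, s_1/s_2}$; the same Jacobian cancellation and ratio-invariance argument yields $I_f[p_{(l_1-l_2)/s_2,s_1/s_2}:p]$. Both identities can also be unified by observing that they express the affine-equivariance of the ratio $q/p$ under the simultaneous affine rescaling $x \mapsto (x-l)/s$, which is precisely the group action under which the location-scale family is orbit-homogeneous.

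\medskip

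There is no real obstacle: the proof is a one-line change of variables once the covariance of the density ratio is identified. The only bookkeeping care is to verify that the scaling factor $1/s_1$ (or $1/s_2$) appearing in the transformed $p_{l_2,s_2}(x)$ is exactly cancelled by the Jacobian, which ensures the transported argument inside $f$ is genuinely a ratio of location-scale densities with unit normalization factor, and that no absolute-value sign is needed because $s_1, s_2 > 0$.
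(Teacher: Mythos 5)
Your proof is correct and follows essentially the same route as the paper: the substitution $y=(x-l_1)/s_1$, the identification $\frac{x-l_2}{s_2}=\frac{y-(l_2-l_1)/s_1}{s_2/s_1}$, the cancellation of the Jacobian against the $1/s_1$ normalization, and the symmetric substitution for the second identity (the paper also remarks that the second identity can alternatively be obtained via the conjugate generator $f^*(u)=uf(1/u)$). No gaps.
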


\begin{proof}
The proof follows from changes of the variable $x$ in the definite integral of Eq~\ref{eq:fdiv}:
Consider $y=\frac{x-l_1}{s_1}$ with $\dx=s_1\dy$, $x=s_1y+l_1$ and $\frac{x-l_2}{s_2}=\frac{s_1y+l_1-l_2}{s_2}=\frac{y-\frac{l_2-l_1}{s_1}}{\frac{s_2}{s_1}}$. We have

\begin{eqnarray}
I_f[p_{l_1,s_1}:p_{l_2,s_2}] &:=& \int_{\calX} p_{l_1,s_1}(x) f\left(\frac{p_{l_2,s_2}(x)}{p_{l_1,s_1}(x)}  \right)\dx,\\
&=& \int_{\calY} \frac{1}{s_1} p(y) f\left( \frac{\frac{1}{s_2} p\left(\frac{y-\frac{l_2-l_1}{s_1}}{\frac{s_2}{s_1}}\right)}{\frac{1}{s_1}p(y)}
\right) s_1\dy,\\
&=& \int p(y) f\left(\frac{p_{\frac{l_2-l_1}{s_1},\frac{s_2}{s_1}}(y)}{p(y)}\right) \dy,\\
&=&  I_f\left[p:p_{\frac{l_2-l_1}{s_1},\frac{s_2}{s_1}}\right].
\end{eqnarray}
The proof for  $I_f[p_{l_1,s_1}:p_{l_2,s_2}]=I_f(p_{\frac{l_1-l_2}{s_2},\frac{s_1}{s_2}}:p)$ is similar.
One can also use the {\em conjugate generator} 
$f^*(u)\eqdef uf(\frac{1}{u})$ which yields the {\em reverse $f$-divergence}: $I_{f^*}[p:q]=I_f[q:p]={I_f}^*[p:q]$.
\end{proof}

Since the KL divergence is expressed by
$D_\KL[p_{l_1,s_1}:p_{l_2,s_2}] =  \log\left(1+\frac{1}{2}\delta(l_1,s_1;l_2,s_2)\right)$, we also check that
\begin{eqnarray}
\delta(l_1,s_1;l_2,s_2) &=& \delta\left(0,1;\frac{l_1-l_2}{s_2},\frac{s_1}{s_2}\right),\\
 &=& \delta\left(\frac{l_2-l_1}{s_1},\frac{s_2}{s_1};0,1\right),\\
 &=:& \delta(a,b), 
\end{eqnarray} 
where
\begin{equation}
\delta(a,b)\eqdef \frac{a^2+(b-1)^2}{4b}.
\end{equation}

It follows the following corollary for scale families:
\begin{Corollary}
The $f$-divergences between scale densities is scale-invariant and amount to a scalar scale-invariant divergence $D_f(s_1:s_2) := I_f[p_{s_1}:p_{s_2}]$.
\end{Corollary}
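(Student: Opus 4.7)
The plan is to deduce this corollary directly from the preceding theorem by specializing to the case where the location parameters coincide (or equivalently vanish, once the scale family is parametrized as $p_s := p_{0,s}$). The theorem established that
\[
I_f[p_{l_1,s_1}:p_{l_2,s_2}] = I_f\!\left[p : p_{\frac{l_2-l_1}{s_1},\frac{s_2}{s_1}}\right],
\]
and I would simply plug in $l_1=l_2=0$ (the defining restriction of a scale family) to obtain
\[
I_f[p_{s_1}:p_{s_2}] = I_f\!\left[p : p_{0,\frac{s_2}{s_1}}\right],
\]
which depends only on the ratio $s_2/s_1$. This establishes both claims at once: the divergence is scale-invariant (invariant under $(s_1,s_2)\mapsto(\lambda s_1,\lambda s_2)$ for any $\lambda>0$), and it factors through a univariate function of the ratio, which is precisely the scalar divergence $D_f(s_1:s_2)$.

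To make the scale-invariance fully explicit, I would verify directly that for any $\lambda>0$,
\[
I_f[p_{\lambda s_1}:p_{\lambda s_2}] = I_f\!\left[p : p_{0,\frac{\lambda s_2}{\lambda s_1}}\right] = I_f\!\left[p : p_{0,\frac{s_2}{s_1}}\right] = I_f[p_{s_1}:p_{s_2}],
\]
which shows the claim in a completely transparent way. The scalar divergence can then be defined unambiguously by
\[
D_f(s_1:s_2) \eqdef I_f[p_{s_1}:p_{s_2}] = I_f\!\left[p:p_{0,s_2/s_1}\right],
\]
and one checks that it inherits the defining properties of an $f$-divergence (non-negativity, and equality to zero iff $s_1=s_2$) from $I_f$ itself, using the strict convexity of $f$ at $1$.

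There is essentially no hard step here: the corollary is a one-line consequence of the theorem upon setting the location parameters equal. The only thing that requires a moment of care is verifying that $D_f(s_1:s_2)$ is well-defined as a function of $s_1,s_2$ and not merely of their ratio — but this is immediate since the right-hand side is manifestly a function of $s_2/s_1$, and we can regard $D_f$ either as a genuine bivariate function or, equivalently, as a univariate function $\tilde D_f(s_2/s_1)$ of the ratio. No new calculations beyond those already done for the theorem are needed.
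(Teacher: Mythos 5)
Your proof is correct and follows exactly the paper's route: specialize the preceding theorem to equal (zero) location parameters so that $I_f[p_{s_1}:p_{s_2}] = I_f\left[p:p_{s_2/s_1}\right]$ depends only on the ratio $s_2/s_1$, which immediately gives scale-invariance. The extra explicit check of invariance under $(s_1,s_2)\mapsto(\lambda s_1,\lambda s_2)$ is a harmless elaboration of the same one-line argument the paper gives.
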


\begin{proof}
\begin{eqnarray}
D_f(s_1:s_2) := I_f[p_{s_1}:p_{s_2}] &=&   I_f\left(p:p_{\frac{s_2}{s_1}}\right) =: D_f\left(1:\frac{s_2}{s_1}\right),\\
 &=& I_f\left[p_{\frac{s_1}{s_2}}:q\right]=: 
D_f\left(\frac{s_1}{s_2}:1\right).
\end{eqnarray}
\end{proof}

Many algorithms and data-structures can be designed efficiently when dealing with metric distances:
For example, the   metric ball tree~\cite{Uhlmann-1991} or the vantage point tree~\cite{yianilos1993data,bvp-2009} are two such data structures for querying efficiently nearest neighbors in metric spaces.
Thus it is of interest to consider {\em statistical dissimilarities} which are metric distances.
The total variation distance~\cite{CT-2012} and the square-root of the Jensen-Shannon divergence~\cite{JS-2004} are two common examples of {\em statistical metric distances} often met in the literature.
In general, the metrization of $f$-divergences was investigated in~\cite{kafka1991powers,Vajda-MetricDivergence-2009}.

We shall prove the following theorem:

\begin{Theorem}\label{thm:sqrtKL}
The square root of the Kullback-Leibler divergence  
between two Cauchy density $p_{l_1,s_1}$ and $p_{l_2,s_2}$  is a metric distance:
\begin{equation}
\rho_\KL[p_{l_1,s_1},p_{l_2,s_2}]:=\sqrt{D_\KL[p_{l_1,s_1}:p_{l_2,s_2}]} = \sqrt{\log\left(1+\frac{(s_1-s_2)^2+(l_1-l_2)^2}{4s_1s_2}\right)}. \label{eq:sqrtKLCauchy}
\end{equation}
\end{Theorem}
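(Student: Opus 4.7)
The plan is to reduce the triangle inequality for $\rho_\KL$ to a one-dimensional concavity statement, by factoring $\rho_\KL$ through the Poincar\'e hyperbolic distance $\rho_P$ on the upper half plane (which is already a bona fide metric).

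First, I would rewrite the closed form of Eq.~\ref{eq:KLCauchycft} in terms of $\rho_P$. Since $D_\KL = \log(1+\delta/2)$ with $\delta = \delta(l_1,s_1;l_2,s_2)$ as in Eq.~\ref{eq:delta}, and since $\cosh(\rho_P) = 1+\delta$ from \S\ref{sec:FR}, the half-angle identity gives
\begin{equation*}
1 + \tfrac{1}{2}\delta \;=\; \tfrac{1+\cosh\rho_P}{2} \;=\; \cosh^{2}\!\bigl(\tfrac{\rho_P}{2}\bigr),
\end{equation*}
so that $D_\KL = 2\log\cosh(\rho_P/2)$ and hence $\rho_\KL = F(\rho_P)$ where $F(u) \eqdef \sqrt{2\log\cosh(u/2)}$. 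Non-negativity, symmetry, and the identity of indiscernibles for $\rho_\KL$ are then immediate from the fact that $F$ is non-negative, continuous, and strictly increasing on $[0,\infty)$ with $F(0)=0$ together with the corresponding properties of $\rho_P$.

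The only nontrivial property is the triangle inequality. Since $\rho_P$ satisfies it, it is enough to show that $F$ is \emph{subadditive}: $F(a+b)\le F(a)+F(b)$ for $a,b\ge 0$. I would establish subadditivity via concavity of $F$ on $[0,\infty)$: any concave function vanishing at $0$ is subadditive, because $F(a)\ge \tfrac{a}{a+b}F(a+b)$ and $F(b)\ge \tfrac{b}{a+b}F(a+b)$ by concavity and $F(0)=0$. To prove $F$ is concave, writing $h(u)\eqdef 2\log\cosh(u/2)$ so that $F=\sqrt{h}$, one computes
\begin{equation*}
4h(u)^{3/2}\,F''(u) \;=\; 2h(u)h''(u) - h'(u)^{2},
\end{equation*}
with $h'(u)=\tanh(u/2)$ and $h''(u)=\tfrac{1}{2}\mathrm{sech}^{2}(u/2)$. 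Setting $t=u/2$, the concavity $F''\le 0$ reduces to the scalar inequality
\begin{equation*}
2\log\cosh(t) \;\le\; \sinh^{2}(t), \qquad t\ge 0,
\end{equation*}
which I would verify by comparing derivatives: both sides vanish at $t=0$, while $\tfrac{d}{dt}\sinh^{2}(t)=\sinh(2t) = 2\sinh(t)\cosh(t)$ strictly dominates $\tfrac{d}{dt}(2\log\cosh t)=2\tanh(t)=2\sinh(t)/\cosh(t)$ on $(0,\infty)$ since $\cosh^{2}(t)>1$.

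The main obstacle is precisely the concavity check: $F$ is a concave function ($\sqrt{\cdot}$) composed with a convex function ($h$), so concavity of $F$ is not formal and genuinely requires the scalar inequality above. Everything else is either a restatement of material already in \S\ref{sec:FR} (closed form of $D_\KL$, relation $\cosh\rho_P = 1+\delta$) or a standard fact (the Poincar\'e upper half plane is a metric space, and concave functions vanishing at the origin are subadditive). Combining these pieces yields $\rho_\KL(\lambda_1,\lambda_3)=F(\rho_P(\lambda_1,\lambda_3))\le F(\rho_P(\lambda_1,\lambda_2)+\rho_P(\lambda_2,\lambda_3))\le \rho_\KL(\lambda_1,\lambda_2)+\rho_\KL(\lambda_2,\lambda_3)$, completing the proof.
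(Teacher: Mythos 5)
Your proof is correct, and its top-level strategy matches the paper's: express $\rho_\KL$ as a transform of an already-established metric (you use the Poincar\'e distance $\rho_P$; the paper uses $\rho_\FR=\rho_P/\sqrt{2}$, which differs only by a constant factor) and reduce the triangle inequality to subadditivity of that transform. Where you genuinely diverge is in how subadditivity is certified. The paper asserts that $\sqrt{t_{\FR\rightarrow\KL}(u)}/u$ is monotone and points to a plot (its Figure on the metric transform); moreover, the criterion it quotes is stated in the wrong direction --- for a nonnegative $t$ with $t(0)=0$, subadditivity follows from $t(u)/u$ being non-\emph{increasing}, not non-decreasing (the plotted ratio is in fact decreasing, so the paper's conclusion stands, but the stated justification is backwards and ultimately graphical). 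Your argument replaces this with a self-contained analytic proof: the half-angle identity gives the clean form $D_\KL = 2\log\cosh(\rho_P/2)$, and concavity of $F(u)=\sqrt{2\log\cosh(u/2)}$ reduces, via $4h^{3/2}F''=2hh''-(h')^2$, to the elementary inequality $2\log\cosh t\leq \sinh^2 t$ for $t\geq 0$, which you verify by comparing derivatives. Since a concave nonnegative function vanishing at the origin is subadditive (and indeed has non-increasing ratio $F(u)/u$), your route proves strictly more than the paper's ratio test requires, and it closes the one step the paper leaves to a picture. The only point worth making explicit is that the second-derivative computation lives on $u>0$ (where $h(u)>0$); continuity of $F$ at $0$ with $F(0)=0$ handles the endpoint, as you note.
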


\begin{proof}
The proof consists in showing that the square root of the conversion function of the Fisher-Rao distance to the KL divergence is a metric transform~\cite{Dissimilarity-2005}.
A {\em metric transform} $t(u):\bbR_+\rightarrow \bbR_+$ is a transform which preserves the metric distance $\rho$, i.e., 
$(t\circ\rho)(p,q) = t(\rho(p,q))$ is a metric distance.
The following are sufficient conditions for function $t(u)$ to be a metric transform: 
\begin{enumerate}
\item $t$ is a strictly increasing function,  
\item $t(0)=0$, 
\item  $t$ satisfies that {\em subadditive property}: $t(a+b)\leq t(a)+t(b)$ for all $a,b\geq 0$. 
\end{enumerate}
For example, strictly concave functions $t(u)$ with $t(0)=0$ are metric transforms.
In general, one can check that $t(u)$ is subadditive by verifying that the ratio of functions $\frac{t(u)}{u}$ is non-decreasing.

The following transform $\sqrt{t_{\FR\rightarrow\KL}(u)}$ converts the Fisher-Rao distance $\rho_\FR$ to the Kullback-Leibler divergence $D_\KL$:
\begin{equation}
t_{\FR\rightarrow\KL}(u):=\log\left(\frac{1}{2}+\frac{1}{2}\mathrm{cosh}(\sqrt{2}u)\right),
\end{equation}
where
\begin{equation}
\mathrm{cosh}(x)\eqdef\frac{e^x+e^{-x}}{2}.
\end{equation}

The square root of that conversion function is a subadditive function  since $\frac{\sqrt{t_{\FR\rightarrow\KL}(u)}}{u}$ is non-decreasing (see Figure~\ref{fig:metrictransform})
 and $\sqrt{t_{\FR\rightarrow\KL}(0)}=0$. 

\begin{figure}
\centering
\includegraphics[width=0.65\columnwidth]{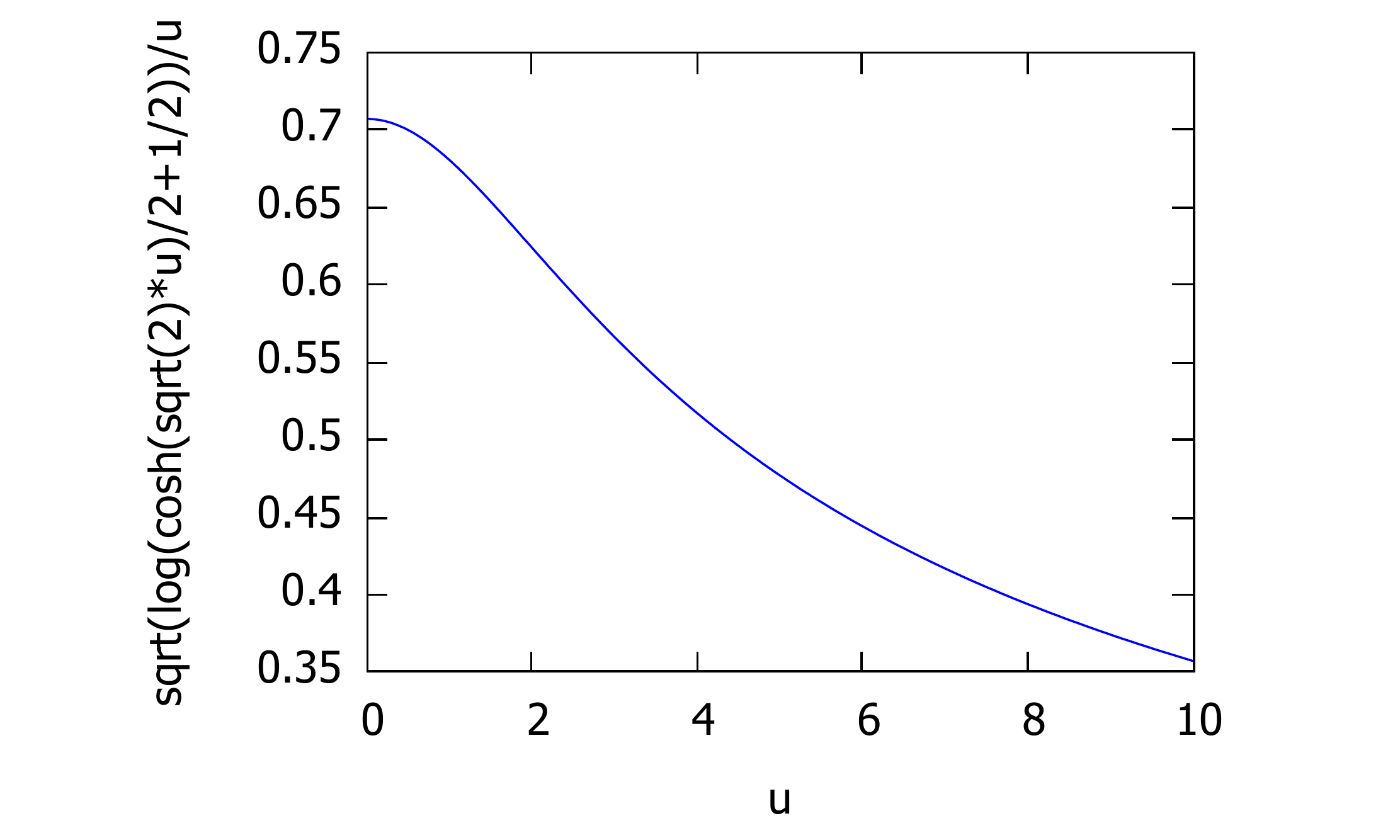}
\caption{Plot of the function  $\frac{\sqrt{t_{\FR\rightarrow\KL}(u)}}{u}$.} \label{fig:metrictransform}
\end{figure}

Since the Fisher-Rao distance is a metric distance and since $\sqrt{t_{\FR\rightarrow\KL}(u)}$ is a metric transform, we conclude that
\begin{equation}
\rho_\KL[p_{l_1,s_1}:p_{l_2,s_2}]:=\sqrt{D_\KL[p_{l_1,s_1}:p_{l_2,s_2}]}=\sqrt{t_{\FR\rightarrow\KL}(\rho_\FR[p_{l_1,s_1}:p_{l_2,s_2}])}
\end{equation} 
is a metric distance.
\end{proof}

A metric distance $\rho(p,q)$ is said {\em Hilbertian} if there exists an {\em embedding} $\phi(\cdot)$ into a Hilbert space such that 
$\rho(p,q)=\|\phi(p)-\phi(q)\|_H$, where $\|\cdot\|_H$ is a norm. A metric is said Euclidean if there exists an {\em embedding} with associated norm $\ell_2$, the Euclidean norm.
For example, the square root of the celebrated Jensen-Shannon divergence is a Hilbertian distance~\cite{JS-2004}.

Let us prove the following:

\begin{Theorem}\label{thm:sqrtKLscale}
The square root of the KL divergence between to Cauchy densities of the same scale family is a Hilbertian distance.
\end{Theorem}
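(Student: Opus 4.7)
\bigskip
\noindent\textbf{Proof proposal.}
The plan is to rewrite the squared distance of Eq.~\ref{eq:sqrtKLCauchy} on a scale family as a translation-invariant function of a reparametrized coordinate, and then apply Schoenberg's theorem which states that a semimetric $d$ is isometrically embeddable into a Hilbert space if and only if $d^2$ is conditionally negative definite (cnd).

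First I would simplify: setting $l_1=l_2=l$ in Eq.~\ref{eq:sqrtKLCauchy} and using the identity $4s_1s_2+(s_1-s_2)^2=(s_1+s_2)^2$, one obtains
\begin{equation*}
\rho_\KL[p_{l,s_1},p_{l,s_2}]^2 \;=\; \log\frac{(s_1+s_2)^2}{4s_1s_2} \;=\; 2\log\cosh\!\left(\frac{u_1-u_2}{2}\right),
\end{equation*}
after the change of variable $u_i\eqdef\log s_i$, since $\frac{s_1+s_2}{2\sqrt{s_1s_2}}=\cosh((u_1-u_2)/2)$. Thus via the bijection $s\leftrightarrow\log s$ the problem reduces to proving that the translation-invariant kernel $\psi(u_1-u_2)$, with $\psi(t)\eqdef 2\log\cosh(t/2)$, is a cnd function on $\bbR$.

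Next I would establish the cnd property. The cleanest route uses the Hadamard product expansion
\begin{equation*}
\cosh(t/2)=\prod_{k=1}^{\infty}\left(1+\frac{t^{2}}{(2k-1)^{2}\pi^{2}}\right),
\end{equation*}
so that $\psi(t)=2\sum_{k\geq 1}\log\bigl(1+t^2/a_k^2\bigr)$ with $a_k=(2k-1)\pi$. For each $a>0$, the function $t\mapsto(1+t^2/a^2)^{-s}$ is, up to normalization, the characteristic function of a symmetric variance-Gamma (Laplace-type) law for every $s>0$; hence it is positive definite, and Schoenberg's theorem implies $\log(1+t^2/a^2)$ is cnd. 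Since a convergent sum of cnd kernels is cnd, $\psi$ is cnd (the series converges since $\log(1+t^2/a_k^2)=O(k^{-2})$ uniformly on compacts). An equivalent and more concise argument invokes directly the infinite divisibility of the hyperbolic secant distribution: its characteristic function is a power of $\mathrm{sech}$, so $e^{-s\psi(t)}=\mathrm{sech}^{2s}(t/2)$ is positive definite for every $s>0$, which by Schoenberg is exactly the cnd criterion for $\psi$.

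Finally, applying Schoenberg's embedding theorem to the squared semimetric $\rho_\KL^2$ on the scale family yields an isometry into a Hilbert space; concretely one may read off the embedding from the L\'evy-Khintchine representation $\psi(t)=\int_0^\infty (1-\cos(t\xi))\,d\nu(\xi)$, for which $u\mapsto(\sqrt{2}\cos(u\xi),\sqrt{2}\sin(u\xi))\in L^2(\bbR_+,d\nu)\oplus L^2(\bbR_+,d\nu)$ realizes the isometry. The main obstacle is the verification that $\psi$ is cnd, since the closed-form Lévy measure of $2\log\cosh(t/2)$ is not entirely elementary; this is why I favor the product-expansion route, which bypasses computing $\nu$ explicitly and only relies on the well-known cnd-ness of $\log(1+\alpha t^2)$.
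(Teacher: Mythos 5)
Your proof is correct, and it takes a genuinely different (and more self-contained) route than the paper. The paper rewrites $D_\KL[p_{l,s_1}:p_{l,s_2}]=2\log\bigl(A(s_1,s_2)/G(s_1,s_2)\bigr)$ as twice the Jensen--Bregman divergence for the Burg generator $F(u)=-\log u$ and then invokes Lemma~3 / Corollary~1 of~\cite{BregmanTriangleIneq-2013} (generators that are cumulants of infinitely divisible laws), so the Hilbertian property is outsourced to that reference. You instead pass to the log-scale coordinate $u=\log s$, observe that the squared distance becomes the translation-invariant kernel $\psi(u_1-u_2)$ with $\psi(t)=2\log\cosh(t/2)$, and verify conditional negative definiteness directly --- either term by term through the Hadamard product of $\cosh$, which reduces everything to the classical cnd function $\log(1+t^2/a^2)$ (the variance-gamma L\'evy exponent), or in one stroke via the infinite divisibility of the hyperbolic secant distribution, i.e., positive definiteness of $\mathrm{sech}^{2s}(t/2)$ for all $s>0$ --- before concluding with Schoenberg's embedding theorem. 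Both arguments ultimately rest on an infinite-divisibility fact, but yours identifies the relevant L\'evy exponent explicitly, yields a concrete feature map from the L\'evy--Khintchine representation rather than a pure existence statement, and makes transparent that on a scale family $\rho_\KL$ is a function of the one-dimensional (Euclidean) Fisher--Rao coordinate $\log s$. One small nit: in your explicit embedding the prefactor should be $1/\sqrt{2}$ rather than $\sqrt{2}$, since $|\cos a-\cos b|^2+|\sin a-\sin b|^2=2\bigl(1-\cos(a-b)\bigr)$ already contributes a factor of $2$ against the representation $\psi(t)=\int_0^\infty\bigl(1-\cos(t\xi)\bigr)\,\mathrm{d}\nu(\xi)$; this is a harmless normalization slip and does not affect the existence of the isometry, which Schoenberg's theorem guarantees in any case.
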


\begin{proof}
For Cauchy distributions with  fixed location parameter $l$, the KL divergence of  Eq.~\ref{eq:KLCauchycft} simplifies to:
\begin{equation}
D_\KL[p_{l,s_1}:p_{l,s_2}] = \log\left(\frac{(s_1+s_2)^2}{4s_1s_2}\right). 
\end{equation}
We can rewrite this KL divergence as 
\begin{equation}
D_\KL[p_{l,s_1}:p_{l,s_2}]=2\log\left(\frac{A(s_1,s_2)}{G(s_1,s_2)}\right),
\end{equation}
where $A(s_1,s_2)=\frac{s_1+s_2}{2}$ and $G(s_1,s_2)=\sqrt{s_1s_2}$ are the {\em arithmetic mean} and the {\em geometric mean} of $s_1$ and $s_2$, respectively.
Then we use Lemma~3 of~\cite{BregmanTriangleIneq-2013} to conclude that  $\sqrt{D_\KL[p_{l,s_1}:p_{l,s_2}]}$ is a Hilbertian metric distance.

Another proof consists in rewriting the KL divergence as a scaled {\em Jensen-Bregman divergence}~\cite{BR-2011,BregmanTriangleIneq-2013}:
\begin{equation}
D_\KL[p_{l,s_1}:p_{l,s_2}]=2\ \JB_{F}(s_1,s_2),
\end{equation}
where 
\begin{equation}
\JB_{F}(\theta_1,\theta_2) \eqdef \frac{F(\theta_1)+F(\theta_2)}{2}-F\left(\frac{\theta_1+\theta_2}{2}\right),
\end{equation}
for a strictly convex generator $F$.
We use $F(\theta)=-\log(u)$, i.e., the {\em Burg information} yielding the {\em Jensen-Burg divergence} $\JB_{F}$.
Then we use Corollary~1 of~\cite{BregmanTriangleIneq-2013} (i.e., $F$ is the cumulant
of an infinitely divisible distribution) to conclude that $\sqrt{\JB_{F}(\theta_1,\theta_2)}$ is a metric distance
 (and hence, $\rho_\KL(l,s_1,l,s_2)=\sqrt{D_\KL[p_{l,s_1}:p_{l,s_2}]}=\sqrt{2} \sqrt{\JB_{F}(s_1,s_2)}$ is a Hilbertian metric distance).
\end{proof}

The {\em $\alpha$-skewed Jensen-Bregman divergence} is defined by
\begin{equation}
\JB_{F}^\alpha(\theta_1:\theta_2) \eqdef \alpha F(\theta_1)+(1-\alpha)F(\theta_2)-F\left(\alpha\theta_1+(1-\alpha)\theta_2\right),
\end{equation}
and the maximal $\alpha$-skewed Jensen-Bregman divergence is called the {\em Jensen-Chernoff divergence}:
\begin{equation}
\JB_{F}^{\alpha^*}(\theta_1:\theta_2) \eqdef \max_{\alpha\in(0,1)} \JB_{F}^\alpha(\theta_1:\theta_2).
\end{equation}
The maximal exponent $\alpha^*$ corresponds to the {\em error exponent} in Bayesian hypothesis testing on exponential family manifolds~\cite{ChernoffNielsen-2013}. 
In general, the metrization of Jensen-Bregman divergence (and Jensen-Chernoff) was studied in~\cite{Chen-BD2-2008}.

Furthermore, by combining Corollary~1~of~\cite{BregmanTriangleIneq-2013} with Theorem~3~of~\cite{BR-2011}, we get the following proposition:

\begin{Proposition}\label{prop:sqrtBhat}
The square root of the Bhattacharyya divergence between two densities of an exponential family is a metric distance when the exponential family is infinitely divisible.
\end{Proposition}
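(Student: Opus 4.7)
\textbf{Proof plan for Proposition~\ref{prop:sqrtBhat}.} The plan is to reduce the Bhattacharyya divergence between two densities of an exponential family to the Jensen-Bregman divergence $\JB_F$ evaluated at the natural parameters, where $F$ is the cumulant of the family, and then combine the two cited results to conclude.

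First, I would recall the Bhattacharyya divergence $D_{\Bhat}[p:q] \eqdef -\log \int \sqrt{p(x)q(x)}\,\dx$ and compute the Bhattacharyya coefficient for two densities $p_{\theta_1}$ and $p_{\theta_2}$ of an exponential family with sufficient statistics $t(x)$ and cumulant function $F(\theta)$. The factorization $p_\theta(x) = \exp(\theta^\top t(x) - F(\theta))$ turns the integrand into another (unnormalized) exponential density with natural parameter $\frac{\theta_1+\theta_2}{2}$, so that
\begin{equation*}
\int \sqrt{p_{\theta_1}(x) p_{\theta_2}(x)}\,\dx = \exp\!\left(-\frac{F(\theta_1)+F(\theta_2)}{2} + F\!\left(\tfrac{\theta_1+\theta_2}{2}\right)\right) = \exp(-\JB_F(\theta_1,\theta_2)).
\end{equation*}
This is precisely the content of Theorem~3 of~\cite{BR-2011}, and it yields the key identity $D_{\Bhat}[p_{\theta_1}:p_{\theta_2}] = \JB_F(\theta_1,\theta_2)$.

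Second, when the exponential family is infinitely divisible, its cumulant $F$ is by definition the cumulant of an infinitely divisible distribution. Therefore Corollary~1 of~\cite{BregmanTriangleIneq-2013} applies directly and certifies that $\sqrt{\JB_F(\theta_1,\theta_2)}$ is a metric on the natural parameter space $\Theta$. Composing with the bijection $\theta \mapsto p_\theta$ (which holds for any regular/minimal exponential family) transfers the metric properties (non-negativity, symmetry, identity of indiscernibles, and the triangle inequality) from $\Theta$ to the family of densities, giving that $\sqrt{D_{\Bhat}[p_{\theta_1}:p_{\theta_2}]}$ is a metric distance.

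There is no real obstacle here: the argument is a clean composition of the identity Bhattacharyya~$=\JB_F$ and the infinitely-divisible metric-transform theorem. The only points worth double-checking in the write-up are (i) the elementary exponential-family computation of the Bhattacharyya coefficient, which should be stated explicitly rather than just cited, and (ii) that the infinite divisibility hypothesis is being used exactly as required by~\cite{BregmanTriangleIneq-2013}, i.e., that $F$ serves as the cumulant of some infinitely divisible law and not merely that the family itself is closed under convolution.
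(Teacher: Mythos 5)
Your proposal is correct and follows essentially the same route as the paper: identify $D_\Bhat[p_{\theta_1}:p_{\theta_2}]$ with the Jensen--Bregman divergence $\JB_F(\theta_1,\theta_2)$ of the cumulant (Theorem~3 of~\cite{BR-2011}), then invoke Corollary~1 of~\cite{BregmanTriangleIneq-2013} for cumulants of infinitely divisible distributions. Your explicit computation of the Bhattacharyya coefficient and the remark about transferring the metric through the bijection $\theta\mapsto p_\theta$ are welcome additions, but the argument is the same combination of the two cited results.
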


This proposition holds because the Bhattacharyya divergence
\begin{equation}
D_\Bhat[p,q]=-\log\left(\int \sqrt{p(x)q(x)}\dx\right),
\end{equation}
 between two  parametric densities $p(x)=p_{\theta_1}(x)$ and $q(x)=p_{\theta_2}(x)$ of an exponential family with cumulant function $F$ amounts to a Jensen-Bregman divergence~\cite{BR-2011} (Theorem~3~of~\cite{BR-2011}):
\begin{equation}
D_\Bhat[p_{\theta_1}(x),p_{\theta_2}(x)]=\JB_F(\theta_1,\theta_2).
\end{equation}

Notice that Proposition~\ref{prop:sqrtBhat} recovers the fact that the square root of the Bhattacharyya divergence between two zero-centered normal distributions 
is a metric (proved differently in~\cite{Sra-2016}) since the set of normal distributions form an infinitely divisible exponential family.

\section{Cauchy Voronoi diagrams and dual Cauchy Delaunay complexes}\label{sec:Vor}

Let us consider the Voronoi diagram~\cite{VorOkabe-2009} of a finite set  $\calP=\{p_{\lambda_1},\ldots p_{\lambda_n}\}$ of $n$ Cauchy distributions with the location-scale parameters $\lambda_i=(l_i,s_i)\in\bbH$ for $i\in\{1,\ldots, n\}$.
We shall consider the Fisher-Rao distance $\rho_\FR$, the KL divergence $D_\KL$ and its square root metrization $\rho_\KL$, the chi square  divergence $D_{\chi^2}$, and the flat divergence $D_\flat$.

\subsection{The hyperbolic Cauchy Voronoi diagrams}
Observe that the Voronoi diagram does not change under any strictly increasing function $t$ of the dissimilarity measure (e.g., square root function): $\Vor_{D\circ t}(\calP)=\Vor_{D}(\calP)$.
Thus we get the following theorem:

\begin{Theorem}\label{thm:CVD}
The Cauchy Voronoi diagrams under the Fisher-Rao distance, the the chi-square  divergence and the Kullback-Leibler divergence all coincide, and amount to a hyperbolic Voronoi diagram on the corresponding location-scale parameters.
\end{Theorem}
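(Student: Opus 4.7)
The plan is to invoke the observation already stated in the excerpt that a Voronoi diagram is invariant under any strictly increasing transformation $t$ of the underlying dissimilarity, i.e., $\Vor_D(\calP)=\Vor_{t\circ D}(\calP)$. Thus it suffices to exhibit a single ``universal'' quantity in which all three dissimilarities on $\calC$, together with the Poincar\'e hyperbolic distance $\rho_P$ on the location-scale upper plane, are strictly monotone.

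The canonical candidate is $\delta(l_1,s_1;l_2,s_2)$ from Eq.~\ref{eq:delta}. First I would collect the closed-form formulas already derived in Section~\ref{sec:IGCauchy}:
\begin{align*}
D_{\chi^2}[p_{l_1,s_1}:p_{l_2,s_2}] &= \delta(l_1,s_1;l_2,s_2),\\
D_\KL[p_{l_1,s_1}:p_{l_2,s_2}] &= \log\!\left(1+\tfrac{1}{2}\delta(l_1,s_1;l_2,s_2)\right),\\
\rho_\FR[p_{l_1,s_1},p_{l_2,s_2}] &= \tfrac{1}{\sqrt{2}}\arccosh(1+\delta(l_1,s_1;l_2,s_2)),\\
\rho_P(l_1,s_1;l_2,s_2) &= \arccosh(1+\delta(l_1,s_1;l_2,s_2)).
\end{align*}
Then I would verify that each right-hand side is a strictly increasing function of $\delta\in\bbR_+$: the first is the identity, $u\mapsto\log(1+u/2)$ has derivative $\frac{1}{2+u}>0$, and $u\mapsto\arccosh(1+u)$ has derivative $1/\sqrt{u(u+2)}>0$. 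By the invariance principle applied three times, the four diagrams $\Vor_{D_{\chi^2}}(\calP)$, $\Vor_{D_\KL}(\calP)$, $\Vor_{\rho_\FR}(\calP)$, and $\Vor_{\rho_P}(\calP)$ therefore coincide, and the last one is by definition the hyperbolic Voronoi diagram on the Cauchy location-scale parameters $(l_i,s_i)\in\bbH$. The square-root metric $\rho_\KL=\sqrt{D_\KL}$ from Theorem~\ref{thm:sqrtKL} is swept in by one further monotone composition.

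I do not anticipate a substantive obstacle, since the analytic work was already carried out when establishing the closed forms above and the monotonicity checks are elementary. The only subtlety worth flagging is that the reduction does \emph{not} require symmetry of the dissimilarities, only monotonicity in the second argument under a common reparametrization: what is actually used in the bisector construction is the equivalence of the inequality $D(P_i,X)\le D(P_j,X)$ with $\delta(\lambda_i,\lambda_X)\le\delta(\lambda_j,\lambda_X)$, and the latter is precisely the defining inequality of the hyperbolic Voronoi cells on $\bbH$.
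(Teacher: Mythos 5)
Your proposal is correct and follows essentially the same route as the paper: both proofs observe that $D_{\chi^2}=\delta$, $D_\KL=\log(1+\delta/2)$, and $\rho_\FR=\frac{1}{\sqrt{2}}\arccosh(1+\delta)$ are all strictly increasing functions of the common quantity $\delta$, so the bisectors (and hence the Voronoi diagrams) reduce to the hyperbolic ones $\delta(\lambda,\lambda_1)=\delta(\lambda,\lambda_2)$ on the location-scale parameters. Your explicit monotonicity checks and the remark that only monotonicity in the query argument is needed (not symmetry) are fine additions but do not change the argument.
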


\begin{proof}
The KL divergence can be expressed as
\begin{eqnarray}
D_\KL[p_{l_1,s_1}:p_{l_2,s_2}] &=& \log  \left(1+\frac{1}{2}\delta(l_1,s_1,l_2,s_2) \right).
\end{eqnarray}
Thus both the $D_\KL$ and $\rho_\FR$ dissimilarities are expressed as strictly increasing functions of $\delta$ (a synonym for the $D_{\chi^2}$ divergence).
Therefore the {\em Voronoi bisectors} between two Cauchy distributions $p_{l_1,s_1}$ and $p_{l_2,s_2}$ for $D\in\{\rho_\FR, D_{\KL}, \sqrt{D_{\KL}}, D_{\chi^2}\}$  amounts to the same expression:
\begin{eqnarray}
\mathrm{Bi}_D(p_{\lambda_1}:p_{\lambda_2})&=&\left\{\lambda\in \bbH \ :\  \delta(\lambda,\lambda_1)=\delta(\lambda,\lambda_2) \right\},\\
\mathrm{Bi}_D(p_{l_1,s_1}:p_{l_2,s_2})&=&\left\{(l,s)\in \bbH \ :\  \delta(l,s,l_1,s_1)=\delta(l,s,l_2,s_2) \right\}.
\end{eqnarray}
\end{proof}

It follows that we can calculate the Cauchy Voronoi diagram of $n$  Cauchy distributions in optimal $\Theta(n\log n)$ time by calculating the 2D hyperbolic Voronoi diagram~\cite{HVD-2010,HVD-2014} on the location-scale parameters (see Appendix~\ref{sec:hpd} for details).
Figure~\ref{fig:hvd} displays the Voronoi diagram of a set of Cauchy distributions by its equivalent parameter hyperbolic Voronoi diagram in the Poincar\'e upper plane model, the Poincar\'e disk model, and the Klein disk model.
A model of hyperbolic geometry is said {\em conformal} if it preserves angles, i.e., its underlying Riemannian metric tensor is a scalar  positive function of the Euclidean metric tensor.
The Poincar\'e disk model and the Poincar\'e upper plane model are both conformal models~\cite{anderson2006hyperbolic}.
The Klein model is {\em not} conformal, except at the disk origin. 
Let $\bbD=\{ p\ :\ \|p\|<1 \}$ denote the open unit disk domain for the Poincar\'e and Klein disk models.
Indeed, the Riemannian metric corresponding to the Klein disk model is
\begin{equation}
\ds_{\mathrm{Klein}}^2(p) =  
\frac{\mathrm{d} s_{\Eucl}^{2}}{1-\|p\|^{2}}
+\frac{\langle p, \mathrm{d} p\rangle}{\left(1-\|p\|^{2}\right)^{2}},
\end{equation}
where $\mathrm{d}p=\dx+\dy$ and $\ds_\Eucl=\sqrt{\dx^2+\dy^2}$ denotes the Euclidean line element.
Since $\ds_{\mathrm{Klein}}^2(0)=\mathrm{d} s_{\Eucl}^{2}$, we deduce that Klein model is conformal at the origin (when measuring the angles between two vectors $v_1$ and $v_2$ of the tangent plane $T_0$).

\begin{figure}%
\centering
\fbox{\includegraphics[width=0.8\columnwidth]{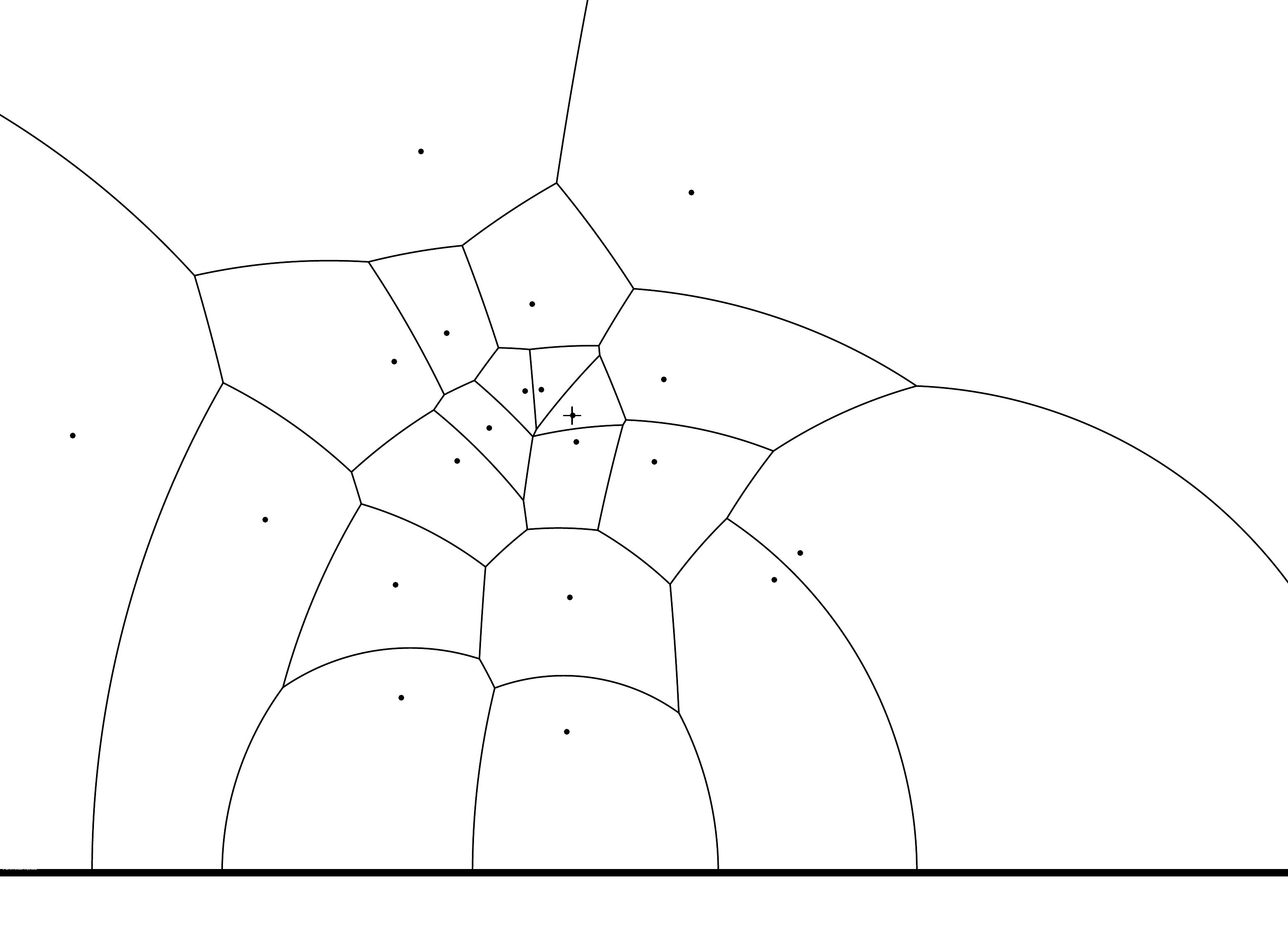}}\\
\fbox{\includegraphics[width=0.4\columnwidth]{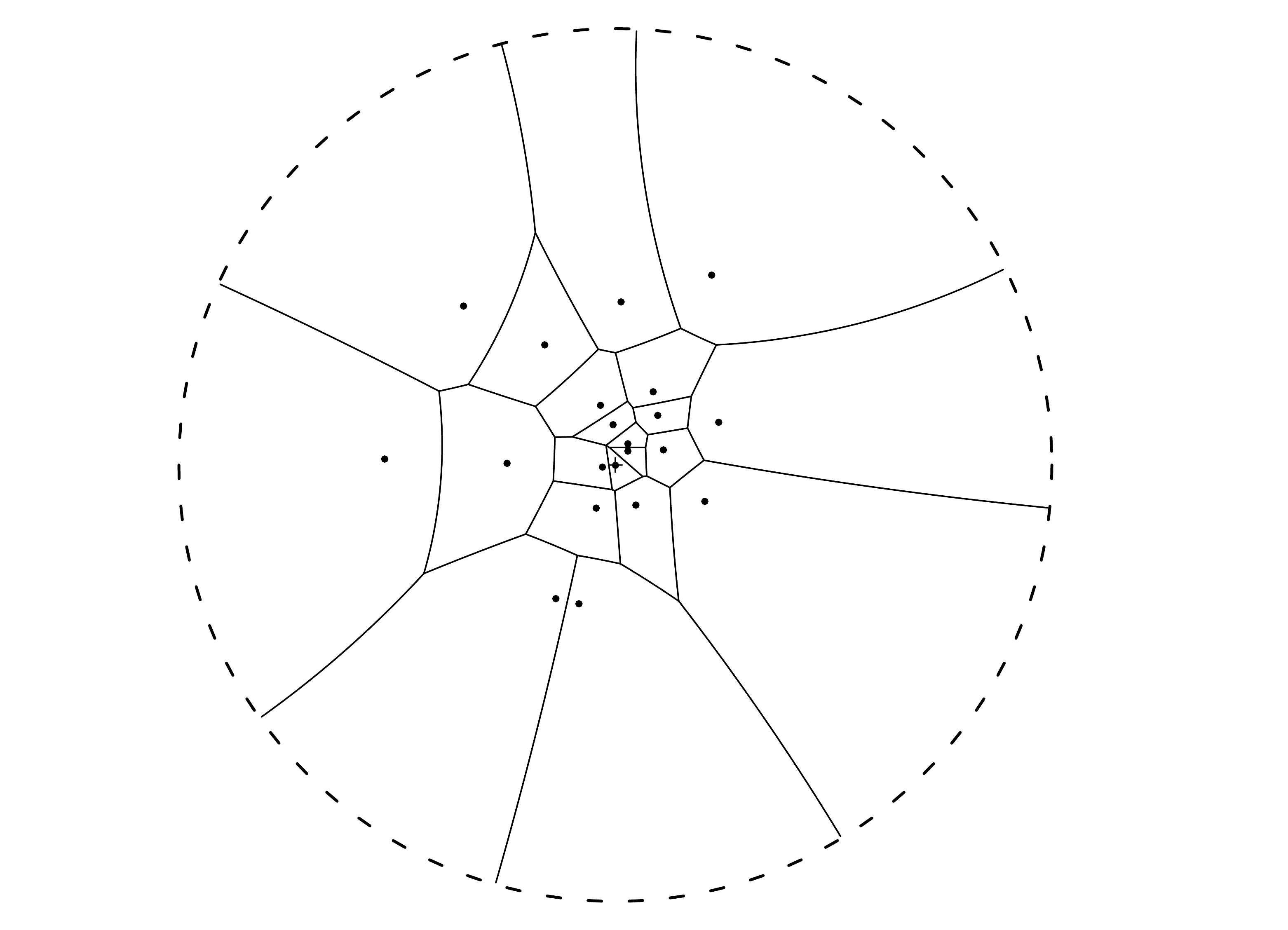}}
\fbox{\includegraphics[width=0.4\columnwidth]{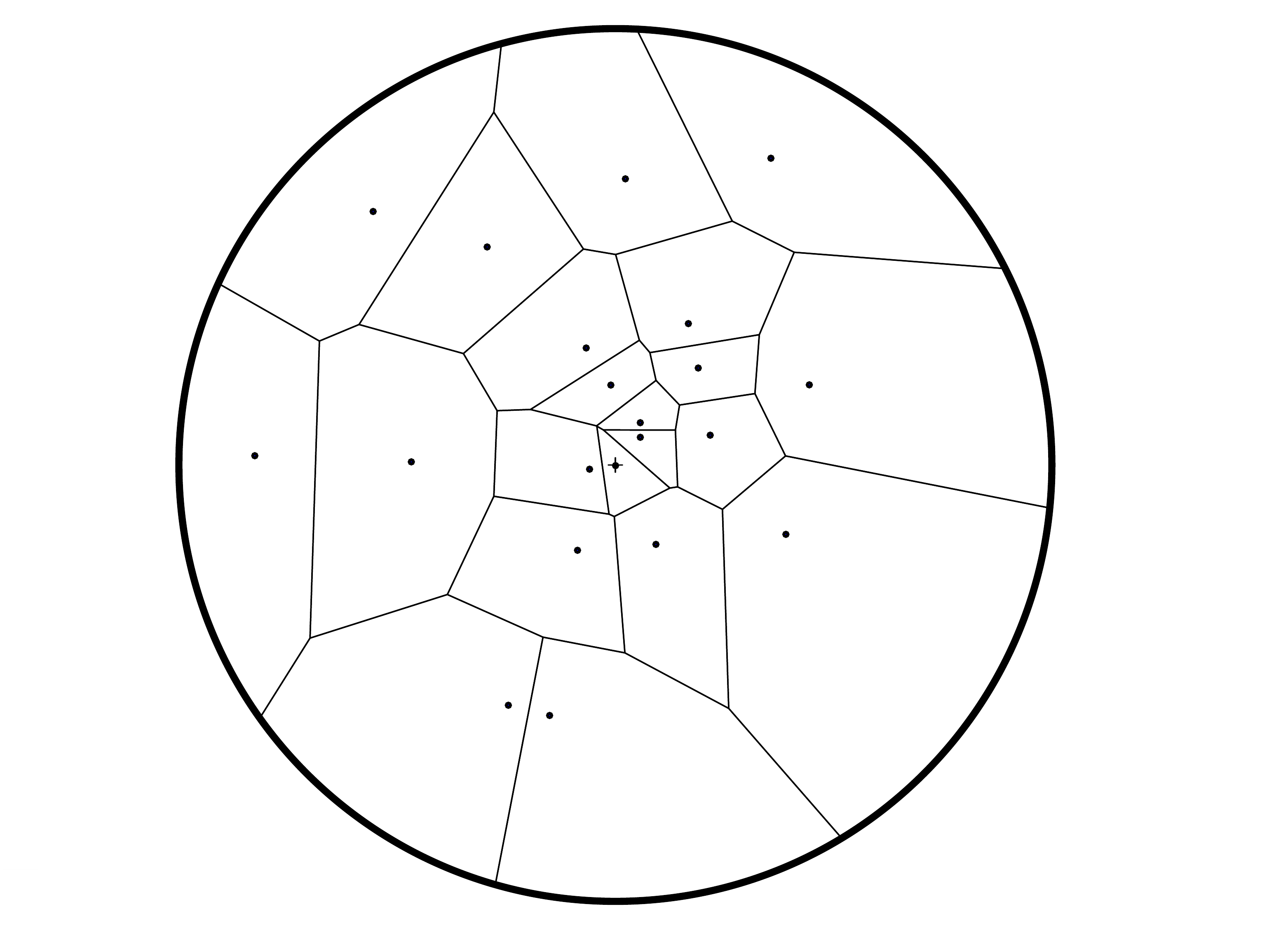}}

\caption{Hyperbolic Voronoi diagram of  a set of Cauchy distributions in the Poincar\'e upper plane (top), the Poincar\'e disk model (bottom left), and the Klein disk model (bottom right).}%
\label{fig:hvd}%
\end{figure}

\begin{figure}
\centering

\includegraphics[bb=0 0 512 512,width=0.4\columnwidth]{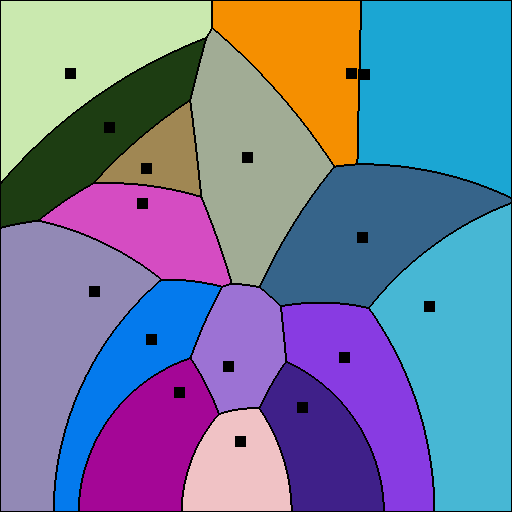}

\caption{A hyperbolic Cauchy Voronoi diagram of a finite set of Cauchy distributions (black square generators, colored Voronoi cells, and black cell borders).} \label{fig:VoronoiCauchy}
\end{figure}

\begin{figure}
\centering
\includegraphics[width=0.4\columnwidth]{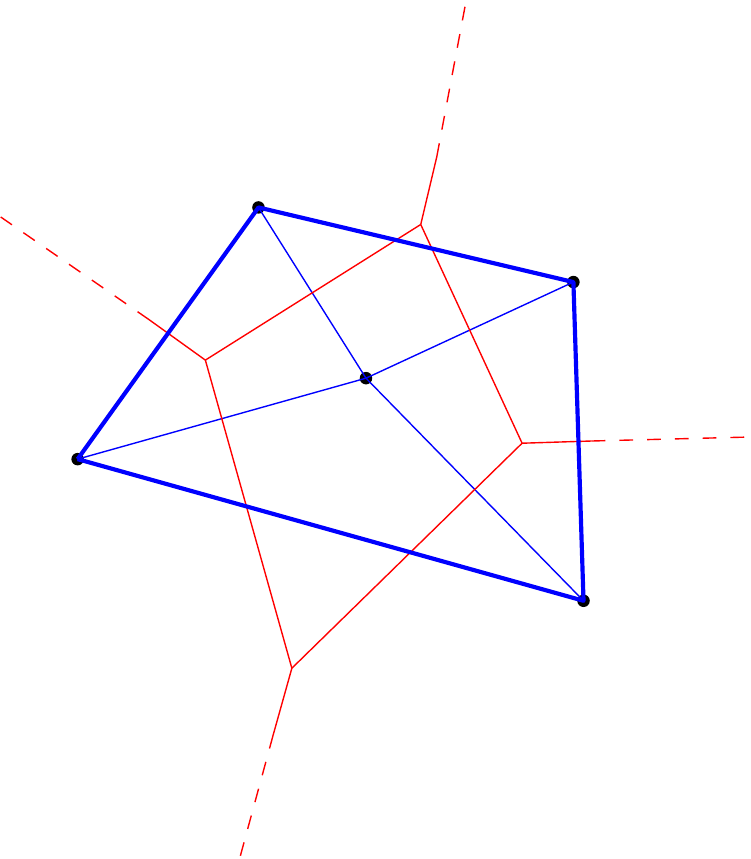}

\caption{Duality between the ordinary Euclidean Voronoi diagram and the Delaunay structures: The Voronoi diagram partitions the space into Voronoi  proximity cells. The Delaunay complex triangulates the convex hull of the generators. A Delaunay edge is drawn between the generators of adjacent Voronoi cells. Observe that the Delaunay edges cuts orthogonally the corresponding Voronoi bisectors in Euclidean geometry.} \label{fig:VoronoiDuality}
\end{figure}

The dual of the Voronoi diagram is called the {\em Delaunay (simplicial) complex}~\cite{BY-1998,bogdanov2013hyperbolic}:
We build the Delaunay complex by drawing an edge between generators whose  Voronoi cells are adjacent.
For the ordinary Euclidean Delaunay complex with  points in general position (i.e., no $d+2$ cospherical points in dimension $d$), the Delaunay complex triangulates the convex hull of the points~\cite{boissonnat2006curved,nielsen1998output}.
 Therefore it is called the {\em Delaunay triangulation}~\cite{VorOkabe-2009,boissonnat2006curved,cheng2012delaunay}.
Similarly, for the hyperbolic Voronoi diagram, we construct the {\em hyperbolic Delaunay complex} by drawing a hyperbolic geodesic edge between any two generators whose Voronoi cells are adjacent. However, we do not necessarily obtain anymore a geodesic triangulation of the hyperbolic geodesic convex hull but rather a simplicial complex, hence the name {\em hyperbolic Delaunay complex}~\cite{bogdanov2013hyperbolic,tanuma2011revisiting,deblois2018delaunay}.
In extreme cases, the hyperbolic Delaunay complex has a tree structure. 
See Figure~\ref{fig:treeDel} for examples of a hyperbolic Delaunay triangulation and a  hyperbolic Delaunay complex  which is not a triangulation
In fact, hyperbolic geometry is very well-suited for embedding isometrically with low distortion weighted tree graphs~\cite{Sarkar-2011}. Hyperbolic embeddings of hierarchical structures~\cite{nickel2017poincare} has become a hot topic in machine learning.

\begin{figure}%
\centering
\fbox{\includegraphics[width=0.45\columnwidth]{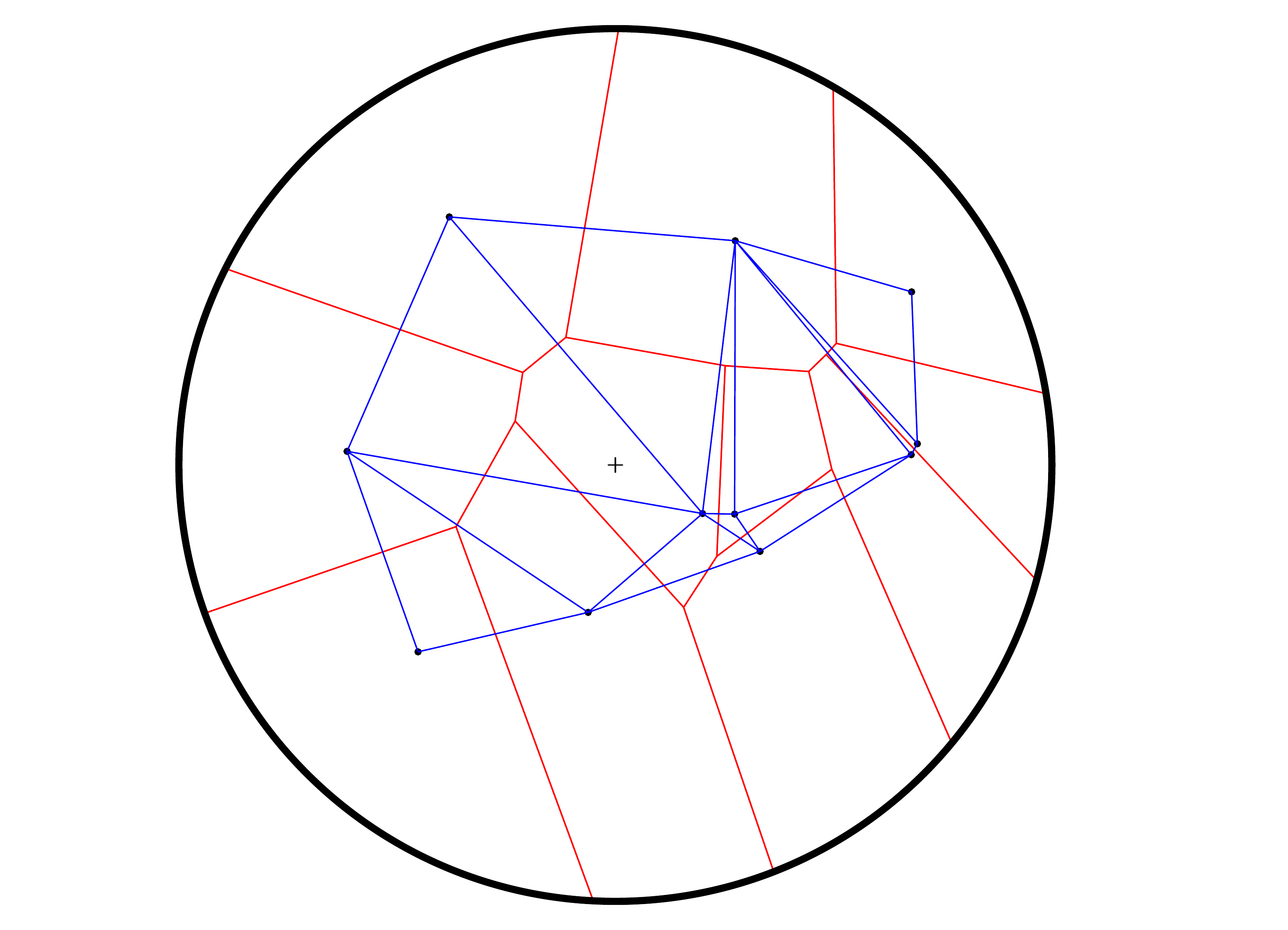}}
\fbox{\includegraphics[width=0.45\columnwidth]{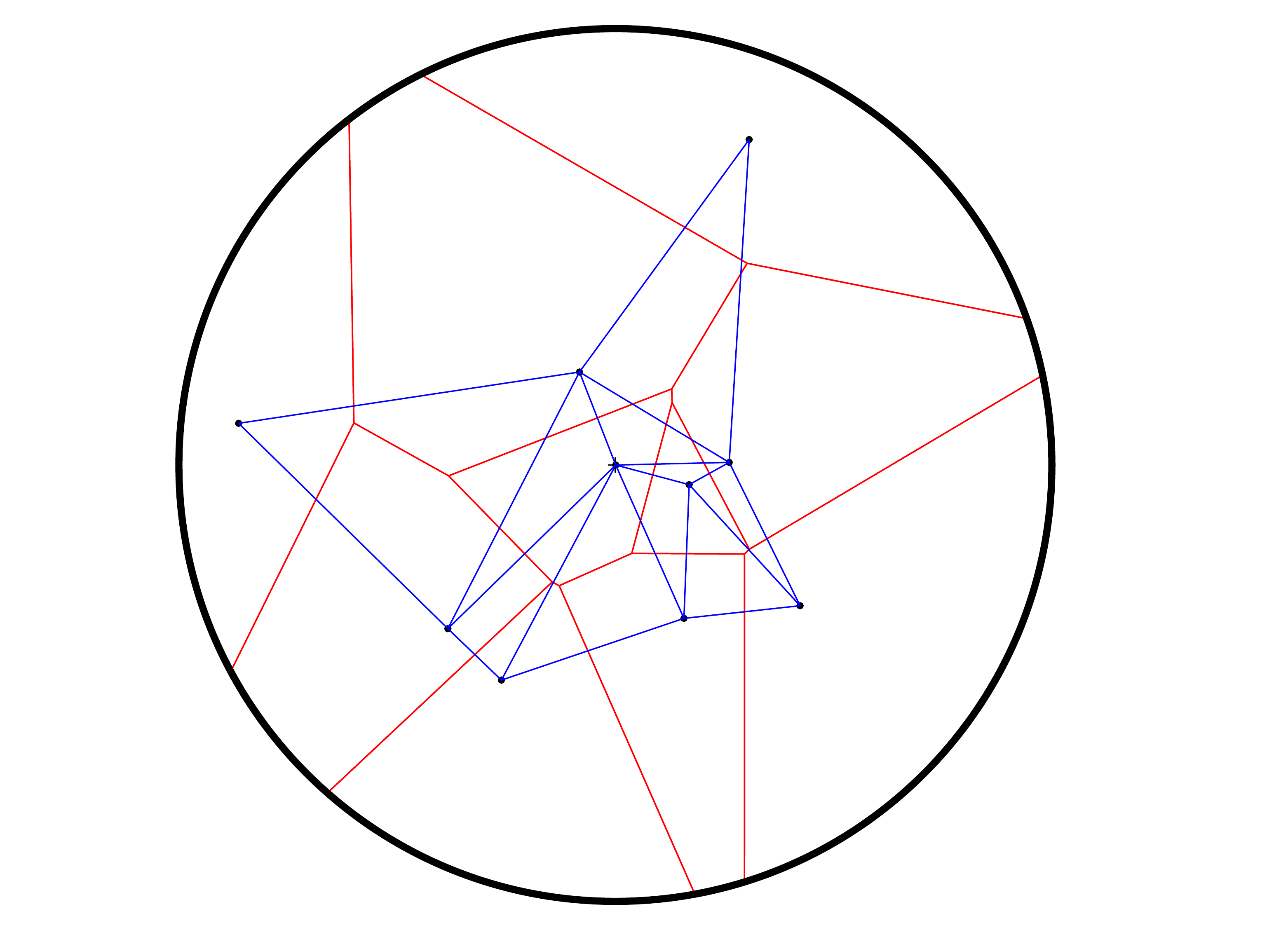}}\\

\fbox{\includegraphics[width=0.45\columnwidth]{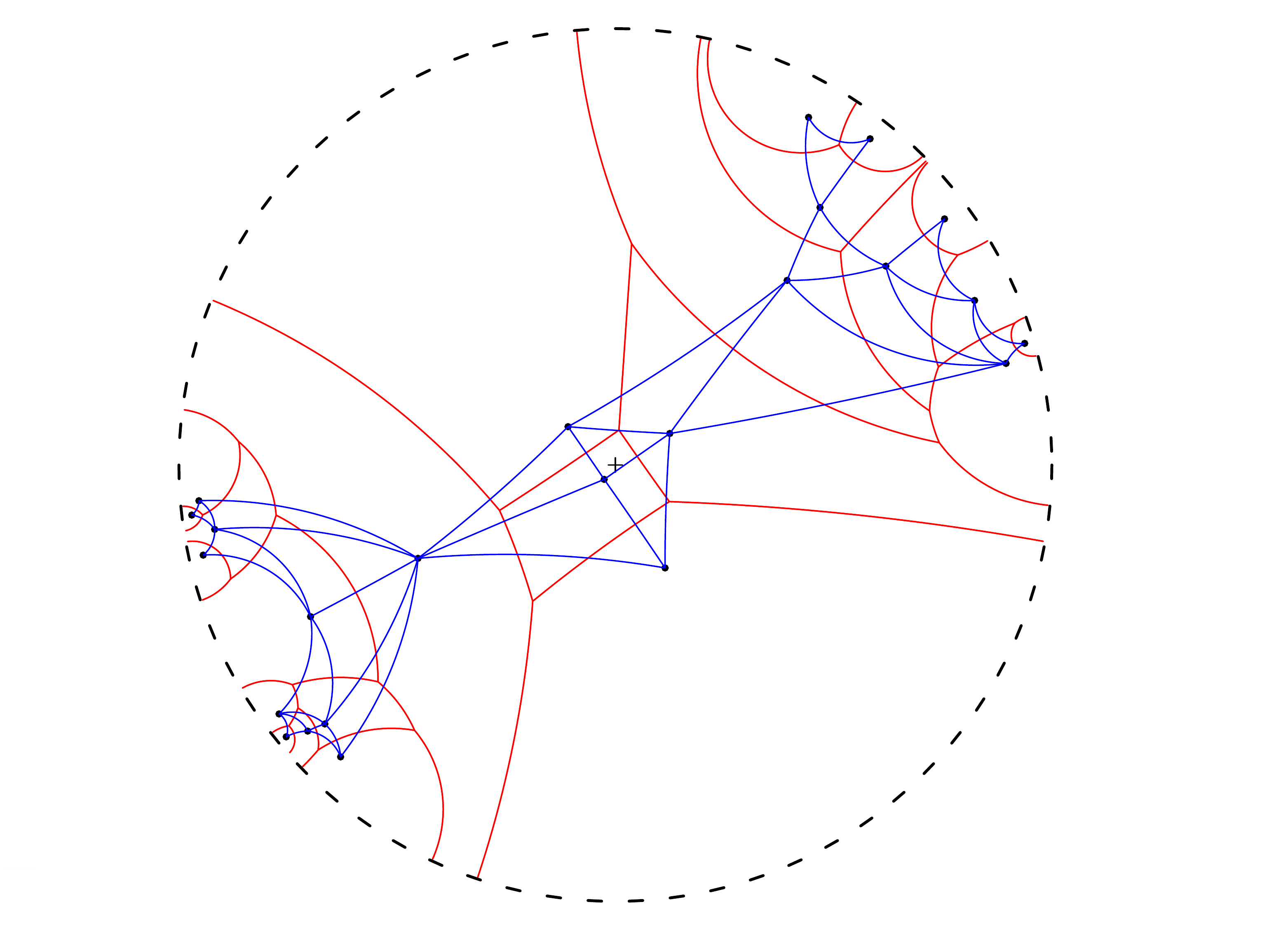}}
\fbox{\includegraphics[width=0.45\columnwidth]{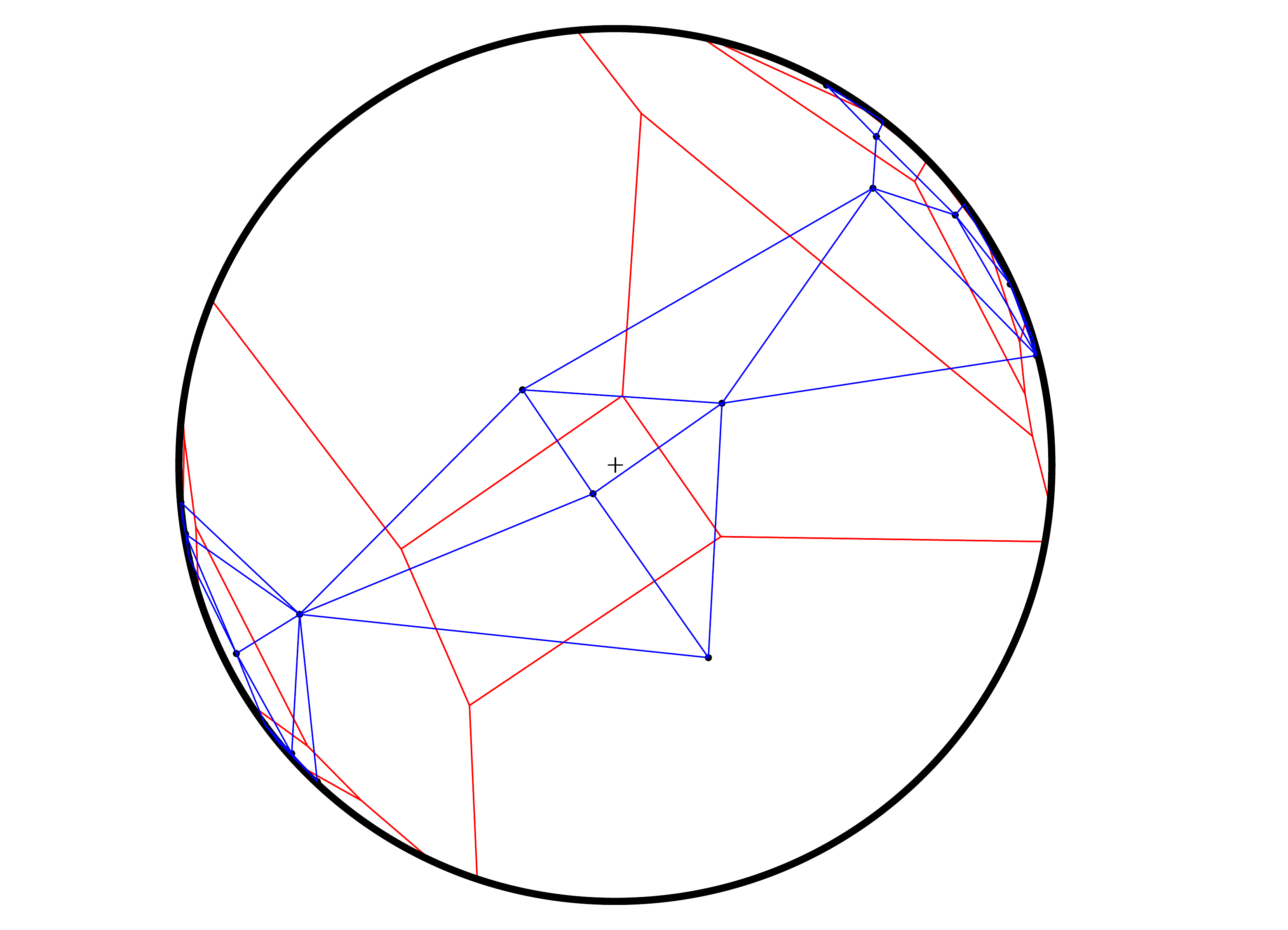}}

\caption{Examples of hyperbolic Voronoi Delaunay complexes drawn in the Klein model: Delaunay complex triangulates the convex hull yielding the Delaunay triangulation (top left), and Delaunay complex which does not triangulate the convex hull, 
 (top right). Bottom: A hyperbolic Voronoi diagram and its dual Delaunay complex displayed in the Poincar\'e disk model (left) and in the Klein disk model (right).} 
\label{fig:treeDel}%
\end{figure}

\begin{figure}%
\centering

\begin{tabular}{cc}
\fbox{\includegraphics[width=0.4\columnwidth]{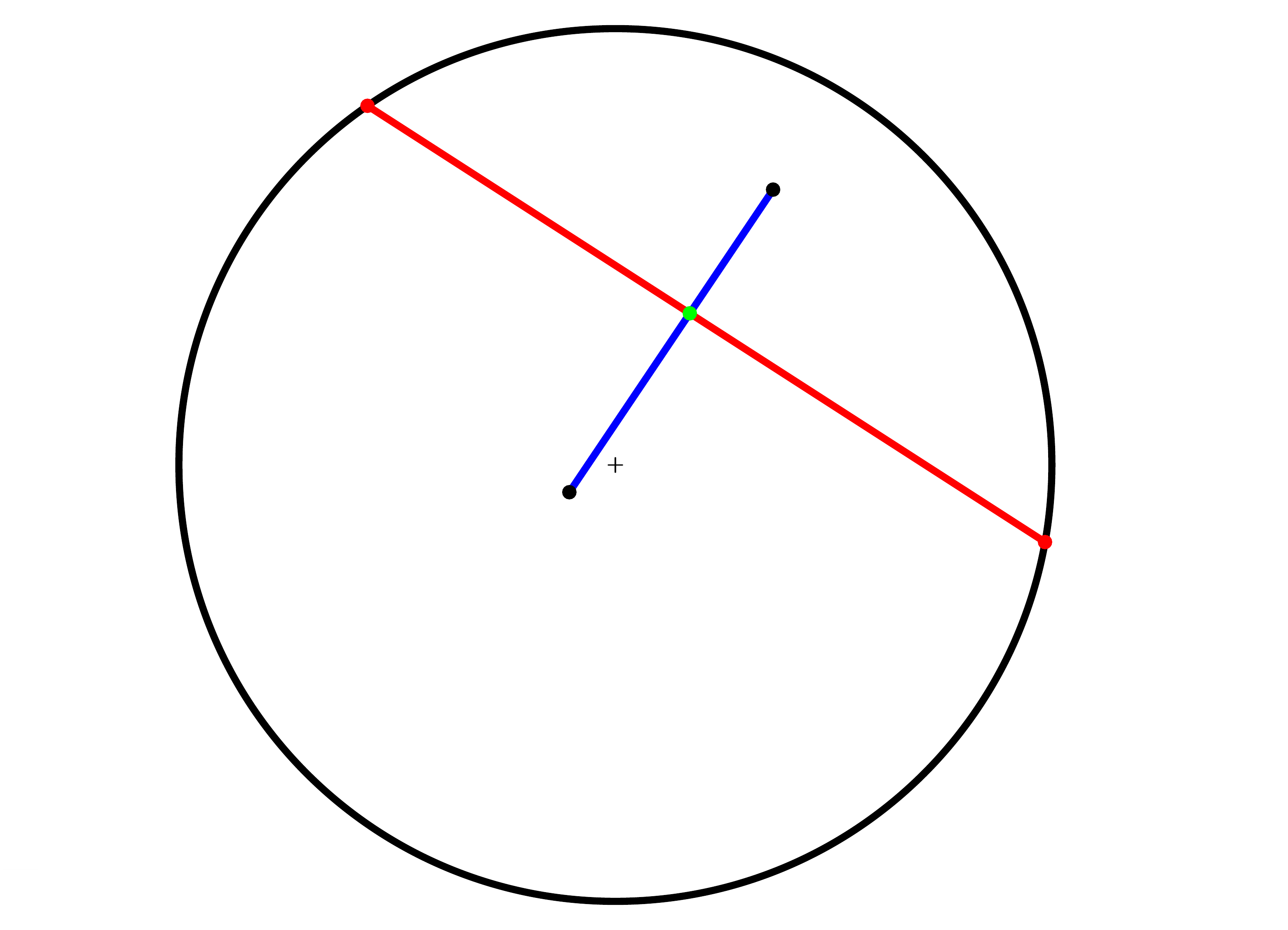}} & 
\fbox{\includegraphics[width=0.4\columnwidth]{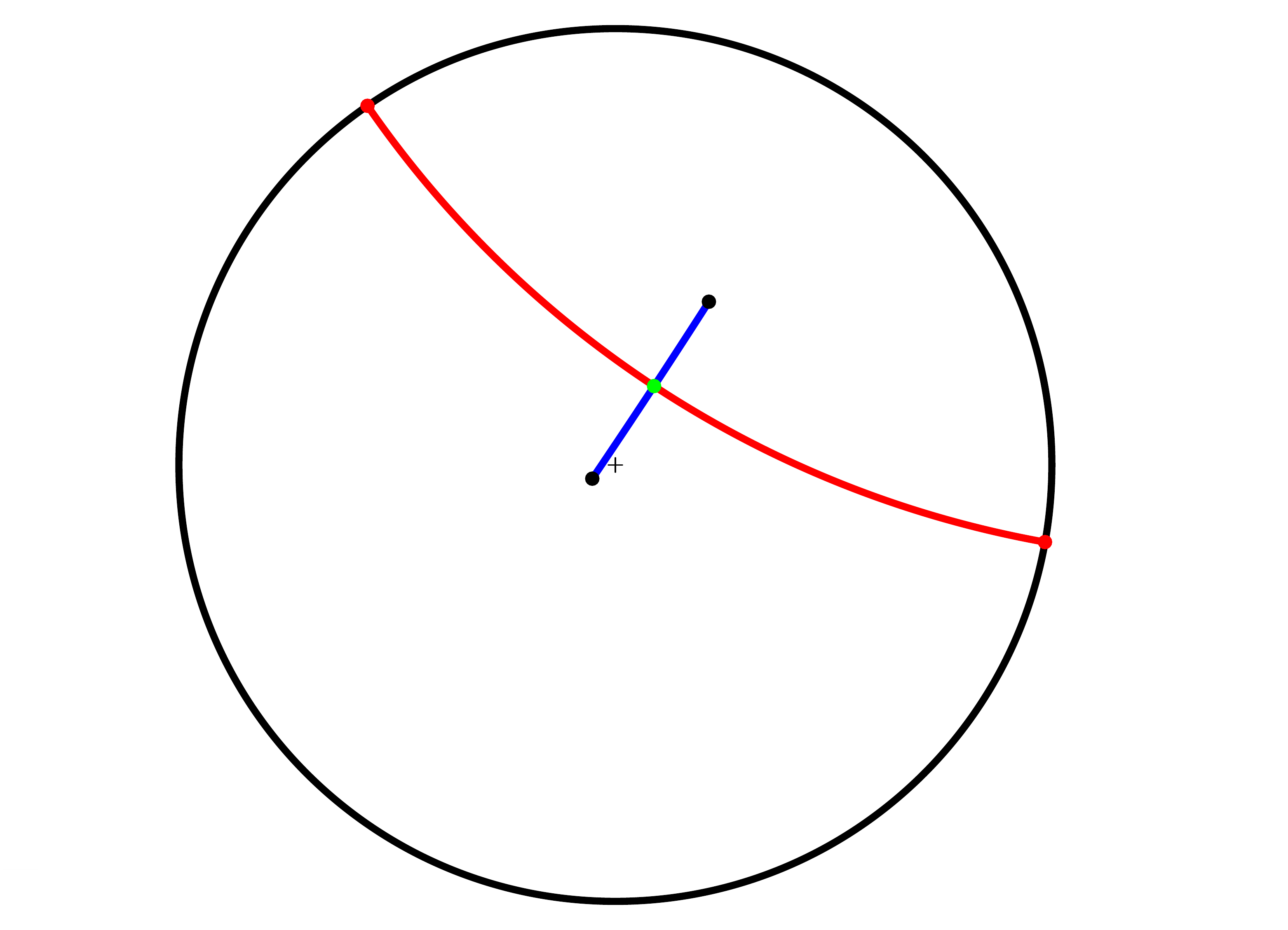}}\\
non-conformal (Klein) & conformal (Poincar\'e) \\
\fbox{\includegraphics[width=0.4\columnwidth]{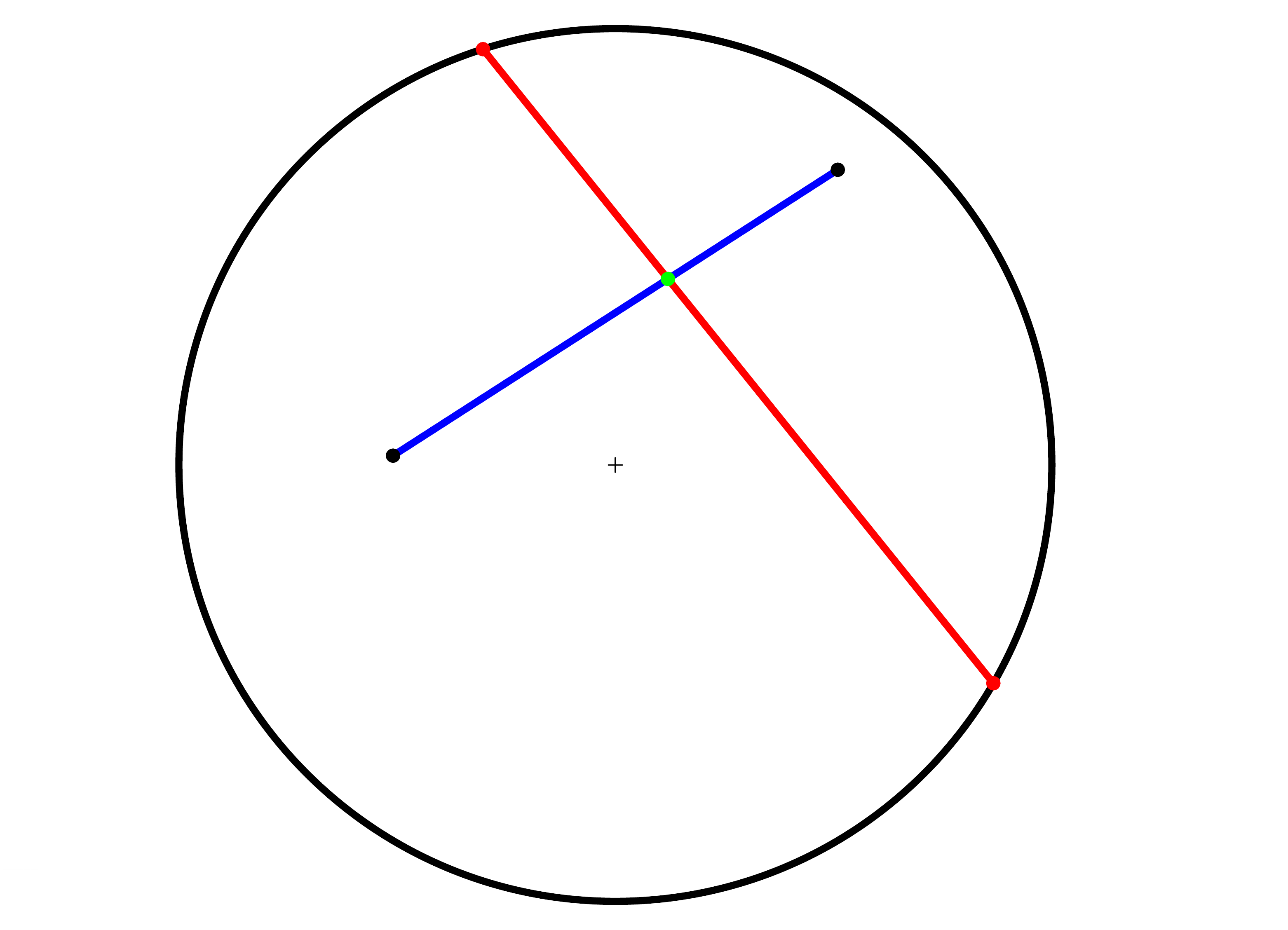}} &
\fbox{\includegraphics[width=0.4\columnwidth]{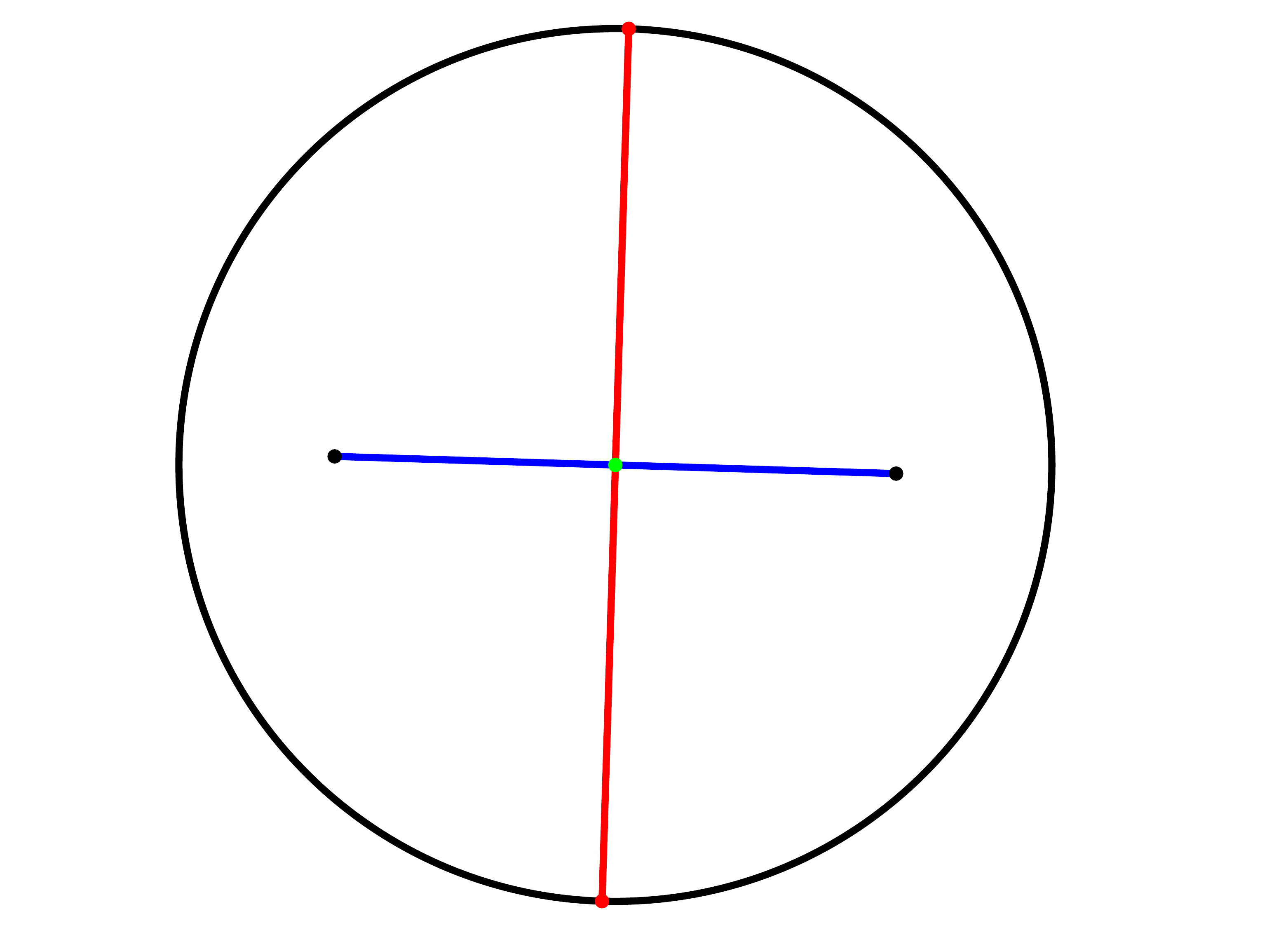}}\\
non-conformal (Klein) & conformal at the origin (Klein)
\end{tabular}

\caption{In hyperbolic geometry, the Voronoi bisector between two generators is orthogonal to the geodesic linking them. 
The top figures display a pair of (bisector,geodesic) in the Klein model (left), and the same pair in the Poincar\'e model (right).
When viewed in Klein non-conformal model, the bisector does not intersect orthogonally (with respect to the Euclidean geometry)  the geodesic (left) except when the intersection point is at the disk origin (bottom right).}%
\label{fig:hbiorthogonal}%
\end{figure}

Let us now prove that these Cauchy hyperbolic Voronoi/Delaunay structures are {\em Fisher orthogonal}:

\begin{Theorem}\label{thm:CVDortho}
The Cauchy Voronoi diagram is Fisher orthogonal to the Cauchy Delaunay complex.
\end{Theorem}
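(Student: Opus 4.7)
The plan is to reduce the claim to a basic orthogonality statement in the hyperbolic plane and then invoke a standard symmetry argument.

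First, as noted in \S\ref{sec:FR}, the Fisher information metric on $\bbH$ satisfies $g_\FR = \frac{1}{2} g_P$, a positive scalar multiple of the Poincar\'e upper half-plane metric. Angles are invariant under positive rescaling of a Riemannian metric, so $g_\FR$-orthogonality coincides with $g_P$-orthogonality. By Theorem~\ref{thm:CVD}, the Cauchy Voronoi bisector between $p_{\lambda_1}$ and $p_{\lambda_2}$ coincides with the hyperbolic Voronoi bisector of $\lambda_1$ and $\lambda_2$, and the Cauchy Delaunay edge is by definition the hyperbolic geodesic joining them. It therefore suffices to show that in $(\bbH, g_P)$, the hyperbolic bisector of two distinct points $\lambda_1, \lambda_2$ meets the hyperbolic geodesic $[\lambda_1,\lambda_2]$ orthogonally.

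Second, I would use the following symmetry argument. Let $m$ be the hyperbolic midpoint of $[\lambda_1,\lambda_2]$ and let $\sigma$ be the hyperbolic reflection that swaps $\lambda_1$ and $\lambda_2$: that is, the unique isometric involution of $(\bbH,g_P)$ with a complete geodesic fixed set and satisfying $\sigma(\lambda_1)=\lambda_2$. Because $\sigma$ is an isometry exchanging the two generators, the function $p \mapsto \rho_P(p,\lambda_1) - \rho_P(p,\lambda_2)$ changes sign under $\sigma$, so its zero set (the bisector) equals $\mathrm{Fix}(\sigma)$ and is itself a complete geodesic. At $m$, the differential $d\sigma_m$ is an orthogonal involution of $T_m \bbH$ that reverses the tangent to $[\lambda_1,\lambda_2]$ (since it swaps its endpoints) and fixes the tangent to the bisector. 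In a 2-dimensional inner-product space, the $\pm 1$-eigenspaces of an orthogonal involution are mutually orthogonal, so the two geodesics cross orthogonally at $m$.

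A concrete realization makes the picture transparent: after applying a M\"obius isometry one may assume $\lambda_1 = (0, e^{-r})$ and $\lambda_2 = (0, e^{r})$ with $r = \tfrac{1}{2}\rho_P(\lambda_1,\lambda_2)$, so that $[\lambda_1,\lambda_2]$ is the positive imaginary axis, $\sigma:z \mapsto 1/\bar{z}$ realizes the swap, and $\mathrm{Fix}(\sigma)$ is the Euclidean unit semicircle, which meets the imaginary axis perpendicularly at $(0,1)$; conformality of the Poincar\'e model upgrades this Euclidean right angle to a hyperbolic right angle. I expect no genuine obstacle; the only care needed is to choose the reflection rather than the elliptic rotation by $\pi$ about $m$ (both isometries swap $\lambda_1$ and $\lambda_2$, but only the reflection fixes the entire bisector pointwise and thus realizes it as its fixed locus).
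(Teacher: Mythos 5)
Your proof is correct, but it travels a genuinely different route from the paper's. The paper works in the Klein disk model: it writes the explicit affine equation of the Klein bisector, applies a M\"obius isometry to place the two generators symmetrically about the disk origin so that the bisector becomes the Euclidean line $\lambda^\top\lambda_1=0$ and the geodesic becomes the chord through the origin in direction $\lambda_1$, and then invokes the fact that the Klein model is conformal \emph{at the origin only} to convert the visible Euclidean right angle into a hyperbolic one. You instead (i) make explicit the observation the paper leaves implicit, that $g_\FR=\tfrac12 g_P$ is a constant conformal rescaling so Fisher orthogonality is hyperbolic orthogonality, and (ii) prove the hyperbolic orthogonality intrinsically via the reflection $\sigma$ swapping the generators: the bisector is $\mathrm{Fix}(\sigma)$, and at the midpoint the differential $d\sigma_m$ is an orthogonal involution whose $+1$-eigenspace is tangent to the bisector and whose $-1$-eigenspace is tangent to the connecting geodesic, forcing orthogonality. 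Your approach is model-free and generalizes verbatim to $d$-variate Cauchy families (hyperbolic space of any dimension), whereas the paper's computation is tied to the 2D Klein coordinates; the paper's version, on the other hand, doubles as a constructive recipe for the affine bisectors used in its power-diagram algorithm. One small logical shortcut: anti-invariance of $p\mapsto\rho_P(p,\lambda_1)-\rho_P(p,\lambda_2)$ under $\sigma$ only shows that the zero set contains $\mathrm{Fix}(\sigma)$ and is $\sigma$-invariant, not that it equals $\mathrm{Fix}(\sigma)$ (an odd function can vanish off the fixed locus); this is harmless here because your normalized computation identifies the bisector exactly as the unit semicircle, but the equality should be justified by that computation (or by the standard fact that the hyperbolic bisector of two points is a complete geodesic) rather than by the sign change alone.
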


\begin{proof}
It is enough to prove that the corresponding hyperbolic geodesic $\gamma(p_{\lambda_1},p_{\lambda_2})$ is orthogonal to the bisector 
$\Bi(p_{\lambda_1}:p_{\lambda_2})$.
The distance in the Klein disk model is
\begin{equation}
\rho_{\mathrm{Klein}}(p,q) = \rho_{K}(p,q)   \eqdef 
\operatorname{arccosh}\left( \frac{1-\langle p,q\rangle}{\sqrt{\left(1-\|p\|^{2}\right)\left(1-\|q\|^{2}\right)}} \right).
\end{equation}
The equation of the hyperbolic bisector in the Klein disk model~\cite{HVD-2010} is
\begin{equation}
\mathrm{Bi}_{\rho_{\mathrm{Klein}}}({\lambda_1}:{\lambda_2}) = \left\{ \lambda\in\mathbb{D} \ :\ \lambda^\top \left(\sqrt{1-\|\lambda_1\|^{2}} \lambda_2-\sqrt{1-\|\lambda_2\|^{2}} \lambda_1\right) +\sqrt{1-\|\lambda_2\|^{2}}-\sqrt{1-\|\lambda_1\|^{2}}=0\right\} . 
\end{equation}

Using a M\"obius transformation~\cite{HVD-2010} (i.e., a hyperbolic ``rigid motion''), we may consider without loss of generality 
that $p_{\lambda_1}=-p_{\lambda_2}$.
It follows that the bisector equation writes simply as
\begin{equation}\label{eq:originbi}
\mathrm{Bi}_{{\rho_\mathrm{Klein}}} = \left\{ \lambda \st 2\sqrt{1-\|p_{\lambda_1}\|} \lambda^\top \lambda_1=0\right\}.
\end{equation}
Since the Klein disk model is conformal at the origin, we deduce from Eq.~\ref{eq:originbi} that 
we have $\gamma(p_{\lambda_1},p_{\lambda_2})\perp \Bi(p_{\lambda_1}:p_{\lambda_2})$.
\end{proof}

Figure~\ref{fig:hbiorthogonal} displays two bisectors with their corresponding geodesics in the Klein model.
We check that the Euclidean angles are deformed when the intersection point is not at the disk origin.
Appendix~\ref{sec:hpd} provides further details for the efficient construction of the hyperbolic Voronoi diagram in the Klein model.

\begin{Remark}
The hyperbolic Cauchy Voronoi diagram can be used for {\em classification tasks} in statistics as originally motivated by C.R. Rao in his celebrated paper~\cite{Rao-1945}:
Let $p_{\lambda_1},\ldots,p_{\lambda_n}$ be $n$ Cauchy distributions, and $x_1,\ldots, x_s$ be $s$ identically and independently samples drawn from a Cauchy distribution $p_\lambda$. We can estimate $\hat{\lambda}$ the location-scale parameters from the $s$ samples~\cite{haas1970inferences}, and then decide the multiple test hypothesis $H_i: p_\lambda=p_{\lambda_i}$ by choosing the hypothesis $H_i$ such that 
$\rho_\FR(p_{\lambda_i}, p_\lambda)\leq \rho_\FR(p_{\lambda_j}, p_\lambda)$ for all $j\in\{1,\ldots, n\}$.
This classification task amounts to perform a nearest neighbor query in the Fisher-Rao hyperbolic Cauchy Voronoi diagram. 
Hypothesis testing for comparing location parameters based on Rao's distance is investigated in~\cite{gimenez2016geodesic}.
\end{Remark}

Figure~\ref{fig:hvd300} displays the hyperbolic Voronoi Cauchy diagram induced by $300$ Cauchy distribution generators.

\begin{figure}%
\centering
\fbox{\includegraphics[width=0.30\columnwidth]{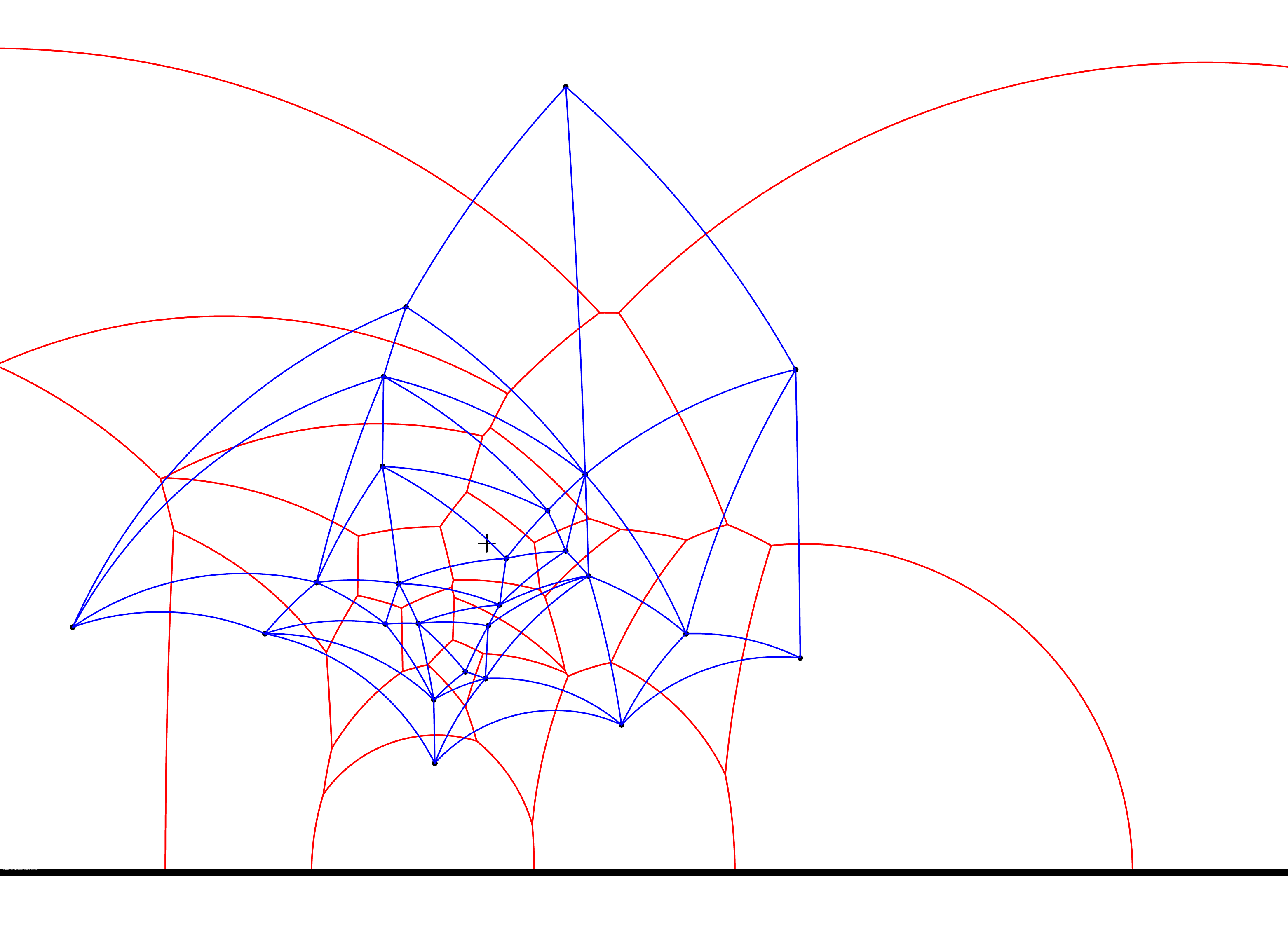}} 
\fbox{\includegraphics[width=0.30\columnwidth]{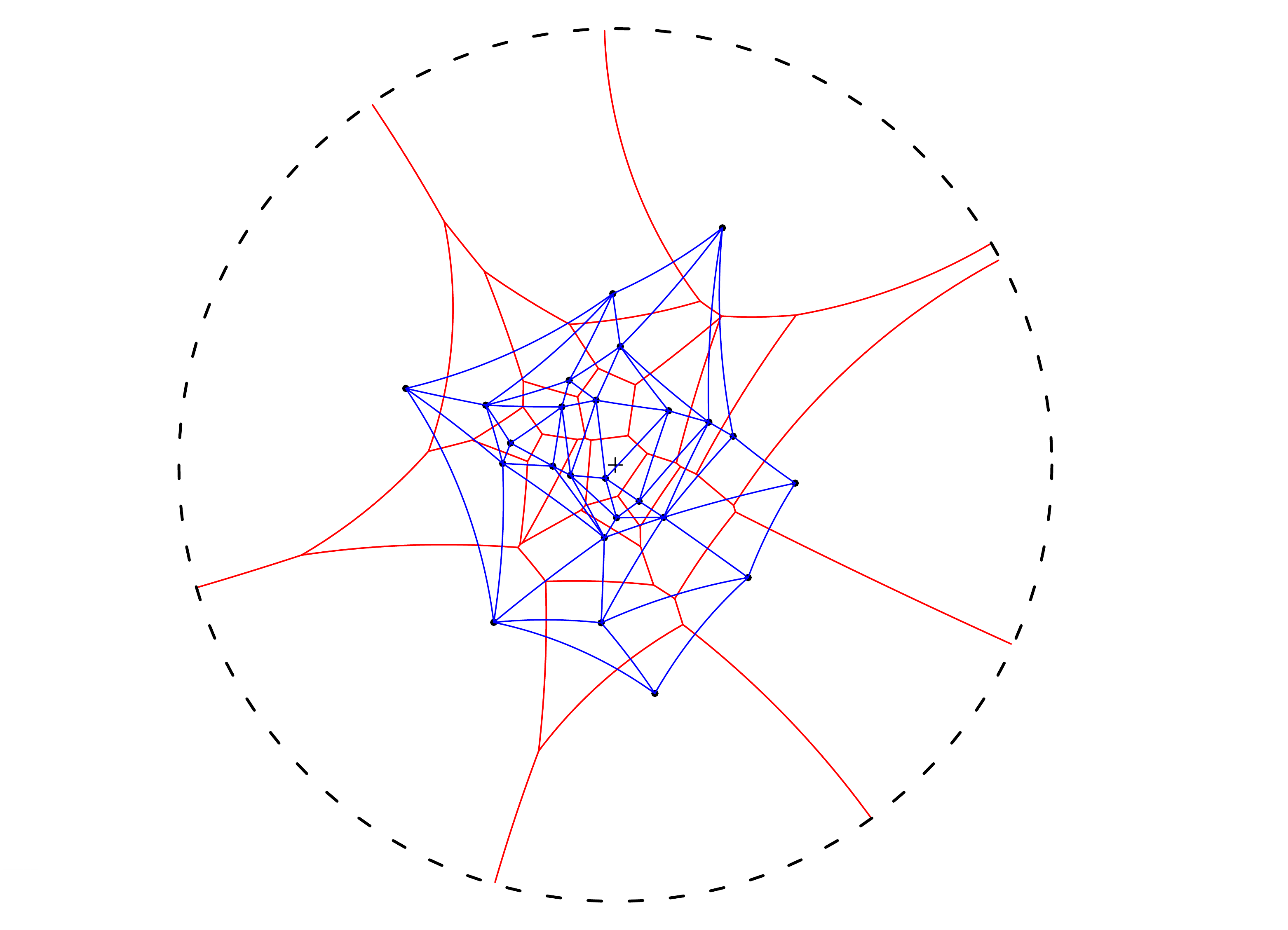}}
\fbox{\includegraphics[width=0.30\columnwidth]{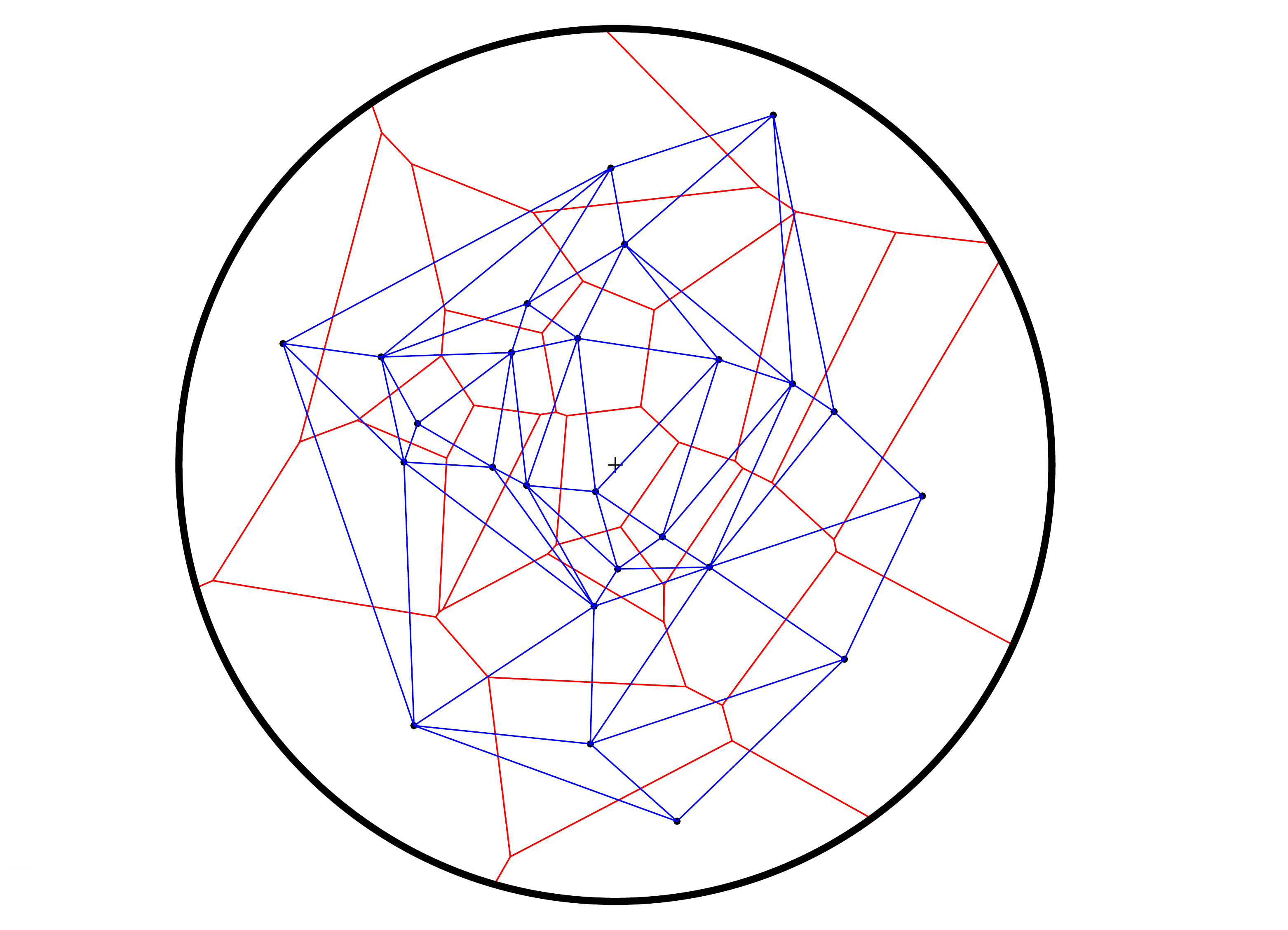}}\\
\fbox{\includegraphics[width=0.30\columnwidth]{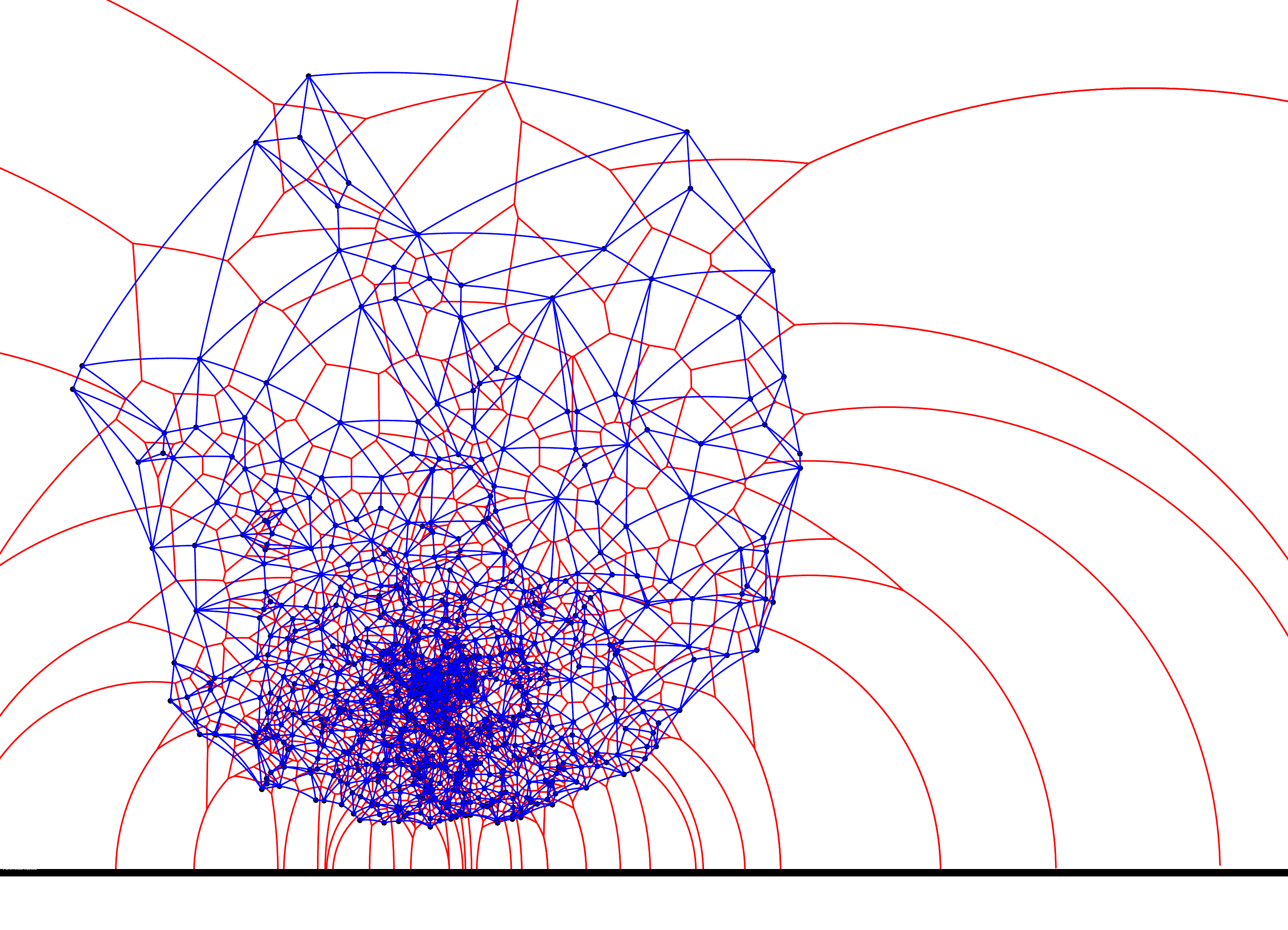}} 
\fbox{\includegraphics[width=0.30\columnwidth]{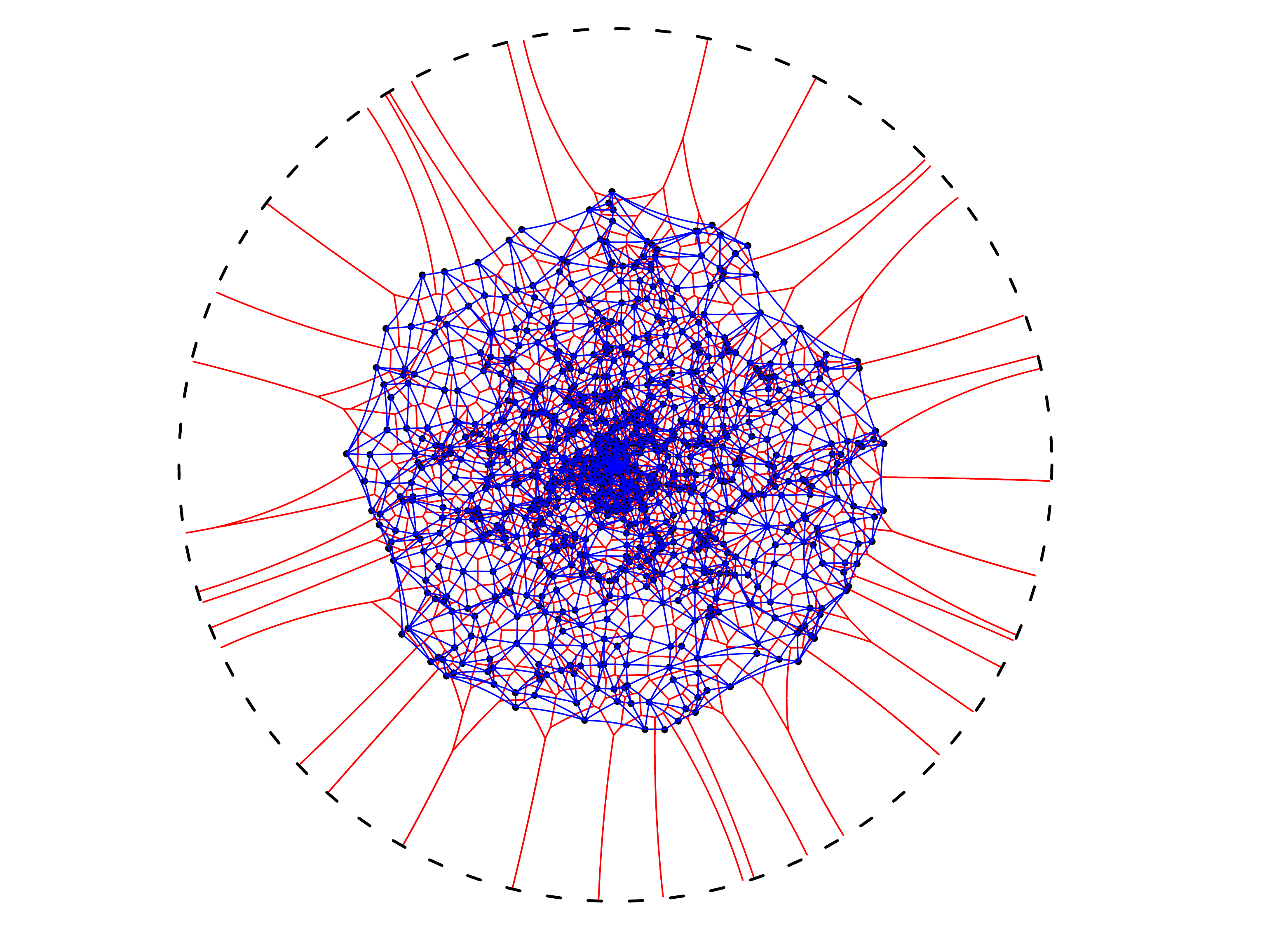}}
\fbox{\includegraphics[width=0.30\columnwidth]{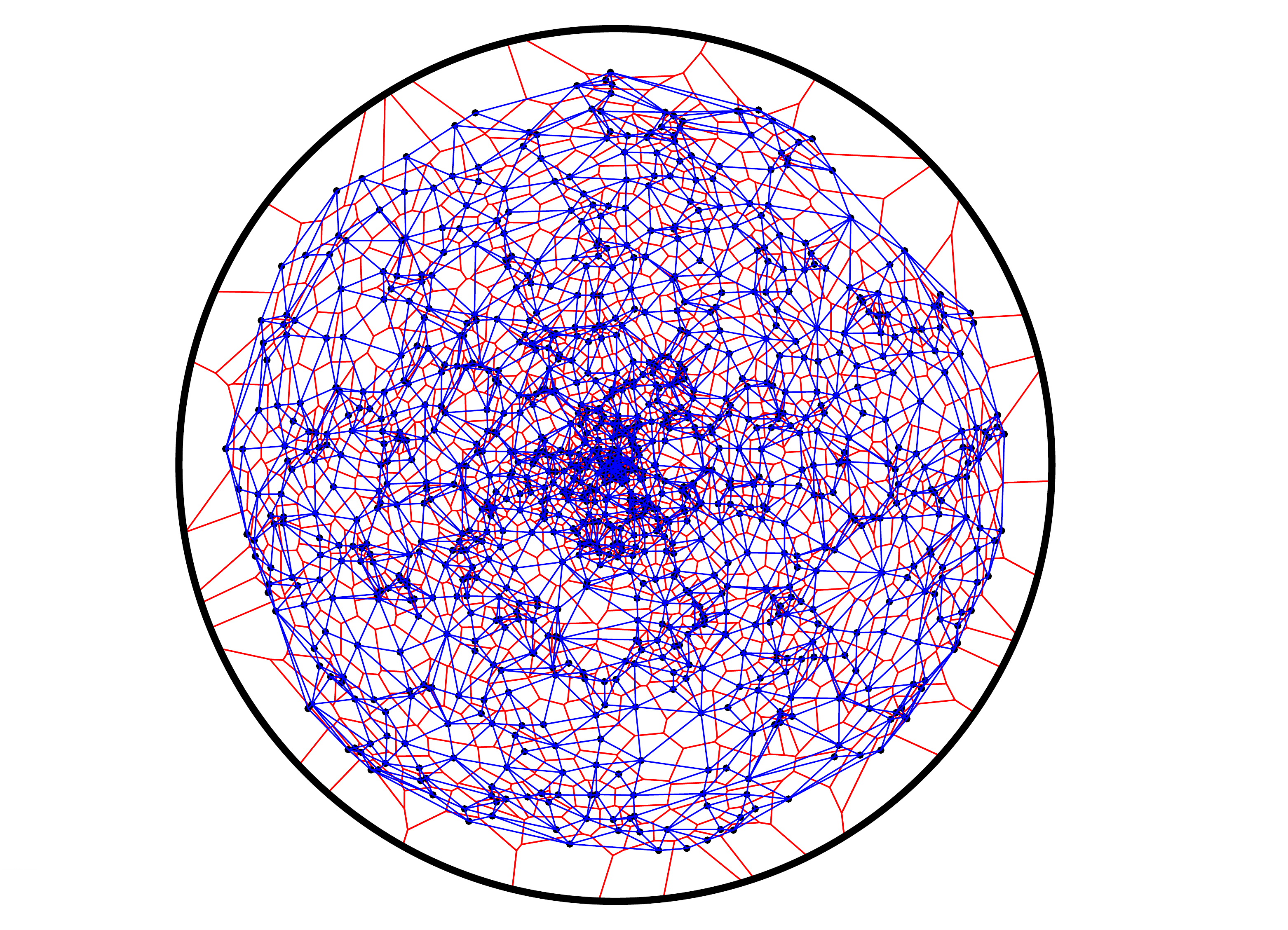}}\\
\fbox{\includegraphics[width=0.3\columnwidth]{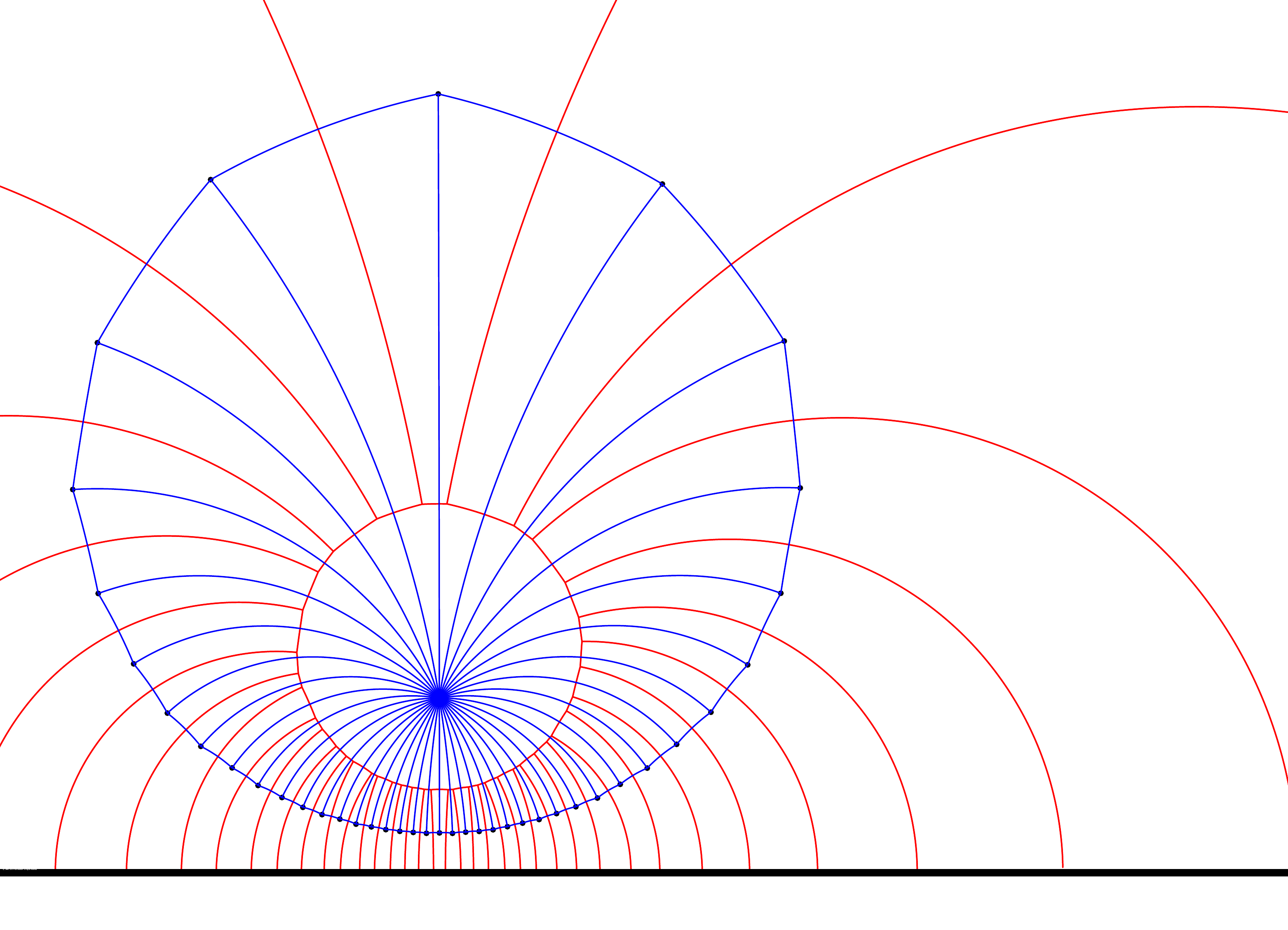}}
\fbox{\includegraphics[width=0.3\columnwidth]{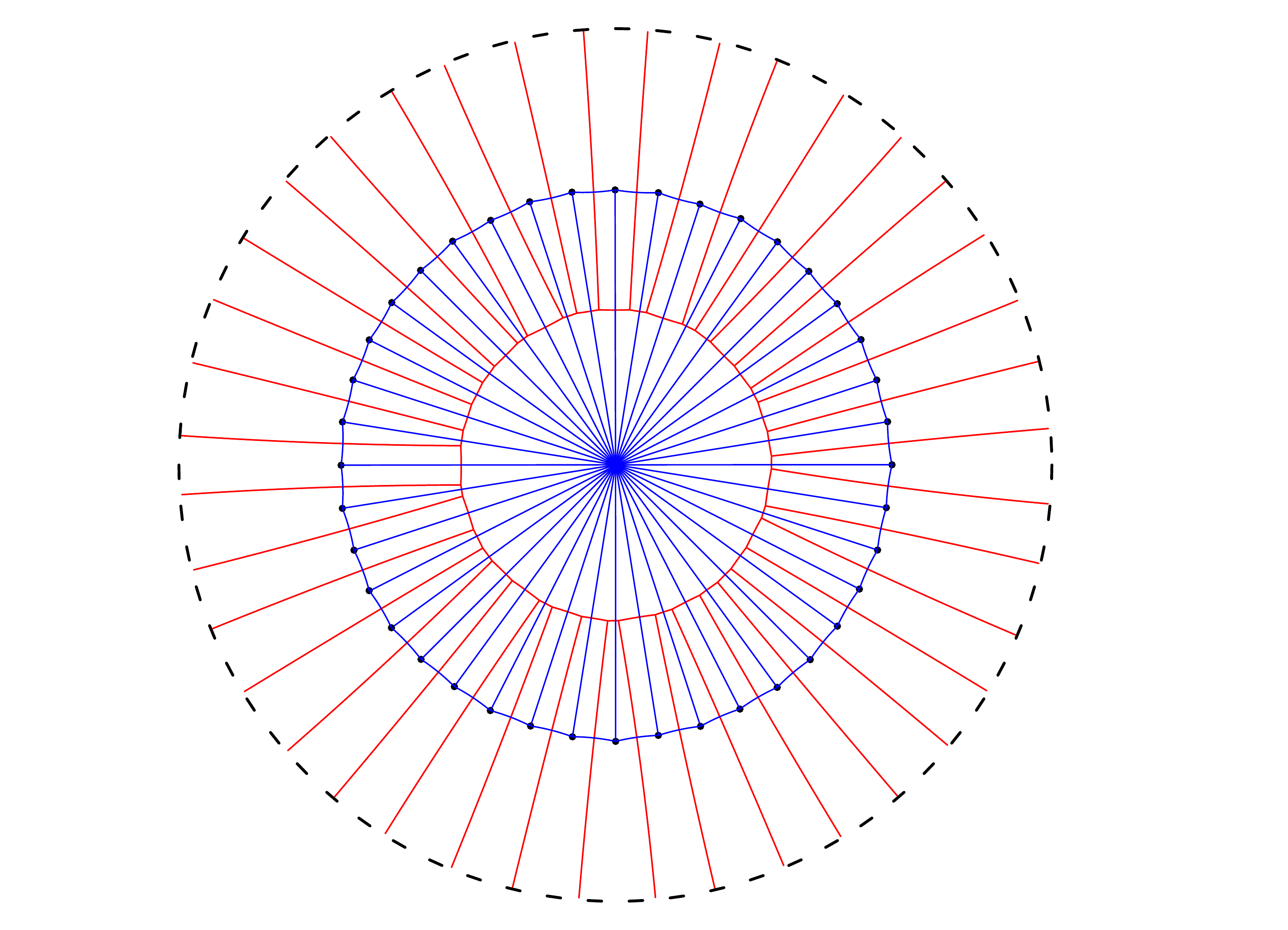}}
\fbox{\includegraphics[width=0.3\columnwidth]{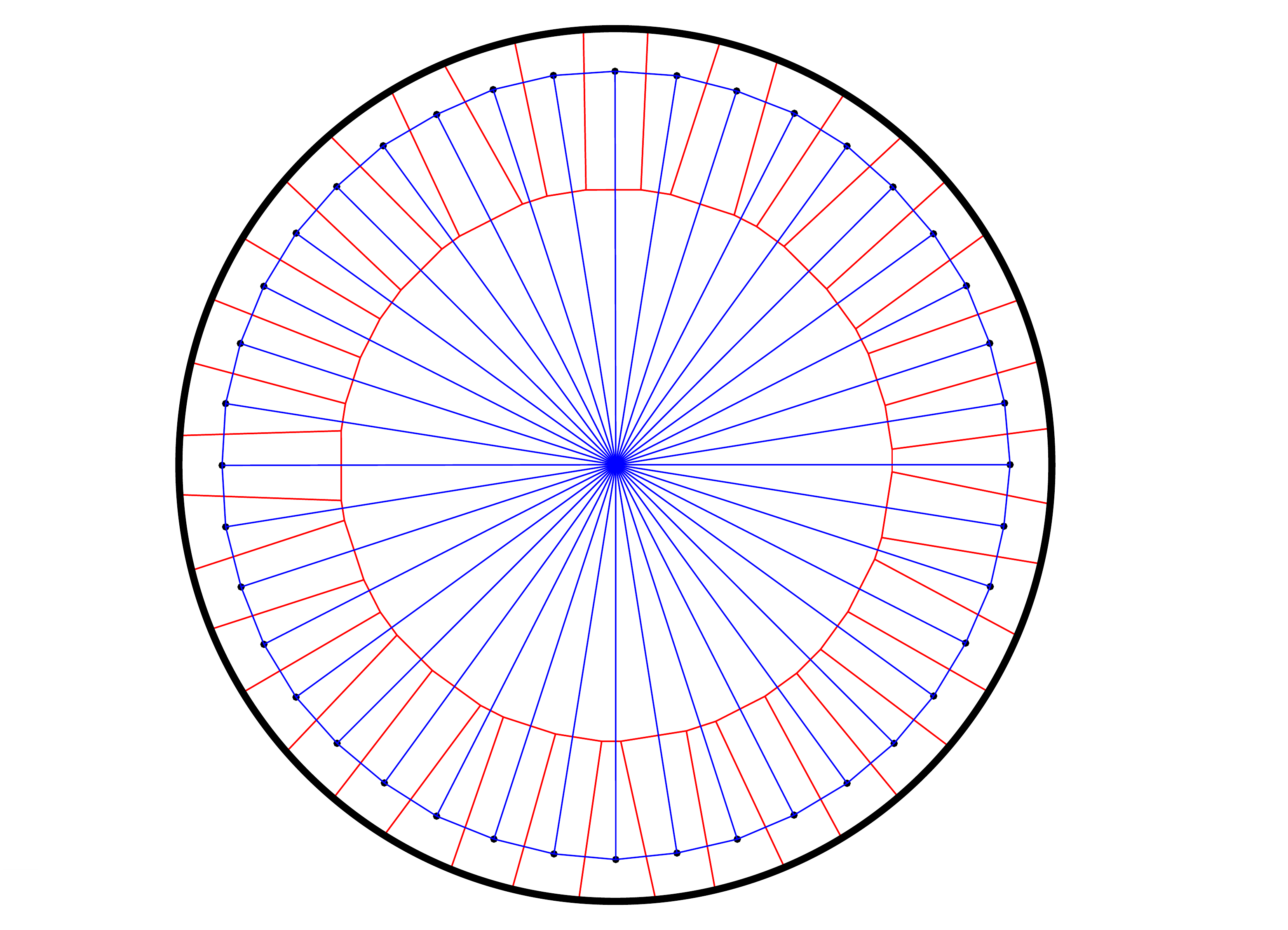}}

\caption{Equivalent hyperbolic Voronoi diagram and dual Delaunay complex of  a set of  Cauchy distributions in the Poincar\'e upper plane (left), the Poincar\'e disk model (middle), and the Klein disk model (right). Top row figures for $n=24$ Cauchy distributions, middle row figures for $n=1024$ distributions and bottom row figures for a quasi-regular set of $n=25$ Cauchy distributions.}%
\label{fig:hvd300}%
\end{figure}

Notice that it is possible to construct a set of points such that all hyperbolic Voronoi cells for that point set are unbounded.
See Figure~\ref{fig:unboundedhdv} for such an example.

\begin{figure}%
\centering
\includegraphics[width=0.4\columnwidth]{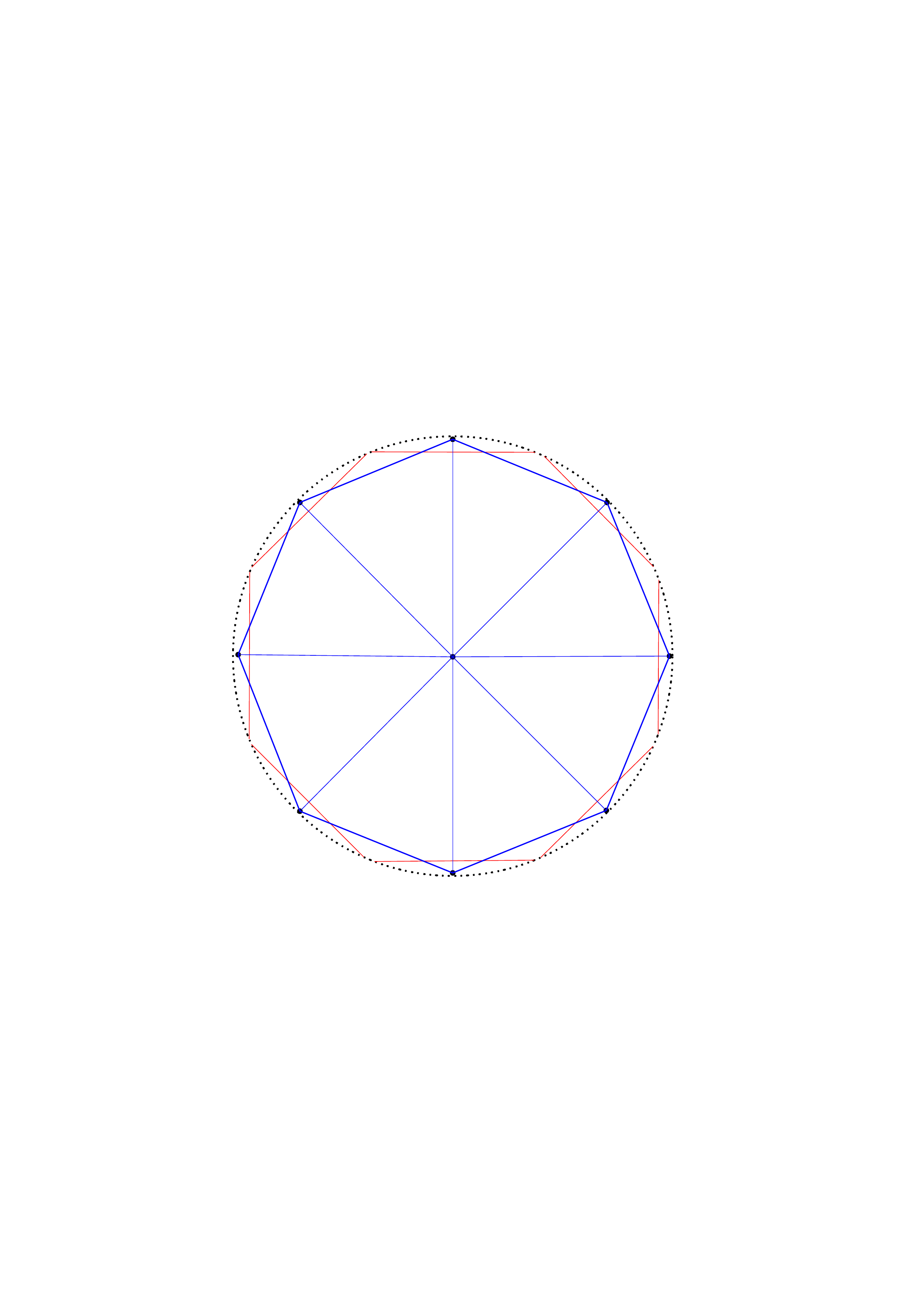}\\
\caption{A hyperbolic Voronoi diagram with all unbounded Voronoi cells.}%
\label{fig:unboundedhdv}%
\end{figure}

The ordinary Euclidean Delaunay triangulation satisfies the {\em empty sphere} property~\cite{delaunay1934sphere,BY-1998}:
That is the {\em circumscribing spheres} passing through the vertices of the Delaunay triangles of the Delaunay complex are empty of any other Voronoi site.
This property still holds for the hyperbolic Delaunay complex which is obtained by a filtration of the ordinary  Euclidean Delaunay triangulation in~\cite{bogdanov2013hyperbolic}.
A {\em hyperbolic ball} in the Poincar\'e conformal disk model or the upper plane model has the shape of a Euclidean ball with {\em displaced center}~\cite{tanuma2011revisiting}. Figure~\ref{fig:hyperbolicemptysphere} displays the Delaunay complex with the empty sphere property in the Poincar\'e and Klein disk models. The centers of these circumscribing spheres are located at the $T$-junctions of the Voronoi diagrams.

\begin{figure}%
\centering

\begin{tabular}{ll}
\fbox{\includegraphics[width=0.4\columnwidth]{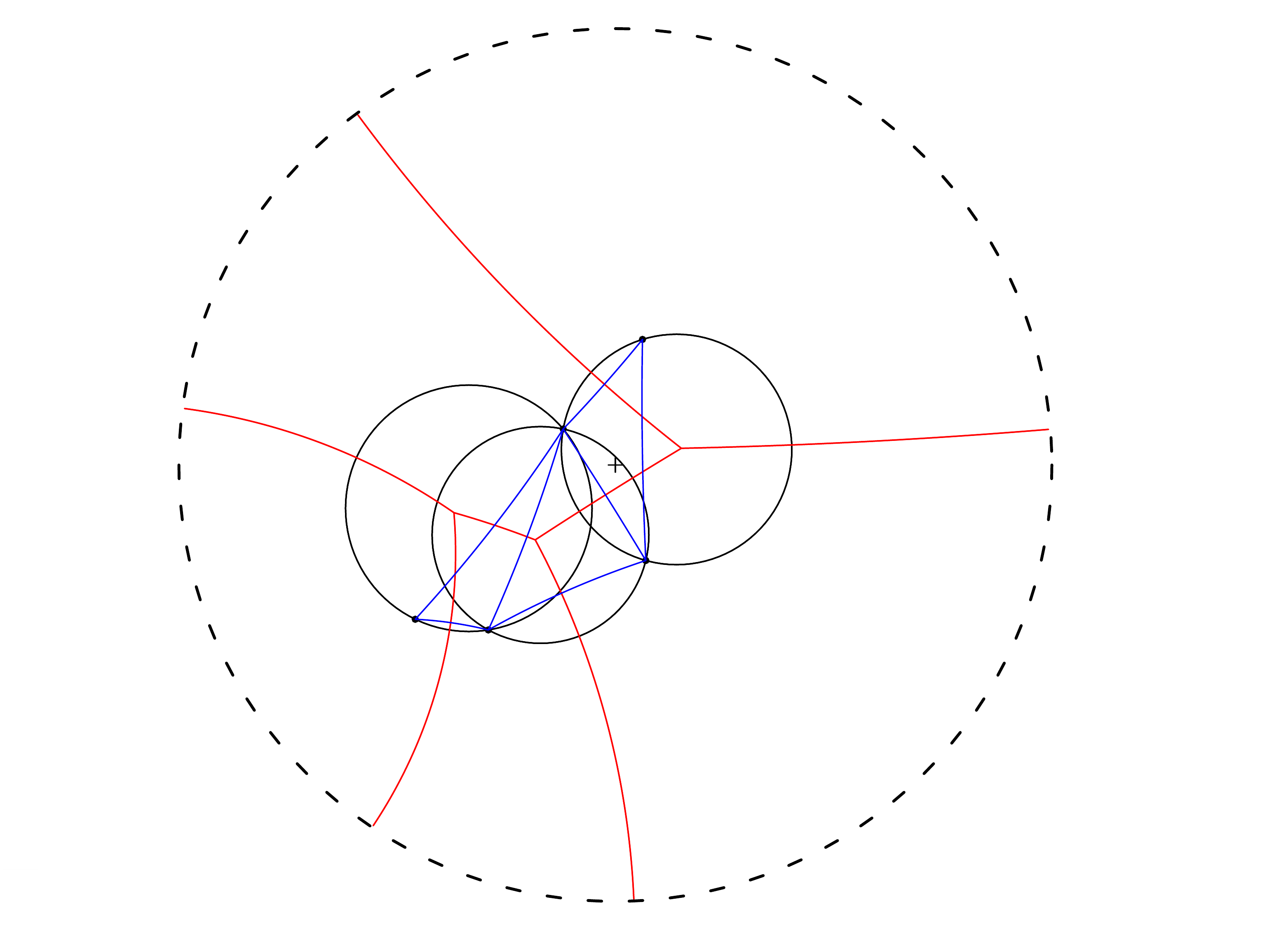}} &
\fbox{\includegraphics[width=0.4\columnwidth]{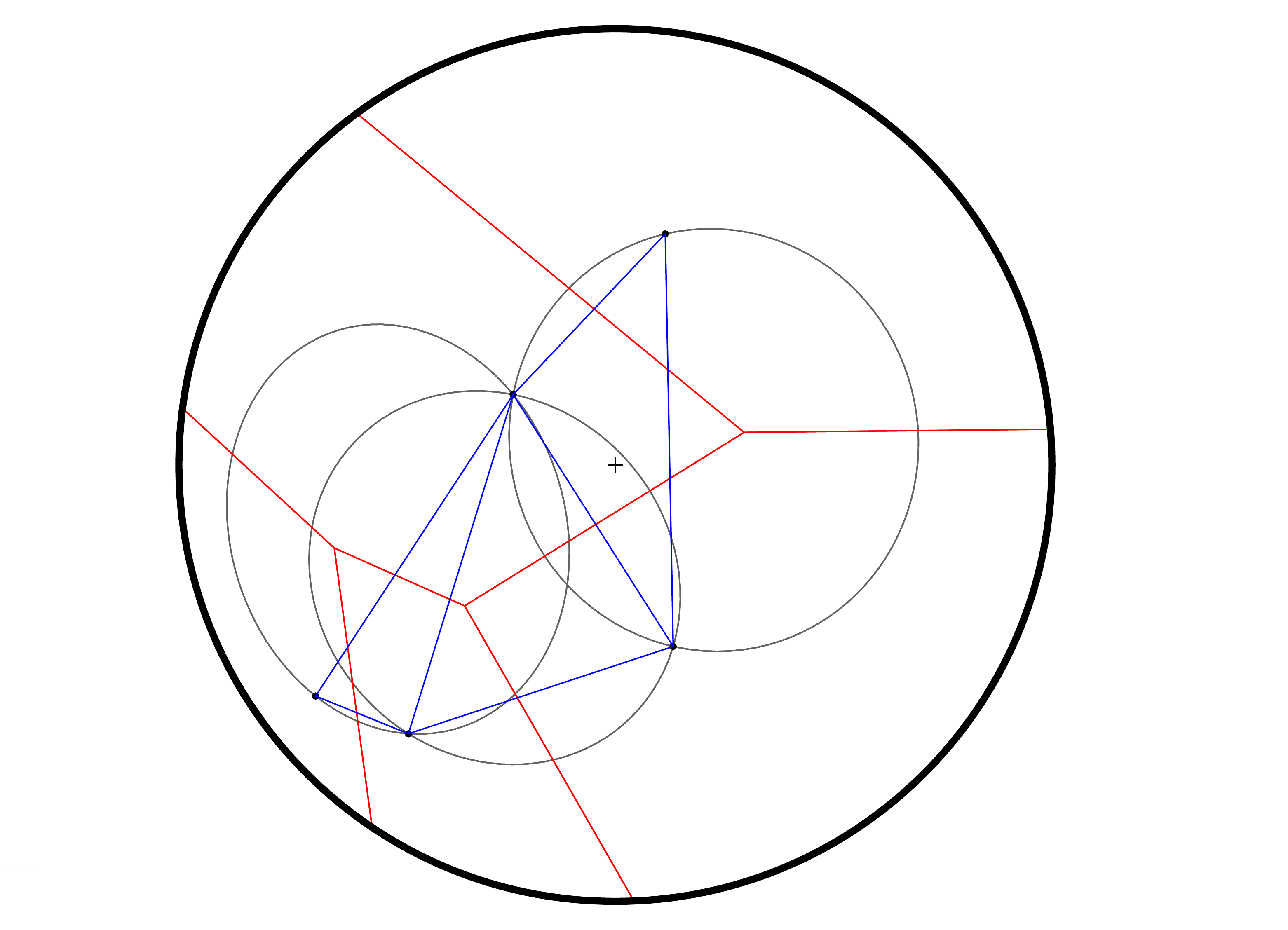}} \\
\fbox{\includegraphics[width=0.4\columnwidth]{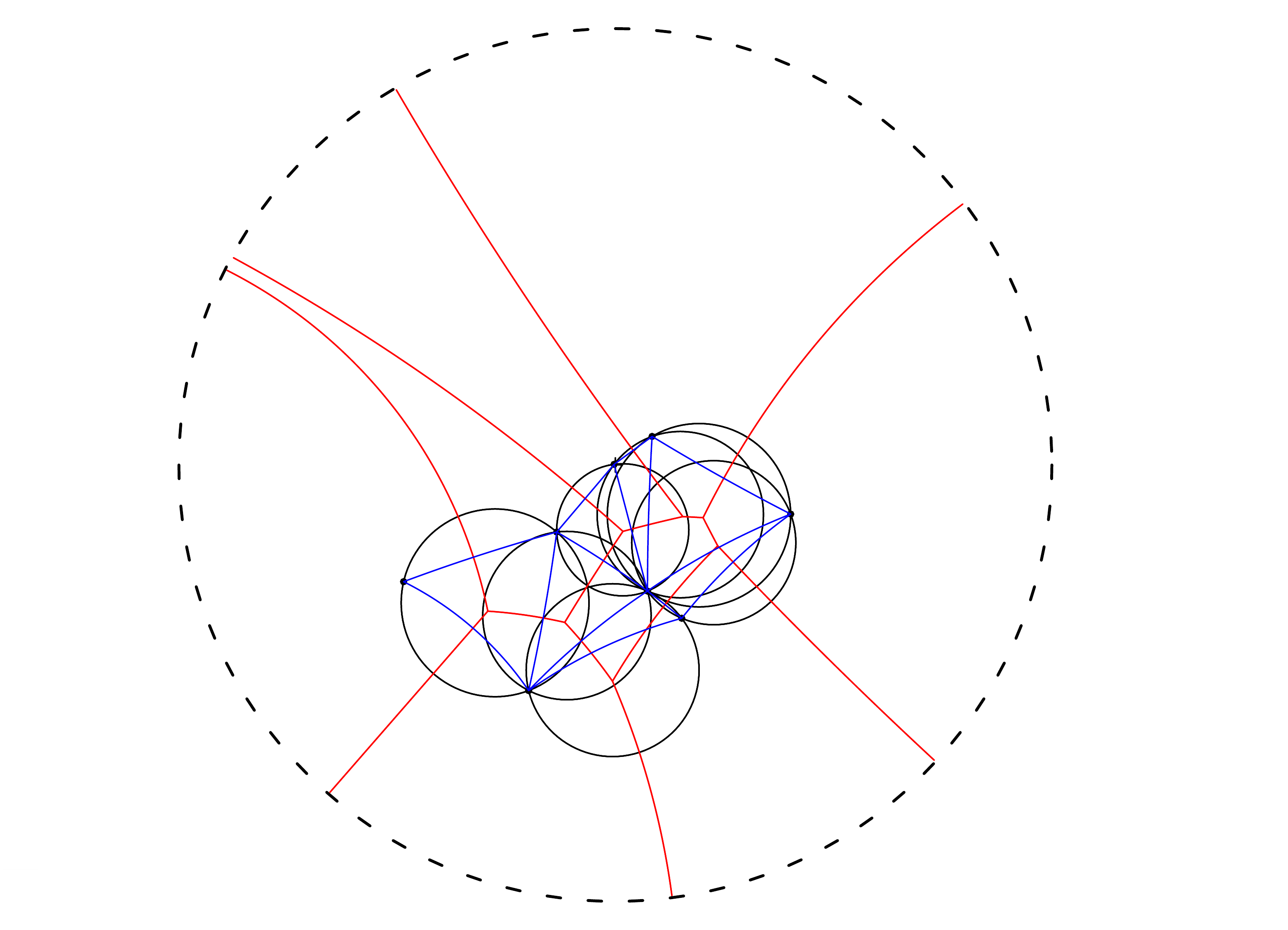}} &
\fbox{\includegraphics[width=0.4\columnwidth]{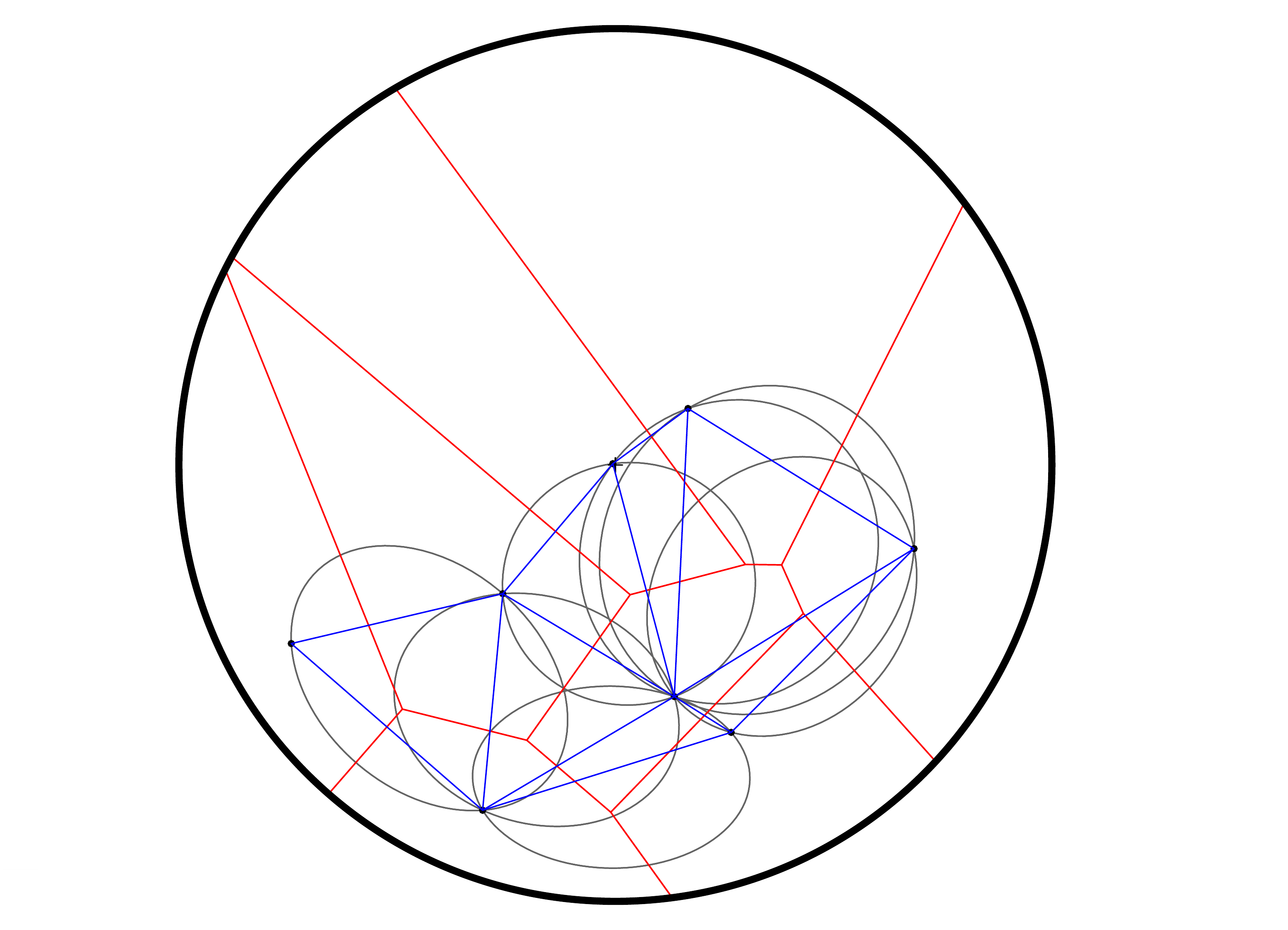}} \\
\fbox{\includegraphics[width=0.4\columnwidth]{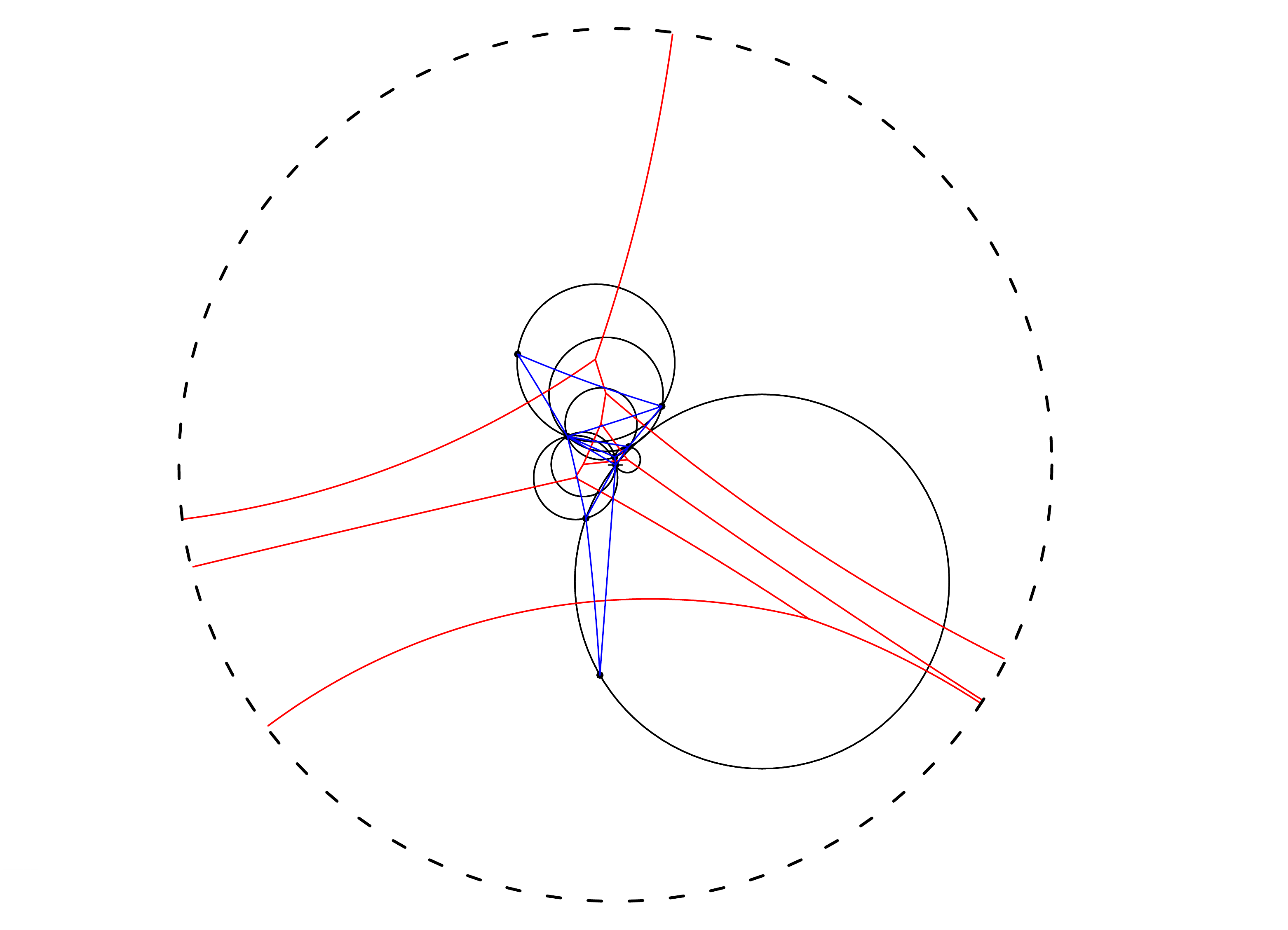}} &
\fbox{\includegraphics[width=0.4\columnwidth]{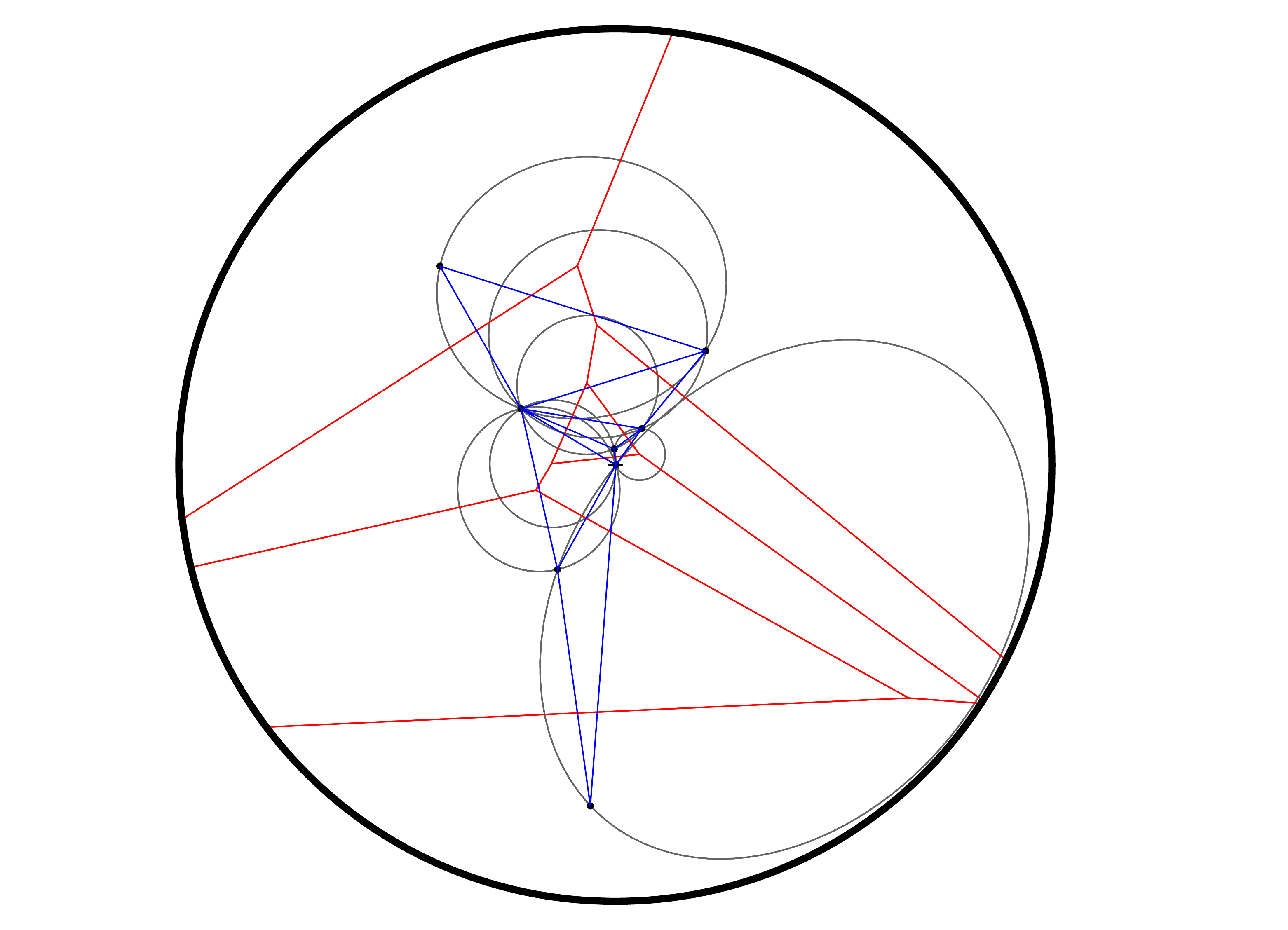}} \\
\end{tabular}

\caption{Delaunay triangles of the hyperbolic Delaunay complex satisfy the empty circumscribing sphere property. The empty sphere centers are located on the Voronoi $T$-junction vertices. The hyperbolic spheres are displayed as ordinary Euclidean sphere (with displaced center) in the Poincar\'e model (left column) and as ellipsoids (with displaced center) in the Klein model (right column). The centers of the empty hyperbolic spheres are located at the Voronoi $T$-junctions.}%

\label{fig:hyperbolicemptysphere}%
\end{figure}

\subsection{The dual Voronoi diagrams on the Cauchy dually flat manifold}\label{sec:dualVor}

The dual Cauchy Voronoi diagrams with respect to the flat divergence $D_\flat$ (and dual reverse flat divergence $D_\flat^*$ which corresponds to a dual Bregman-Tsallis divergence) of \S\ref{sec:dfs} amount to 
calculate 2D dual Bregman Voronoi diagrams~\cite{BVD-2010}.
We get the following dual bisectors:
The primal bisector with respect to the dual flat divergence is:
\begin{eqnarray}
\mathrm{Bi}_{D_\flat}(p_{\lambda_1}:p_{\lambda_2}) &=& \left\{ p_\lambda \ :\  
D_\flat[p_{\lambda_1}:p_{\lambda}]= D_\flat[p_{\lambda_2}:p_{\lambda}] \right\},\\
&=& \left\{ \lambda \ :\  \delta(l_1,s_1;l,s)=\delta(l_2,s_2;l,s)  \right\}.
\end{eqnarray}
Thus this primal bisector with respect to the flat divergence corresponds to the hyperbolic bisector of the 
Fisher-Rao distance/chi square/ KL divergences:
\begin{equation}
\mathrm{Bi}_{D_\flat}(p_{\lambda_1}:p_{\lambda_2})=\mathrm{Bi}_{\rho_\FR}(p_{\lambda_1}:p_{\lambda_2})
=\mathrm{Bi}_{D_\KL}(p_{\lambda_1}:p_{\lambda_2})
=\mathrm{Bi}_{D_{\chi^2}}(p_{\lambda_1}:p_{\lambda_2}).
\end{equation}

The dual bisector with respect to the dual flat divergence (reverse Bregman-Tsallis divergence) is:
\begin{eqnarray}
\mathrm{Bi}_{D_\flat}^*(p_{\lambda_1}:p_{\lambda_2}) &=& \left\{ p_\lambda \ :\  
D_\flat[p_{\lambda}:p_{\lambda_1}]=D_\flat[p_{\lambda}:p_{\lambda_2}] \right\},\\
&=& \left\{ \lambda \ :\   \|\lambda-\lambda_1\| = \| \lambda-\lambda_2 \| \right\}.
\end{eqnarray}
That is, the dual bisector corresponds to an ordinary Euclidean bisector:
\begin{equation}
\mathrm{Bi}_{D_\flat}^*(p_{\lambda_1}:p_{\lambda_2})=\mathrm{Bi}_{\rho_E}(p_{\lambda_1},p_{\lambda_2}).
\end{equation}

Notice that $\mathrm{Bi}_{D_\flat}^*(p_{\lambda_1}:p_{\lambda_2}) =\mathrm{Bi}_{D_\flat^*}(p_{\lambda_1}:p_{\lambda_2})$.

To summarize, one primal bisector coincides with the Fisher-Rao bisector while the dual bisector amounts to the ordinary Euclidean bisector. 

\begin{Theorem}\label{thm:vordfs}
The dual Cauchy Voronoi diagrams with respect to the flat divergence can be calculated efficiently in $\Theta(n\log n)$-time.
\end{Theorem}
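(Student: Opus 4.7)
The plan is to derive the $\Theta(n\log n)$ bound directly from the bisector identifications established immediately before the theorem, combined with optimal planar Voronoi algorithms. Both dual Cauchy Voronoi diagrams turn out to be, as cell complexes on the location-scale parameter space $\bbH\subset\bbR^2$, instances of well-studied planar diagrams.

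First I would handle the primal diagram $\Vor_{D_\flat}$. By Theorem~\ref{thm:BDqgaussian} one has $D_\flat[p_{\lambda_i}:p_\lambda]=2\pi s\,\delta(\lambda_i,\lambda)$, and the prefactor $2\pi s$ depends only on the query point $\lambda=(l,s)$, so it cancels in every cell inequality $D_\flat[p_{\lambda_i}:p_\lambda]\leq D_\flat[p_{\lambda_j}:p_\lambda]$. Each primal cell therefore reduces to a hyperbolic Voronoi cell on the location-scale parameters, and the diagram coincides with the 2D hyperbolic Voronoi diagram, which is constructible in optimal $O(n\log n)$ time by the algorithms of~\cite{HVD-2010,HVD-2014} recalled in Appendix~\ref{sec:hpd}.

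Second, for $\Vor_{D_\flat}^*=\Vor_{D_\flat^*}$, Theorem~\ref{thm:BDqgaussian} gives $D_\flat[p_\lambda:p_{\lambda_i}]=(\pi/s)\|\lambda-\lambda_i\|_2^2$; again the prefactor $\pi/s$ depends solely on $\lambda$ and cancels across all sites, so each dual cell is an ordinary Euclidean Voronoi cell on $\{\lambda_1,\ldots,\lambda_n\}\subset\bbR^2$. The standard Fortune sweep or divide-and-conquer algorithm~\cite{BY-1998,VorOkabe-2009} then yields the full diagram in $O(n\log n)$ time.

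The matching $\Omega(n\log n)$ lower bound is inherited in the algebraic decision tree model from the classical reduction of sorting to planar Voronoi diagrams (lift $x_1,\ldots,x_n$ to $(x_i,x_i^2)$ and read the sorted order off the unbounded cells); the same reduction applies to the hyperbolic diagram. I do not anticipate a real obstacle: the only point deserving care is that the positive prefactors $2\pi s$ and $\pi/s$ depend only on the query point, so they truly cancel uniformly in the pairwise cell inequalities — not just in individual bisectors — and the whole theorem thereby reduces to invoking two optimal planar Voronoi algorithms.
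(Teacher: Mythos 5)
Your proof is correct and follows essentially the same route as the paper: the positive query-dependent prefactors $2\pi s$ and $\pi/s$ cancel, so the primal diagram reduces to the 2D hyperbolic Voronoi diagram on the location-scale parameters and the dual diagram to the ordinary Euclidean Voronoi diagram, both computable in optimal $\Theta(n\log n)$ time by the cited algorithms. Your explicit observation that the cancellation holds uniformly in the full cell inequalities (not merely on the bisectors), together with the sorting lower bound, is slightly more careful than the paper's presentation but does not change the argument.
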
 

The construction of 2D Bregman Voronoi diagrams is described in~\cite{BVD-2010}.

\subsection{The  Cauchy Voronoi diagrams with respect to $\alpha$-divergences}

The dual bisectors with respect to the $\alpha$-divergences between any two parametric probability densities $p_{\lambda_1}(x)$ 
and $p_{\lambda_2}(x)$ are

\begin{eqnarray}
\mathrm{Bi}_{I_\alpha}(p_{\lambda_1}:p_{\lambda_2}) &=& \left\{ p_\lambda \ :\  
I_\alpha[p_{\lambda_1}:p_{\lambda}]= I_\alpha[p_{\lambda_2}:p_{\lambda}] \right\},\\
&=& \left\{ \lambda \ :\  C_\alpha(p_{\lambda_1};p_{\lambda})=C_\alpha(p_{\lambda_2};p_{\lambda})  \right\},
\end{eqnarray}
and
\begin{eqnarray}
\mathrm{Bi}_{I_\alpha}^*(p_{\lambda_1}:p_{\lambda_2}) &=& \left\{ p_\lambda \ :\  
I_\alpha[p_{\lambda}:p_{\lambda_1}]= I_\alpha[p_{\lambda}:p_{\lambda_2}] \right\},\\
&=& \mathrm{Bi}_{I_{1-\alpha}}(p_{\lambda_1}:p_{\lambda_2}).
\end{eqnarray}

It is an open problem to prove {\em when} the dual $\alpha$-bisectors coincide for the Cauchy family.
We have shown it is the case for the $\chi^2$-divergence and the KL divergence. 
In theory, the Risch semi-algorithm~\cite{risch1970solution} allows one to answer whether a definite integral has a closed-form formula or not. However, the Risch semi-algorithm is only a semi-algorithm as it requires to implement an oracle  to check whether some mathematical expressions are equivalent to zero or not.

\section{Conclusion}\label{sec:Concl}

\begin{table}
\centering

\begin{tabular}{ll}
 Formula & Voronoi \\ \hline
$D_{\chi^2}[p_{l_1,s_1},p_{l_2,s_2}]=\frac{(l_2-l_1)^2+(s_2-s_1)^2}{2s_1s_2}$ & $\Vor_{D_{\chi^2}}$ hyperbolic Voronoi\\
$\rho_\FR[p_{l_1,s_1},p_{l_2,s_2}]= \frac{1}{\sqrt{2}}\arccosh (1+D_{\chi^2}[p_{l_1,s_1},p_{l_2,s_2}])$ & $\Vor_{\rho_\FR}$ hyperbolic Voronoi\\
$D_{\KL}[p_{l_1,s_1},p_{l_2,s_2}]=\log\left(1+\frac{1}{2}D_{\chi^2}[p_{l_1,s_1},p_{l_2,s_2}]\right)$  & $\Vor_{D_{\KL}}$ hyperbolic Voronoi\\
$\rho_{\KL}[p_{l_1,s_1},p_{l_2,s_2}]=\sqrt{D_{\KL}[p_{l_1,s_1},p_{l_2,s_2}]}$ (metric) & $\Vor_{\rho_{\KL}}$ hyperbolic Voronoi\\
$D_{\flat}[p_{l_1,s_1},p_{l_2,s_2}]=2\pi s_2 D_{\chi^2}[p_{l_1,s_1},p_{l_2,s_2}]$ &  Bregman Voronoi:\\
 &  $\Vor_{D_{\flat}}$ hyperbolic Voronoi,  $\Vor_{D_{\flat}}^*$ Euclidean Voronoi.\\ \hline
\end{tabular}

\caption{Summary of the main closed-form formula for the statistical distances between Cauchy densities and their induced Voronoi diagrams.}
\label{tb:dist}
\end{table}

In this paper, we have considered the construction of Voronoi diagrams of finite sets of Cauchy distributions with respect to some common statistical distances. Since statistical distances can potentially be asymmetric, we defined the dual Voronoi diagrams with respect to the forward and reverse/dual statistical distances.
From the viewpoint of information geometry~\cite{IG-2016}, we have reported the construction of two types of geometry  on the Cauchy manifold: 
(1) The {\em invariant $\alpha$-geometry} equipped with the Fisher metric tensor $g_\FR$ and the skewness tensor $T$ from which we can build a family of  pairs of torsion-free affine connections coupled with the metric, and (2) a {\em dually flat geometry} induced by a Bregman generator defined by the free energy $F_q$ of the $q$-Gaussians (instantiated to $q=2$ when dealing with the Cauchy family).
The metric tensor of the latter geometry is called the $q$-Fisher information metric, and is a Riemannian conformal metric of the  Fisher information metric. 
We have shown that the Fisher-Rao distance amount to a scaled hyperbolic distance in the Poincar\'e upper plane model (Proposition~\ref{prop:FRCauchy}), and that all Amari's $\alpha$-geometries~\cite{IG-2016} coincide with the Fisher-Rao geometry since the cubic tensor vanishes, thus yielding a hyperbolic manifold of negative constant scalar curvature $\kappa=-2$ for the Cauchy $\alpha$-geometric manifolds.
We noticed that the Fisher-Rao distance and the KL divergence can be expressed as a strictly increasing function of the chi square divergence.
Then we explained how to conformally flatten the curved Fisher-Rao geometry to obtain a dually flat space where the flat divergence amounts to a canonical Bregman divergence  built from Tsallis' quadratic entropy (Theorem~\ref{thm:BDqgaussian}).
We reported the Hessian metrics of the dual potential functions of the dually flat space, and showed that there are other alternative choices for building Hessian structures~\cite{Matsuzoe-2014}. 
Table~\ref{tb:dist} summarizes the various closed-form formula of statistical dissimilarities obtained for the Cauchy family.
We proved that the square root of the KL divergence between any two Cauchy distributions  is a metric distance (Theorem~\ref{thm:sqrtKL}) in general, and more precisely a Hilbertian metric for the scale Cauchy families (Theorem~\ref{thm:sqrtKLscale}).
It follows that the Cauchy Voronoi diagram for the Fisher-Rao distance coincides with the Voronoi diagram with respect to the  KL divergence or the chi square divergence (Figure~\ref{fig:VDcomparisons}). We showed how to build this hyperbolic Cauchy diagram from an equivalent hyperbolic Voronoi diagram on the corresponding location-scale parameters (see also Appendix~\ref{sec:hpd}). 
Then we proved that the dual hyperbolic Cauchy Delaunay complex is Fisher orthogonal to the Fisher-Rao  hyperbolic Cauchy Voronoi diagram (Theorem~\ref{thm:CVDortho}). The dual Voronoi diagrams with respect to the dual flat divergences can be built from the corresponding dual Bregman-Tsallis divergences with the primal Voronoi diagram coinciding with the hyperbolic Voronoi diagram and the dual diagram coinciding with the ordinary Euclidean Voronoi diagram (Figure~\ref{fig:VDcomparisons}).

\begin{figure}
\centering
\centering
\begin{tabular}{cc}
\includegraphics[bb=0 0 512 512,width=0.45\columnwidth]{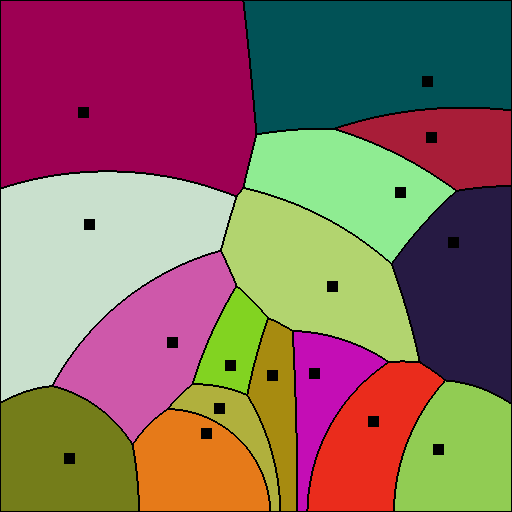} &
\includegraphics[bb=0 0 512 512,width=0.45\columnwidth]{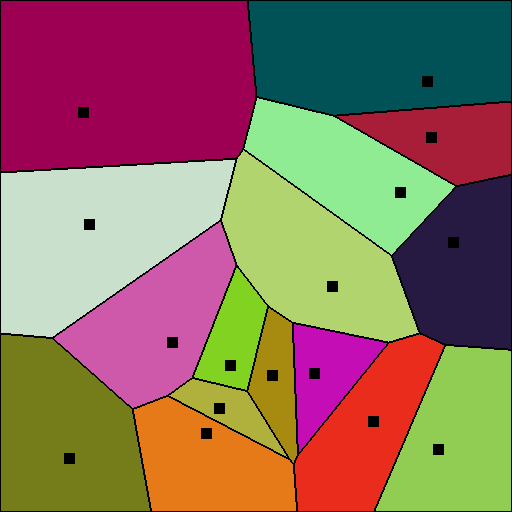} \\
$\Vor_{\rho_\FR}=\Vor_{\rho_\KL}=\Vor_{\rho_\chi^2}=\Vor_{D_\flat}$ & $\Vor_{D_\flat^*}=\Vor_{\rho_E}$.
\end{tabular}

\caption{Voronoi diagrams of a set of Cauchy distributions with respect to the Fisher-Rao (FR) distance $\rho_\FR$, the Kullback-Leibler (KL) divergence $D_\KL$, the $\chi^2$-divergence $D_{\chi^2}$, and the asymmetric Bregman-Tsallis flat divergence $D_\flat$.}
\label{fig:VDcomparisons}

\end{figure}




\bibliographystyle{plain}
\bibliography{KLCauchyBIBV3}

\appendix

\section{Klein hyperbolic Voronoi diagram from a clipped power diagram}\label{sec:hpd}

We concisely recall the efficient construction of the hyperbolic Voronoi diagram in the Klein disk model~\cite{HVD-2010}.
Let $\calP=\{p_1, \ldots, p_n\}$ be a set of $n$ points in the $d$-dimensional open unit ball domain $\bbD=\left\{x\in\bbR^d \ :\ \|x\|_2<1\right\}$, where $\|\cdot\|_2$ denotes the Euclidean $\ell_2$-norm.
The hyperbolic distance between two points $p$ and $q$ is expressed in the Klein model as follows:
\begin{equation}
\rho_{K}(p,q) \eqdef 
\operatorname{arccosh}\left( \frac{1-\langle p,q\rangle}{\sqrt{\left(1-\|p\|^{2}_2\right)\left(1-\|q\|^{2}_2\right)}} \right).
\end{equation}

It follows that the {\em Klein bisector} between any two points in the Klein disk is an hyperplane (affine equation) clipped to $\bbD$:
\begin{equation}\label{eq:bik}
\mathrm{Bi}_{\rho_{K}}({\lambda_1}:{\lambda_2}) = \left\{ \lambda\in\mathbb{D} \ :\ \lambda^\top \left(\sqrt{1-\|\lambda_1\|^{2}_2} \lambda_2-\sqrt{1-\|\lambda_2\|^{2}_2} \lambda_1\right) +\sqrt{1-\|\lambda_2\|^{2}_2}-\sqrt{1-\|\lambda_1\|^{2}_2}=0\right\} . 
\end{equation}

The Klein bisector is a hyperplane (i.e., line in 2D) restricted to the disk   domain $\bbD$.
A Voronoi diagram is said {\em affine}~\cite{boissonnat2006curved} when all  bisectors are hyperplanes.
It is known that {\em affine Voronoi diagrams} can be constructed from equivalent {\em power diagrams}~\cite{boissonnat2006curved}.
Thus the  Klein hyperbolic Voronoi diagram is equivalent to a {\em clipped power diagram}:
\begin{equation}
\Vor_{\rho_{K}}(\calP) = \Vor_{D_\PD}(\calS)\cap \bbD,
\end{equation}
 where 
\begin{equation}
D_\PD(\sigma,x)\eqdef\|x-c\|^2-w,
\end{equation}
 denotes the {\em power ``distance''} between 
 a point $x$ (and more generally a  {\em weighted point}~\cite{clippedPD-2005} when the weight can be negative) 
 to a sphere 
 $\sigma=(c,w)$, 
and $\calS=\{\sigma_1=(c_1,w_1),\ldots, \sigma_n=(c_n,w_n) \}$ is the equivalent set of weighted points.
The power distance is a {\em signed distance} since we have the following property: 
$D_\PD(\sigma,x)<0$ iff $x\in \mathrm{int}(\sigma)$, i.e., the point $x$ falls inside the sphere $\sigma=\{x \ :   \|x-c\|^2_2 = w \}$.
The {\em power bisector} is a hyperplane of equation
\begin{equation}\label{eq:bipd}
\mathrm{Bi}_\PD(\sigma_i,\sigma_j) = \left\{ x\in\bbR^d \ :\ 2x^\top (c_j-c_i)+w_i-w_j=0 \right\}
\end{equation}
Notice that by shifting all weights by a predefined constant $a$, we obtain the same power bisector since $(w_i+a)-(w_j+a)=w_i-w_j$ is  kept invariant.
Thus we may consider without loss of generality that all weights are non-negative, and that the weighted points correspond to spheres with non-negative radius $r_i^2=w_i$.

By identifying Eq.~\ref{eq:bik} with Eq.~\ref{eq:bipd}, we get the following equivalent spheres $\sigma_i=(c_i,w_i)$~\cite{HVD-2010} for the points in the Klein disk:
\begin{eqnarray}
c_i &=& \frac{{p}_{i}}{2 \sqrt{1-\left\|{p}_{i}\right\|^{2}}},\\
w_i &=& \frac{\left\|{p}_{i}\right\|^{2}_2}{4\left(1-\left\|{p}_{i}\right\|^{2}_2\right)}-\frac{1}{\sqrt{1-\left\|{p}_{i}\right\|^{2}_2}}.
\end{eqnarray}
We can then shift all weights by the constant $a=\min_{i\in\{1,\ldots, n\}} w_i$ so that $w_i'=w_i+a\geq 0$.

Thus the Klein hyperbolic Voronoi diagram is a {\em power diagram} clipped to the unit ball $\bbD$~\cite{nielsen1998grouping,clippedPD-2005,clippedVD-2013}.
In computational geometry~\cite{BY-1998}, the power diagram can be calculated from the intersection of $n$ halfspaces by lifting the spheres $\sigma_i$ to  corresponding halfspaces $H_i^+$ of $\bbR^{d+1}$ as follows:  
Let $\calF=\{ (x,z)\in \bbR^{d+1} \ :\ z\geq \sum_{i=1}^d x_i^2\}$ be the epigraph of the paraboloid function, and $\partial\calF$ denotes its boundary.
We lift a point $x\in\bbR^d$ to $\partial\calF$ using the upper arrow operator $x^\uparrow=(x,z=\sum_{i=1}^d x_i^2)$, and we project orthogonally a point $(x,z)$ of the potential function $\calF$ by dropping its last $z$-coordinate so that we have $\downarrow(x^\uparrow)=x$.
Now, when we lift a sphere $\sigma=(c,w)$ to $\calF$, the set of lifted points $\sigma^\uparrow$ all belong to a hyperplane $H_\sigma$, called the {\em polar hyperplane} of equation:
\begin{equation}
H_\sigma: z=2c^\top x-c^\top c+w.
\end{equation}
Let $H_\sigma^+$ denote the upper halfspace with bounding hyperplane $H_\sigma$: $H_\sigma^+: z \geq 2c^\top x-c^\top c+w$.
Then one can show~\cite{BY-1998} that $\Vor_{D_\PD}(\calS)$ is obtained as the vertical projection $\downarrow$ of the intersection of  all these  polar halfspaces $\calH_i$ with $\partial\calF$:
\begin{equation}
\Vor_{D_\PD}(\calS) = \downarrow\left( \left(\cap_{i=1}^n \calH_i^+\right) \cap \partial\calF  \right).
\end{equation}
Transforming back and forth non-vertical $(d+1)$-dimensional hyperplanes to corresponding $d$-dimensional spheres allows one to design  various efficient algorithms, e.g., computing the intersection or the union of spheres~\cite{BY-1998}, useful primitives for molecular chemistry~\cite{VorOkabe-2009}.

Let $H_\bbD^-$ denote the lower halfspace (containing the origin $(x=0,z=0)$) supported by the polar hyperplane associated to the boundary sphere of the disk domain $\bbD$.
Computing the clipped power diagram $\Vor_{D_\PD}(\calS)\cap\bbD$ can be done equivalently as follows:
\begin{eqnarray}
\Vor_{D_\PD}(\calS)\cap\bbD &=& \downarrow\left(\left( \left( \cap_{i=1}^n \calH_i^+  \right)  \cap \partial\calF \right) \cap H_\bbD^-\right),\\
&=& 
\downarrow\left(\left( \left( \cap_{i=1}^n \calH_i^+  \right)   \cap H_\bbD^- \right) \cap \partial\calF \right),\label{eq:intersection}
\end{eqnarray}
using the commutative property of the set intersection.

The advantage of the method of Eq.~\ref{eq:intersection} is that we begin to clip the power diagram using $H_\bbD^-$  before explicitly calculating it.
Indeed, we first compute the {\em intersection polytope} of $n+1$ hyperplanes 
$\calP_K \eqdef \left(\cap_{i=1}^n \calH_i^+\right) \cap H_\bbD^-$.
Then we project down orthogonally  the intersection of $\calP_K$ with $\partial\calF$ to get the clipped power diagram equivalent to the hyperbolic Klein Voronoi diagram:
\begin{equation}
\Vor_{\rho_{K}}(\calP) = \downarrow\left(\calP_K \cap \partial\calF\right).
\end{equation}

By doing so, we potentially reduce the algorithmic complexity by avoiding to compute some of the vertices of  $\calP_\PD \eqdef \left(\cap_{i=1}^n \calH_i^+\right)$  whose orthogonal projection fall outside the domain $\bbD$.

More generally, a Bregman Voronoi diagram~\cite{BVD-2010} can be calculated equivalently as a power diagram (and intersection of $d+1$-dimensional halfspaces) using an arbitrary smooth and strictly convex potential function $F$ instead of the the paraboloid potential function of Euclidean geometry~\cite{HVD-2010}.
The non-empty intersection of halfspaces can in turn be calculated as an equivalent {\em convex hull}~\cite{BY-1998}.
Thus we can compute in practice the hyperbolic Voronoi diagram in the Klein model using the Quickhull algorithm~\cite{barber1996quickhull}.

\section{Symbolic calculations with a computer algebra system}\label{sec:maxima}

We use the open source computer algebra system {\sc Maxima}\footnote{Can be freely downloaded at \url{http://maxima.sourceforge.net/}} to calculate the gradient (partial derivatives) and Hessian of the deformed log-normalizer,  and some definite integrals based on the Cauchy location-scale densities.

\begin{verbatim}
/* Written in Maxima */
assume(s>0);
CauchyStd(x) := (1/(%pi*(x**2+1)));
Cauchy(x,l,s) := (s/(%pi*((x-l)**2+s**2)));
/* check that we get a probability density (=1) */
integrate(Cauchy(x,l,s),x,-inf,inf);
/* calculate the the deformed log-normalizer */
logC(u):=1-(1/u);
logC(Cauchy(x,l,s));
ratsimp(%);
/* calculate partial derivatives of the deformed log-normalizer */
theta(l,s):=[2*%pi*l/s,-%pi/s];
F(theta):=(-%pi**2/theta[2])-(theta[1]**2/(4*theta[2]))-1;
derivative(F(theta),theta[1],1);
derivative(F(theta),theta[2],1);
/* calculated definite integrals */
assume(s1>0);
assume(s2>0);
integrate(Cauchy(x,l2,s2)**2,x,-inf,inf);
integrate(Cauchy(x,l2,s2)**2/Cauchy(x,l1,s1),x,-inf,inf);
\end{verbatim}

We calculate the function $\theta(\eta)$ by solving the following system of equations:
\begin{verbatim}
solve([-t1/(2*t2)=e1, (%pi/t2)**2+ (t1/t2)**2/4=e2],[t1, t2]);
\end{verbatim}

The Hessian metrics of the dual potential functions $F$ and $F^*$ (denoted by $G$ in the code) can be calculated as follows:
\begin{verbatim}
F(theta):=(-%pi**2/theta[2])-(theta[1]**2/(4*theta[2]))-1;
hessian(F(theta),[theta[1], theta[2]]);
G(eta):=1-2*%pi*sqrt(eta[2]-eta[1]**2);
hessian(G(eta),[eta[1], eta[2]]);
\end{verbatim}

The plot of the Fisher-Rao to the square root KL divergence can be plotted using the following commands:
\begin{verbatim}
t(u):=sqrt(log((1/2)+(1/2)*cosh(sqrt(2)*u)));
plot2d(t(u)/u,[u,0,10]);
\end{verbatim}

Symbolic calculations for the $\alpha$-Chernoff coefficient between two Cauchy distributions prove that
 the  $\alpha$-Chernoff coefficient is symmetric for $\alpha=3$ and $\alpha=4$ as exemplified by the {\sc Maxima} code below:

\begin{verbatim}
assume(s1>0);
assume(s2>0);
assume(s>0);
CauchyStd(x) := (1/(%pi*(x**2+1)));
Cauchy(x,l,s) := (s/(%pi*((x-l)**2+s**2)));
/* closed-form */
a: 3;
integrate((Cauchy(x,l2,s2)**a) * (Cauchy(x,l1,s1)**(1-a)),x,-inf,inf);
term1(l1,s1,l2,s2):=ratsimp(%);
integrate((Cauchy(x,l2,s2)**(1-a)) * (Cauchy(x,l1,s1)**(a)),x,-inf,inf);
term2(l1,s1,l2,s2):=ratsimp(%);
/* Is the a-divergence symmetric? */
term1(l1,s1,l2,s2)-term2(l1,s1,l2,s2);
ratsimp(%);
\end{verbatim}

\end{document}